	\title{Proportional Aggregation of Preferences for Sequential Decision Making}
	\author{Nikhil Chandak \\ Max Planck Institute for Intelligent Systems \\ Tübingen AI Center
	\and
	Shashwat Goel \\  Max Planck Institute for Intelligent Systems \\ ELLIS Institute T\"ubingen
	\and
	Dominik Peters \\ CNRS, LAMSADE, Universit\'e Paris Dauphine - PSL
	}
	\title{\vspace{-1cm}Proportional Aggregation of Preferences for \\ Sequential Decision Making}
	\author[1,2]{Nikhil Chandak}
	\author[1,3]{Shashwat Goel}
	\author[4]{Dominik Peters}
	\affil[1]{Max Planck Institute for Intelligent Systems}
	\affil[2]{Tübingen AI Center}
	\affil[3]{ELLIS Institute T\"ubingen} 
	\affil[4]{CNRS, LAMSADE, Universit\'e Paris Dauphine - PSL}
\newcommand{\xmark}{\tikz[scale=0.23,draw=red!50!black] { \draw[line width=0.7,line cap=round] (0,0) to [bend left=6] (1,1);
    \draw[line width=0.7,line cap=round] (0.2,0.95) to [bend right=3] (0.8,0.05);
}}
\newcommand{\cmark}{\tikz[scale=0.23,draw=green!50!black] {  \draw[line width=0.7,line cap=round] (0.25,0) to [bend left=10] (1,1);
    \draw[line width=0.8,line cap=round] (0,0.35) to [bend right=1] (0.23,0);
}}
\theoremstyle{plain}
\newtheorem{theorem}{Theorem}[section]
\newtheorem*{theorem*}{Theorem}
\newtheorem{proposition}[theorem]{Proposition}
\newtheorem{corollary}[theorem]{Corollary}
\theoremstyle{definition}
\newtheorem{definition}[theorem]{Definition}
\newtheorem{remark}[theorem]{Remark}
\Crefname{algocf}{Algorithm}{Algorithms}
\renewcommand*{\le}{\leqslant}
\renewcommand*{\leq}{\leqslant}
\renewcommand*{\ge}{\geqslant}
\renewcommand*{\geq}{\geqslant}
\renewcommand{\epsilon}{\varepsilon}
\newcommand{\alternativea}{\tikz[baseline=-3pt,transform shape,scale=0.85]{\node[text height=6pt,inner sep=1pt,fill=red!20,    circle,minimum width=12pt] (x) {$a$};}}
\newcommand{\alternativeb}{\tikz[baseline=-3pt,transform shape,scale=0.85]{\node[text height=6pt,inner sep=1pt,fill=blue!20,   circle,minimum width=12pt] (x) {$b$};}}
\newcommand{\alternativec}{\tikz[baseline=-3pt,transform shape,scale=0.85]{\node[text height=6pt,inner sep=1pt,fill=green!20,  circle,minimum width=12pt] (x) {$c$};}}
\newcommand{\alternatived}{\tikz[baseline=-3pt,transform shape,scale=0.85]{\node[text height=6pt,inner sep=1pt,fill=orange!20, circle,minimum width=12pt] (x) {$d$};}}
\newcommand{\alternativee}{\tikz[baseline=-3pt,transform shape,scale=0.85]{\node[text height=6pt,inner sep=1pt,fill=magenta!20,circle,minimum width=12pt] (x) {$e$};}}
\newcommand{\alternativef}{\tikz[baseline=-3pt,transform shape,scale=0.85]{\node[text height=6pt,inner sep=1pt,fill=cyan!20,   circle,minimum width=12pt,font=\small] (x) {$f$};}}
\newcommand{\alternativeg}{\tikz[baseline=-3pt,transform shape,scale=0.85]{\node[text height=6pt,inner sep=1pt,fill=brown!20,    circle,minimum width=12pt] (x) {$g$};}}
\newcommand{\alternativeh}{\tikz[baseline=-3pt,transform shape,scale=0.85]{\node[text height=6pt,inner sep=1pt,fill=violet!20,    circle,minimum width=12pt] (x) {$h$};}}
\tikzset{/tikz/notestyleraw/.append style={rounded corners=0pt,inner sep=0.6ex}}
\begin{document}

\iflatexml
\date{March 2025 $\cdot$ arXiv version v2}
\else
\date{\vspace{-5pt}\large\sffamily March 2025 $\cdot$ arXiv version v2\thanks{In this version, in addition to improving and updating the exposition, we have renamed our axioms compared to arXiv version v1 and the version published at AAAI 2024. We have renamed EJR $\to$ Weak EJR and Strong EJR $\to$ EJR, and analogously for PJR and JR, following requests from other authors working on this model.}}
\fi

\maketitle

\vspace{-25pt}
\begin{abstract}
    We study the problem of fair sequential decision making given voter preferences. In each round, a decision rule must choose a decision from a set of alternatives where each voter reports which of these alternatives they approve. Instead of going with the most popular choice in each round, we aim for proportional representation across rounds, using axioms inspired by the multi-winner voting literature. 
    The axioms require that every group of $\alpha\%$ of the voters that agrees in every round (i.e., approves a common alternative), must approve at least $\alpha\%$ of the decisions. A stronger version of the axioms requires that every group of $\alpha\%$ of the voters that agrees in a $\beta$ fraction of rounds must approve $\beta\cdot\alpha\%$ of the decisions. We show that three attractive voting rules satisfy axioms of this style. One of them (Sequential Phragm\'en) makes its decisions online, and the other two satisfy strengthened versions of the axioms but make decisions semi-online (Method of Equal Shares) or fully offline (Proportional Approval Voting). 
     We present empirical results for these rules based on synthetic data and U.S. political elections. We also run experiments using the moral machine dataset about ethical dilemmas: We train preference models on user responses from different countries and let the models cast votes. We find that aggregating these votes using our rules leads to a more equal utility distribution across demographics than making decisions using a single global preference model.
\end{abstract}

\iflatexml
\textbf{Note:} In this version, in addition to improving and updating the exposition, we have renamed our axioms compared to arXiv version v1 and the version published at AAAI 2024. We have renamed EJR $\to$ Weak EJR and Strong EJR $\to$ EJR, and analogously for PJR and JR, following requests from other authors working on this model

\else
\setcounter{tocdepth}{1} 
\renewcommand{\contentsname}{\vspace{-1.7em}}
\DeclareTOCStyleEntry[beforeskip=6pt]{section}{section}
\tableofcontents
\fi

\section{Introduction}
\label{sec:intro}

We consider the problem of making a sequence of independent decisions via voting.
In each round, we can choose one alternative from a set of several alternatives, based on voters who tell us which alternatives they support (or \emph{approve}). The set of voters stays the same across rounds, though the set of alternatives may change. The ``standard'' way of making such decisions is to take the alternative with the most supporters in each round. A problem with this method is that non-majority groups of voters may have very little influence on the outcomes. For example, if there is a fixed group of 51\% of the voters who all report the same opinion in every round, then 100\% of the decisions will be taken in accordance with the wishes of that group, with the other 49\% of voters ignored or at most acting as tie-breakers. In many contexts, this is undesirable due to fairness concerns. 

Following recent work that studies this model under the name ``perpetual voting'' \cite{lackner2020perpetual}, we define formal properties of voting rules that capture the intuition that a group of $\alpha\%$ of the voters should be able to control the decisions in $\alpha\%$ of the rounds. Inspired from work in multi-winner approval voting \citep{lacknerskowron}, we then define a number of voting rules that satisfy these properties. We believe that these rules have promising applications in a variety of domains. Here are some examples:
\begin{itemize}
	\item \emph{Hiring decisions}: Consider a department that hires new faculty each year, with existing faculty voting over the applicants. The department wishes to hire people representing its spectrum of research interests. For example, if 20\% of the department works on one topic and votes for candidate on that topic, then at least 1 such candidate should be hired every 5 years. 
	\item \emph{Virtual democracy}: In cases where a group of people need to make an extremely large number of decisions, we may wish to automate this process. An approach known as virtual democracy does this by initially learning voters' preferences over a space of potential alternatives (specified by feature vectors) for example based on pairwise comparisons. Then, each decision is made by letting the models vote on the decision maker's behalf by predicting their preferences. This approach has led to proof-of-concept systems that automate moral decisions faced by autonomous vehicles \citep{noothigattu2018voting}, kidney exchanges \citep{freedman2020adapting}, collective decision making directly from natural language preferences \citep{mohsin2021making} and allocation of food donations \citep{lee2019webuildai}. The approach behind these systems has recently been criticized as overweighting the opinions of majorities \citep{moralmachinetyranny}. We show preliminary evidence in our experiments that using proportional voting methods could avoid this issue.
	\item \emph{Combining the outputs of sequence models}: Modern large language models (LLMs) operate in rounds, in each round suggesting a probability distribution over tokens. We could view these distributions as preferences over tokens, and use proportional methods to mix several LLMs together, potentially allowing for fine-grained customization to user preferences, or to aid in drafting compromise documents. \citet{peters2024frontiers} discusses this use case in detail.
	\item \emph{Policy decisions of coalition governments}: In many countries, the government is formed by coalitions of several parties with different strengths (for example, in Germany 2021--25 it consists of 3 parties who received 26\%, 15\%, and 12\% of the vote). A coalition needs to agree on a program, consisting of decisions on many issues. Our methods could help to design a program where each party's preferences gets a fair representation in the outcome.
	\item \emph{The parliamentary approach to moral uncertainty}: An agent may have moral uncertainty, and be unsure which of several moral theories is right. \citet{bostrom2009} has suggested that an agent could decide on what actions to take given this uncertainty by imagining how the decision would be made by a \emph{parliament} in which each moral theory is represented according to the agent's credence in the theory. The idea is that the representatives in the imaginary parliament would deliberate, bargain, and vote, and thereby reach good compromise decisions \citep{newberry2021parliamentary}. While such bargaining is difficult to make precise \citep[but see][]{greaves2023bargaining}, proportional multi-issue voting rules could simulate this process and provide one way of making such decisions.
\end{itemize}

In many of these examples, the different issues are not completely separable, so our model may not be a precise fit. However, when using online rules, the agents can adjust their reported preferences to take some consistency properties into account.

Other applications that are mentioned in the literature are the elections to a committee, where only part of the committee is replaced in each election -- thereby inducing a temporal element \citep{elkind2024survey}, or a company that donates some resources to a charity each year and lets employees or customers voter over which charity to choose \citep{elkind2024verifying}, or meal selection by a group of friends \citep{elkind2024temporal}.

\paragraph{Our Results}

\begin{table*}
	\centering
	\newcommand{\weakejrremark}{\hspace{1pt}\raisebox{0pt}{\textsuperscript{+}}\hspace{-6pt}}
	\newcommand{\ejrremark}{\hspace{1pt}\raisebox{0pt}{\textsuperscript{\textdagger}}\hspace{-6pt}}
	\begin{tabular}{lcccc}
		\toprule
		Voting Rule & \:Weak PJR\: & PJR & \:\:Weak EJR\:\: & EJR \\
		\midrule
		Sequential Phragm\'en (Online) & \cmark & \cmark & \xmark\weakejrremark{} & \xmark\ejrremark{} \\
		Method of Equal Shares (Semi-online) & \cmark &  \xmark & \cmark& \xmark\ejrremark{} \\
		PAV (Offline) & \cmark & \cmark & \cmark & \cmark \\
		\bottomrule
	\end{tabular}
	\caption{Our results. For each setting (online, semi-online (where the rule only knows the total number of rounds in advance), offline (where the rule knows all the preferences in advance)), we list the best known rule with respect to proportionality axioms. $^{+}$ = knowing an online rule satisfying Weak EJR would resolve a major open problem in the theory of multi-winner voting. \textsuperscript{\textdagger} = no semi-online rule can satisfy EJR due to a result of \citet{elkind2024verifying}.}
	\label{tab:results}
\end{table*}

We build on the work of \citet{bulteau2021jr} who defined two axioms that guarantee group representation. We call them PJR (Proportional Justified Representation) and Weak PJR.
These axioms do not require the groups to be pre-defined. Instead, the guarantee applies to \emph{all} subsets of voters who support a common alternative in some of the rounds, and the ``weak'' versions only apply if they support a common alternative in \emph{all} of the rounds. \citet{bulteau2021jr} conjectured that no polynomial-time algorithm can find a Weak PJR outcome. We show that the picture is more positive: an attractive polynomial-time voting rule building on ideas of \citet{Phra94a} satisfies PJR (\Cref{sec:guarantees:phragmen}). This rule is fully online, making decisions round by round.

We then define axioms called Weak EJR (Extended Justified Representation), which provides better group representation guarantees than Weak PJR, and analogously EJR. 
We show that the existence of a fully online rule that satisfies Weak EJR would positively resolve a major open problem in multi-winner approval voting \citep{lacknerskowron}. 
Still, we can define a simple voting rule (a variant of the ``Method of Equal Shares'') that satisfies Weak EJR and that is ``semi-online'': it needs to know how many rounds there will be in total, but it can be given preferences in an online fashion (\Cref{sec:guarantees:mes}). However, it does not satisfy PJR or EJR. 
This suggests the question whether we can satisfy EJR in the offline setting, where all preferences are available in advance. In \Cref{sec:guarantees:pav}, we show that Proportional Approval Voting (PAV), an offline rule that maximizes a carefully chosen objective function proposed by \citet{Thie95a}, does satisfy all our axioms, including EJR. While the PAV optimization problem is NP-hard, we prove these results also hold for a polynomial-time local search variant of PAV. Finally, in \Cref{sec:hardness} we show that further strengthenings of the PJR/EJR axioms are not satisfiable, and that on inputs where a solution exists, online rules still must fail them. 

We close with several experimental results on synthetic (\Cref{sec:experiments:synthetic}) and real datasets (\Cref{sec:experiments:political}), on which we compare our rules to existing rules from the literature (which do not satisfy our axioms). 
In \Cref{sec:experiments:learning}, we study the \emph{virtual democracy} application based on the \emph{moral machine} dataset \citep{awad2018moral}. We learn separate preference models that predict the respondents' judgement for each country and use each country's model as a `voter'. We then sample new decision situations and query the models for their top choices among the alternatives in each situation, aggregating the responses using our voting rules to produce a decision sequence. We compare the decisions produced with training a single universal model on equal number of responses from all countries, and taking its top choices as the decisions. We find that when voters do not already have very similar preferences on the issues, the aggregation approach leads to much fairer outcomes than the decisions made by the combined model.

 \section{Related Work}
\label{sec:related}

\paragraph{Perpetual Voting}
Our work is closely related to perpetual voting \citep{lackner2020perpetual} which studies online rules, while we also consider offline rules. \citet{lackner2020perpetual} focused on \emph{individual} fairness properties, for example requiring that each voter approves at least 1 decision in every time interval of some bounded length. We are interested in guarantees for \emph{groups} of voters who agree with each other, such that larger groups receive stronger guarantees. Such guarantees have been considered by \citet{lacknermaly2023} and \citet{bulteau2021jr}; we compare them to ours in \Cref{sec:formulation}. \citet{kozachinskiy2024optimalbounds} study the achievable egalitarian welfare in perpetual voting.

\paragraph{Public Decision Making}
\citet{fairpublicdecisionmaking} study ``public decision making'' which is an offline model of several decisions, where a voter's total utility is the sum of the utilities obtained in each round. We study the special case where the utility values in each round are restricted to 0 and 1. \citet{fairpublicdecisionmaking} focus on fairness notions for individuals (not groups) derived from fair division. 
\citet{freeman2020proportionality}, \citet{skowron2022proportionalpublicdecisions}, and \citet{propinterdependentissues} consider the special case where there are exactly two alternatives (``yes/no'') in each round and utilities are 0/1, which is a special case of our setting where fairness properties are easier to obtain. 

\paragraph{Repeated Portioning}
In our model, we need to choose one alternative in each round. One could define a fractional version of this model, where in each round, we choose a \emph{mixture} of the available alternatives (formally, a probability distribution or lottery over $C_j$). In the single-round setting, finding a fair mixture has been considered by several papers \citep{BMS05a,FGM16a,ABM20a,BBPS21a}. For a multi-round setting with online decisions, this model is studied by \citet[Section 5]{banerjee2023proportionally} who consider a proportional fairness objective and propose methods with good regret bounds. 

\paragraph{Multi-winner Approval Voting} 
For the task of selecting a committee of exactly $k$ out of $m$ candidates given approval votes \citep{lacknerskowron}, proportionality has been intensely studied. Many axioms have been proposed that formalize the idea that $\alpha\%$ of the voters should be able to elect at least $\alpha\% \cdot k$ of the committee members \citep{ejr,pjr,PeSk20,ejr+}. Voting rules proposed long ago by \citet{Thie95a} and \citet{Phra94a} do very well with respect to these axioms \citep{Jans16a,brill2023phragmen}. A more recent rule proposed by \citet{PeSk20} known as the Method of Equal Shares combines some of the advantages of the Phragmén and Thiele rules. \citet{skowron2017proportional} discuss adapting multi-winner rules to obtain ``proportional rankings'' of alternatives, which can be seen as a sequential decision making problem. Many of the rules and axioms we consider in this paper are analogues or generalizations of proposals from multi-winner voting. 

\paragraph{Approval-based Apportionment}
\citet{approvalbasedapportionment} considered a model where we need to assign $k$ seats to political parties, based on voters who submit approval sets over parties. This can be seen as a kind of multi-winner election where candidates can be added to the committee several times (because a single party can receive multiple seats). Alternatively, it can be seen as a special case of our model with $k$ rounds, where the sets of alternatives and all voters' approval sets remain the same across all rounds. Being a special case of our model, some negative results of \citet{approvalbasedapportionment} carry over to our model, notably that Sequential Phragm\'en fails Weak EJR and that PAV is NP-hard to compute; we will mention these connections during our discussion.

\paragraph{Combinatorial Voting}
A classic literature on voting in combinatorial domains \citep{LaXi15a} studies the problem of making decisions on several issues. The main focus is on the representation of complex preferences, and the computational problem of finding good outcomes given such preferences, often in the sense of maximizing utilitarian or egalitarian welfare \citep{amanatidis2015multiple}. Some of this work considers (conditional) approval preferences \citep{barrot2016conditional,barrot2017manipulation}. \\

\noindent
Additional related work can be found in the recent survey by \citet{elkind2024survey}.

 \section{Problem Formulation}
\label{sec:formulation}

\subsection{Model}

Let $N = \{1, 2, \ldots, n\}$ be the set of \emph{voters}, with $|N|=n$. There is a set $R = \{1, 2, \dots, T \}$ of $T$ \emph{rounds}, where $T$ is the \emph{time horizon}. 
In each round $j\in R$, we are given a set of \emph{alternatives} $C_j$. 
Each voter $i \in N$ \emph{approves} some (usually non-empty) subset $\smash{A^i_j}$ of $C_j$. 
Thus, we have a sequence of $T$ decision rounds together with a sequence $C= (C_j)_{j \in R}$ of alternative sets and a collection $A = (A^i_j)_{i\in N, {j \in R}}$ of approval sets.
We let $(N, R, C, A)$ denote a \emph{decision instance}.

A \emph{decision sequence} $D = (d_1, \dots, d_T) \in C_1 \times \dots \times C_T$ specifies a single \emph{decision} $d_j \in C_j$ for each round $j \in R$. For a voter $i \in N$, we write $\smash{U^i_D = |\{ j \in R : d_j \in A^i_j \}|}$ for the number of decisions in $D$ that $i$ approves; we treat $U^i_D$ as $i$'s \emph{utility}. 
A \emph{decision rule} $f$ takes as input a decision instance and returns a decisions sequence $D = f(N, R, C, A)$. We call a decision rule \emph{semi-online} if its decision in round $j \in R$ only depends on the time horizon $T$ and on information up to round $j$, i.e., on $N$, $R$, $C_1, \dots, C_j$ and $\smash{A_1^i, \dots, A_j^i}$. A rule is \emph{online} if in addition, the decisions are independent of the time horizon $T$.

\subsection{Axioms}
\label{sec:axioms}

\begin{figure}[t]
	\centering
	\scalebox{0.95}{\begin{tikzpicture}[yscale=0.96,xscale=1.5]
			\node at (0,5) (strongEJR) {EJR};
			\node at (0,3) (EJR) {Weak EJR};
			
			\node at (2,3.9) (strongPJR) {PJR};
			\node at (2,2) (PJR) {Weak PJR};
			
			\node at (4,2.9) (strongJR) {JR};
			\node at (4,1) (JR) {Weak JR};
			
			\node[align=center] at (6,2.5) (price) {perpetual\\priceability};
			\node[align=center, inner sep=1pt] at (6,0.5) (plq) {perpetual \\ lower quota \\ for closed groups};
			\node[align=center, inner sep=3pt] at (6,-1.5) (lq) {lower quota \\ for closed groups};
			
			\draw[->] 
			(strongEJR) edge (EJR) 
			edge (strongPJR)
			(strongPJR) edge (strongJR)
			edge (PJR)
			(strongJR) edge (JR)
			(EJR)		edge (PJR)
			(price)		edge (plq)
			(plq)         edge (lq)
			(PJR)		edge[bend right=5] (lq)
			edge (JR);
			
			\fill[white] (3.5, 2.4) rectangle (4.5, 2.36); 
			
			\draw[->] (price) edge[bend right=5] (PJR);
	\end{tikzpicture}}
	\caption{Implications between axioms. Perpetual priceability and (perpetual) lower quota for closed groups are discussed in \Cref{app:moreaxioms}.}
	\label{fig:hasse}
\end{figure}

We now define several properties (or \emph{axioms}) of decision sequences that formalize the idea of proportional representation. These come in different strengths and \Cref{fig:hasse} shows implication relationships between them. We will focus on the four properties of (Weak) EJR and (Weak) PJR, shown in bold. For completeness, the figure also mentions two properties introduced by \citet{lacknermaly2023}
(perpetual priceability and lower quota compliance). We discuss these properties, as well as Pareto-efficiency, in \Cref{app:moreaxioms}.

Our first two axioms were introduced by \citet{bulteau2021jr}.\footnote{\label{footnote:pjr} We use different names for these axioms. \citet{bulteau2021jr} say ``all periods intersection PJR'' for what we call Weak PJR, and say ``some periods intersection PJR'' for what we call PJR.}
A group $S\subseteq N$ of voters \textit{agrees} in round $j \in R$ if there is an alternative $c \in C_j$ that all voters in $S$ approve, so $\bigcap_{i\in S} A^i_j \neq \emptyset$. Weak PJR (Weak Proportional Justified Representation, first defined for multi-winner voting by \citealp{pjr}) requires that a group of an $\alpha$ fraction of the voters, if it agrees in every round, must be ``happy'' in at least $\lfloor \alpha T \rfloor$ rounds, meaning that at least one member of $S$ approves the decision (but this member of $S$ can differ across rounds).

\begin{definition}[Weak PJR]
    \label{def:PJR}
    A decision sequence $D$ satisfies \emph{Weak PJR} if for every $\ell \in \mathbb N$ and every group of voters $S \subseteq N$ that agrees in every round and has size $|S| \geq \ell \cdot \frac{n}{T}$, there are at least $\ell$ rounds $j \in R$ in which the decision $d_j$ of $D$ is approved by some voter in $S$ (i.e., $d_j \in \bigcup_{i \in S} \smash{A_j^i}$).
\end{definition}

Weak PJR only provides guarantees for groups that agree in \emph{all} rounds. PJR also gives guarantees for groups that agree only in some of the rounds, though the groups need to be larger: a group that agrees in $k$ rounds deserves to be happy with $\ell$ decisions if the size of the group is at least $\ell \cdot \frac{n}{k}$ (as compared to $\ell \cdot \frac{n}{T}$ for groups that agree in all rounds).% 
\footnote{Strengthening this axiom to only require the group size to be at least $\ell \cdot \frac{n}{T}$ gives rise to an axiom that may be impossible to satisfy, see \Cref{cor:multi-winner-pjr} in the appendix.}
Note that PJR implies Weak PJR (take $k = T$).

\begin{definition}[PJR]
    \label{def:strongPJR}
    A decision sequence $D$ satisfies \emph{PJR} if for every $k \le T$, every $\ell \in \mathbb N$ and every group of voters $S \subseteq N$ that agrees in $k$ rounds and has size $|S| \geq \ell \cdot \frac{n}{k}$, there are at least $\ell$ rounds $j \in R$ in which the decision $d_j$ of $D$ is approved by some voter in $S$ (i.e., $d_j \in \bigcup_{i \in S} \smash{A_j^i}$).
\end{definition}

An equivalent way of stating this axiom is that an $\alpha$ fraction of the voters who agree in a $\beta$ fraction of the rounds need to be ``happy'' with at least an $\lfloor \alpha \cdot \beta \rfloor$ fraction of the decisions. For example, consider a group $S\subseteq N$ of voters with fixed size $\ell \cdot \frac{n}{T}$, and let us ask what PJR guarantees for this group. If $S$ agrees in all rounds, then it says that $S$ should be happy with $\ell$ decisions. If $S$ agrees in $T/2$ rounds, then it says that $S$ should be happy with $\smash{\lfloor \frac{\ell}{2}\rfloor}$ decisions. Thus, the guarantee provided to a group is proportional to the number of rounds in which the group agrees. Yet another way to think about this axiom is to pretend that only the rounds in which $S$ agrees exist, and then apply Weak PJR to that situation, while allowing the group to be satisfied also by decisions in rounds where it doesn't agree.

\begin{figure}
	\centering
	\begin{tabular}{lcccccccc}
		\toprule
		Round & 1 \& 2 & 3 \& 4 & 5 \& 6 & 7 \& 8 \\
		\midrule
		Voter 1 & $\{\alternativea, \alternativeb\}$ & $\{\alternativea, \alternativeb\}$ & $\{\alternativea, \alternativeb\}$ & $\{\alternativea, \alternativeb\}$ \\
		Voter 2 & $\{\alternativea, \alternativec\}$ & $\{\alternativea, \alternativec\}$ & $\{\alternativea, \alternativec\}$ & $\{\alternativea, \alternativec\}$ \\
		Voter 3 & $\{\alternatived\}$ & $\{\alternatived\}$ & $\{\alternatived\}$ & $\{\alternativee\}$ \\
		Voter 4 & $\{\alternatived\}$ & $\{\alternatived\}$ & $\{\alternatived\}$ & $\{\alternativef\}$ \\
		\bottomrule
	\end{tabular}
	\caption{Example illustrating our axioms.}
	\label{fig:axiom-example}
\end{figure}

To illustrate the axioms, consider the instance shown in \Cref{fig:axiom-example}, with $T = 8$ rounds and 4 voters. The group $S = \{1,2\}$ agrees in all rounds (always approving $\alternativea$). Thus, to satisfy Weak PJR, in at least $\ell = 4$ rounds the outcome needs to be either $\alternativea$, $\alternativeb$, or $\alternativec$ (because $|S| \ge \ell \cdot \frac48$). The group $S' = \{3, 4\}$ agrees in the first 6 rounds (approving $\alternatived$), so with $\ell' = 3$, because $|S'| \ge \ell' \cdot \frac{4}{6}$, PJR requires that in at least $3$ rounds, the outcome is either $\alternatived$, $\alternativee$, or $\alternativef$.

A weakness of Weak PJR and PJR is in how they define $S$ being ``happy'' with a decision (``at least one member of $S$ approves the decision''). This definition can be satisfied by a decision sequence that gives each member of $S$ a utility that is much lower than $\ell$ \citep[Sec.~4.2]{PeSk20}. In the example of \Cref{fig:axiom-example}, the decision sequence $(\alternativeb,\alternativeb,\alternativec,\alternativec,\alternatived,\alternatived,\alternativee,\alternativef)$ satisfies PJR, but the first two voters each only approve the decision in 2 rounds, instead of in 4 rounds. 
Following \citet{ejr}, we can fix this by defining the axioms EJR (Extended Justified Representation) and Weak EJR which require that at least one member $i$ of $S$ must approve at least $\ell$ of the decisions, i.e., must have utility $\smash{U^i_D} \ge \ell$. In the example, the decision sequence $(\alternatived,\alternatived,\alternatived,\alternatived,\alternativea,\alternativea,\alternativea,\alternativea)$ satisfies EJR.

\begin{definition}[Weak EJR]
    \label{def:EJR}
    A decision sequence $D$ satisfies \emph{Weak EJR} if for every $\ell \in \mathbb N$ and every group of voters $S \subseteq N$ that agrees in all rounds and that has size $|S| \geq \ell \cdot \frac{n}{T}$, there is a voter $i\in S$ who approves at least $\ell$ decisions in $D$, i.e., $\smash{U^i_D} \ge \ell$.
\end{definition}

\begin{definition}[EJR]
    \label{def:strongEJR}
    A decision sequence $D$ satisfies \emph{EJR} if for every $k \le T$, every $\ell \in \mathbb N$ and every group of voters $S \subseteq N$ that agrees in $k$ rounds and has size $|S| \geq \ell \cdot \frac{n}{k}$, there is a voter $i \in S$ who approves at least $\ell$ decisions in $D$, i.e., $\smash{U^i_D} \ge \ell$.
\end{definition}

In analogy to the multiwinner literature, we can also define the notions of JR and Weak JR, which are the special cases of the PJR/EJR notions with $\ell = 1$. These axioms are also discussed by \citet{bulteau2021jr}.

\begin{definition}[Weak JR]
	\label{def:JR}
    A decision sequence $D$ satisfies \emph{Weak JR} if for every $S \subseteq N$ with $|S| \geq \frac{n}{T}$ that agrees in all rounds, there is a voter $i\in S$ with $\smash{U^i_D} \ge 1$.
\end{definition}

\begin{definition}[JR]
	\label{def:strongJR}
    A decision sequence $D$ satisfies \emph{JR} if  for every $k \le T$ and every  $S \subseteq N$ with $|S| \geq \frac{n}{k}$ that agrees in $k$ rounds, there is a voter $i \in S$ with $\smash{U^i_D} \ge 1$.
\end{definition}

These notions do not guarantee \emph{proportional} representation since larger groups do not obtain higher guarantees, but they do guarantee some minimal representation.

\citet{elkind2024verifying} studied the computational complexity of verifying whether a given decision sequence satisfies a given axiom. They found that this problem is coNP-complete for each of the six axioms, except that verifying Weak JR is tractable for binary issues, i.e., when in each round there are only two alternatives.

A decision rule $f$ \emph{satisfies} one of these axioms if for all possible inputs, the decision sequence selected by $f$ satisfies it. Note that if $f$ is also online, this means that the proportionality guarantee thereby not only holds for the entire decision sequence, but also for every prefix of it.

\subsection{Methods}

We now define three decision rules that are natural analogues of rules that were first 
proposed for multi-winner elections: the Sequential Phragm\'en rule, the Method of Equal Shares, and Proportional Approval Voting.

\subsubsection{Sequential Phragm\'en}
\citet{Phra94a} proposed an approval-based voting method for electing members of the Swedish parliament. \citet{lacknermaly2023} adapted this rule to the context of perpetual voting, calling their adaptation ``perpetual Phragm\'en''. We follow their definition. The rule makes decisions round by round. Each decision provides a value (or \emph{load}) of 1, which is distributed to voters who approve the decision. In each round, the rule chooses an alternative such that no voter ends up with too much load, thereby prioritizing voters who do not yet agree with many prior decisions. Formally, each voter $i$ starts with load $x^i=0$. Sequential Phragm\'en chooses the alternative for which it can distribute a load of $1$ in a way that minimizes the maximum total load assigned to a voter: at each round $j \in R$, we compute the following value for each alternative $c \in C_j$: 
\[ 
s_c = \min_{S \subseteq \{ i \in N : c \in A^i_j \}}  \frac{\sum_{i \in S} x^i + 1}{|S|}.
\] 

\iflatexml
\begin{figure}
	\centering
	\scalebox{0.9}{\begin{tikzpicture}
			[xscale=0.55, yscale=0.42, font=\small,
			bracepath/.style={decorate, decoration = {brace, mirror, transform={yscale=1.1,yshift=-2pt}}},]
			\foreach 
			\voter/\load/\newload
			in {1/1/3,2/2/2,3/3/1,4/3/1,5/4.5/0,6/6/0,7/6/0}
			{
				\draw (\voter-1,0) rectangle (\voter,\load);
				\ifnum\newload>0
				\draw[blue!40, fill=blue!20] (\voter-1,\load) rectangle (\voter,\load+\newload);
				\fi
				\node at (\voter-0.5,0.4) {$x_\voter$};
			}
			
			\draw[out=90, in=180, ->] (-1.4,3) to (-0.3,4);
			\node[anchor=west] at (-1.8,2.5) {$s_c$};
			
			\draw[bracepath] (3.85,4.05) -- (0,4.05);
			\node[anchor=north] at (2,5.55) {$S$};
			
			\draw[white] (8,0) -- (8.8,0); 
	\end{tikzpicture}}
	\vspace{-15pt}
	\caption{Load distribution}
	\label{fig:load}
\end{figure}
\else
\begin{wrapstuff}[r,type=figure,width=5cm]
	\centering
	\scalebox{0.9}{\begin{tikzpicture}
			[xscale=0.55, yscale=0.42, font=\small,
			bracepath/.style={decorate, decoration = {brace, mirror, transform={yscale=1.1,yshift=-2pt}}},]
			\foreach 
			\voter/\load/\newload
			in {1/1/3,2/2/2,3/3/1,4/3/1,5/4.5/0,6/6/0,7/6/0}
			{
				\draw (\voter-1,0) rectangle (\voter,\load);
				\ifnum\newload>0
				\draw[blue!40, fill=blue!20] (\voter-1,\load) rectangle (\voter,\load+\newload);
				\fi
				\node at (\voter-0.5,0.4) {$x_\voter$};
			}
			
			\draw[out=90, in=180, ->] (-1.4,3) to (-0.3,4);
			\node[anchor=west] at (-1.8,2.5) {$s_c$};
			
			\draw[bracepath] (3.85,4.05) -- (0,4.05);
			\node[anchor=north] at (2,5.55) {$S$};
			
			\draw[white] (8,0) -- (8.8,0); 
	\end{tikzpicture}}
	\vspace{-15pt}
	\caption{Load distribution}
	\label{fig:load}
\end{wrapstuff}
\fi
This value can be understood using a ``water filling'' analogy, as shown in \Cref{fig:load}, where we consider an alternative $c$ approved by 7 voters, and show their current loads as bars. We then fill 1 unit of water on top of the approving voters' loads. Note that the water never falls on top of the loads of voters $5$, $6$, and $7$ because their load is already quite high; in other words, this process has only assigned load to the set $S = \{1,2,3,4\}$. Then $s_c$ is the ``water line'', which is the load of each voter in $S$ after the load of $c$ has been assigned.\footnotemark

\footnotetext{In the multi-winner setting, we can fix $S$ in the definition of $s_c$ to be the set of \emph{all} voters approving $c$, because it cannot happen that an approver of $c$ already has more load than $s_c$, for in that case the rule would have chosen $c$ in an earlier iteration \cite[Lemma 4.5]{brill2023phragmen}. This is not true in our setting because $c$ may not have existed in a prior round. Other definitions of Phragm\'en's method based on virtual bank accounts \cite{Jans16a,PeSk20} do not easily adapt to our setting because of this phenomenon.}

The decision for round $j \in R$ is the alternative $c$ that minimizes $s_c$, breaking ties arbitrarily. After making the decision, we update loads by setting $x^i = s_c$ for each voter $i \in S$ (where $S$ is the coalition attaining the minimum in the definition of $s_c$) and leave $x^i$ unchanged for voters not in $S$. 
Note that the load of a voter will never decrease.

Clearly, Sequential Phragm\'en is an online rule. \citet{lacknermaly2023} illustrate how the rule works using a detailed example and show that it can be computed in polynomial time.

\subsubsection{Method of Equal Shares (MES)}

MES is a recently introduced rule for multi-winner voting \citep{PeSk20} and also used in practice for participatory budgeting \citep{peters2021proportional}. It can be adapted to our setting in a semi-online fashion: the rule needs to know the total number of rounds $T$ in advance, but does not need to know voter preferences of future rounds. Each decision costs $p=\frac{n}{T}$ units, which must be paid by voters that approve the chosen alternative. MES works by subtracting this amount from an initial budget $b_i = 1$ assigned to each voter $i$.

For $\rho \ge 0$, an alternative $c \in C_j$ is called $\rho$-\emph{affordable} if 
\[
\textstyle
\sum_{i \in N : c \in A^i_j} \min(b_i, \rho) \geq p,
\] 
so approvers of $c$ can pay $p$ with no one paying more than $\rho$. 
In each round, the alternative $d_j \in C_j$ that is $\rho$-affordable for minimum $\rho$ is chosen, breaking ties arbitrarily. Then for each voter $i$ approving $d_j$, the remaining budget $b_i$ of $i$ is set to $\max(0, b_i - \rho)$.

If in some round, no alternative is affordable for any $\rho$, 
MES \emph{terminates prematurely}. Decisions for the remaining rounds can be made arbitrarily (as far as our axioms are concerned), but in practice are done using an appropriate \emph{completion rule} such as Sequential Phragm\'en. In our experiments, we followed the ``$\epsilon$-completion'' strategy introduced by \citet[Sec.~3.4]{peters2021proportional} which approximates the concept of choosing the lowest affordable alternative, retaining the core idea of MES.

It is also possible to run MES in an offline fashion (``Offline MES''), by letting it choose at each step in which round it would like to make a decision (namely, the round where some alternative is $\rho$-affordable for minimal $\rho$, across all alternatives in all rounds where no decision has been taken yet). All our proofs and counterexamples in \Cref{sec:guarantees:mes} below go through for Offline MES, so it also satisfies Weak EJR but fails PJR. This also follows from the results of \citet{masavrik2023group}, who study Offline MES in a more general setting.

\enlargethispage{10pt}
\subsubsection{Proportional Approval Voting (PAV)}
PAV (based on a rule of \citealp{Thie95a}) is an offline rule that selects the decision sequence $D \in C_1 \times \dots \times C_T$ that maximizes
\[
\operatorname{PAV-score}(D) = \smash{\sum_{i\in N}} \:\, 1 + \frac12 + \frac13 + \dots + \frac1{U^i_D},
\] 
where we recall that $U^i_D$ is the number of rounds in which voter $i$ approves the decision of $D$. This harmonic objective function is the unique additive objective that leads to a proportional rule \citep{ejr}.
Finding the optimum decision sequence for PAV is NP-hard \citep[Thm.~5.1]{approvalbasedapportionment}, just like in the multi-winner setting \citep{aziz2015computational}. 

However, one can define a polynomial-time local search variant of PAV, that in the multi-winner setting has been found to satisfy the same proportionality axioms as PAV \citep{aziz2018complexity}.\footnote{One could use a sequential version of PAV to get an online rule, which \citet{page2020electingexecutive} conjectured to satisfy at least Weak JR (a weakening of Weak PJR). However, a counterexample from multiwinner voting \citep[Table 2]{pjr} can be adapted to show that Sequential PAV fails Weak PJR in our model (repeat the profile from their paper for $k = 6$ rounds).
}
The variant starts with an arbitrary decision sequence  and keeps changing the decision in some round if this increases the $\operatorname{PAV-score}$ of the decision sequence by at least $n/T^2$.%
\footnote{This threshold mirrors the $n/k^2$ threshold suggested by \citep{aziz2018complexity} for multi-winner voting, but see \citet[Remark 4.2]{peters2025core} for a situation where a somewhat smaller threshold is necessary.}
The process terminates if there is no possible change that leads to a sufficient increase. 

\begin{definition}[Local-Search PAV]
	Given an instance $(N, R, A, C)$ with $n$ voters and time horizon $T$, a decision sequence $D$ is a possible output of \emph{Local-Search PAV} if there exists some initial decision sequence $D_{\text{init}}$ such that $D$ can be returned by the following algorithm:
	
	\iflatexml
	\begin{algorithm*}[H]
		$D \gets D_{\text{init}}$\;
		\While{there is a round $j \in R$ and an alternative $a_j \in C_j$ such that $\operatorname{PAV-score}\left(D \setminus \{d_j\} \cup\left\{a_j\right\}\right) \geq \operatorname{PAV-score}(D) + \frac{n}{T^2}$   }
		{   
			$D \gets D \setminus \{d_j\} \cup\left\{a_j\right\} $\\
		}
		\Return $D$
	\end{algorithm*}
	\else
	\smallskip
	\begin{tabular}{l|l}
		\hspace{-15pt}
		&
		\hspace{-10pt}
		\makeatletter
		\let\@latex@error\@gobble
		\makeatother
		\begin{algorithm*}[H]
			\vspace{-2pt}$D \gets D_{\text{init}}$\;
			\While{there is a round $j \in R$ and an alternative $a_j \in C_j$ such that $\operatorname{PAV-score}\left(D \setminus \{d_j\} \cup\left\{a_j\right\}\right) \geq \operatorname{PAV-score}(D) + \frac{n}{T^2}$   }
			{   
				$D \gets D \setminus \{d_j\} \cup\left\{a_j\right\} $\\
			}
			\Return $D$
		\end{algorithm*}
	\end{tabular}
	\fi
\end{definition}

\noindent
The threshold of $n/T^2$ is large enough to ensure termination in polynomial time.

\begin{proposition}[\citealp{aziz2018complexity}]
	\label{prop:LSPAVruntime}
	Local-Search PAV terminates in polynomial time.
\end{proposition}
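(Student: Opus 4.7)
The plan is to bound the total number of while-loop iterations by observing that the $\operatorname{PAV-score}$ is bounded above, strictly increases each iteration, and increases by at least $n/T^2$ per iteration. The bound will be obtained from the harmonic-number structure of the objective. I would then argue that each individual iteration only requires polynomial work, which gives an overall polynomial time bound.

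First, I would establish an upper bound on $\operatorname{PAV-score}(D)$ for any decision sequence $D$. Since $U^i_D \le T$ for every voter $i$, each voter contributes at most the $T$-th harmonic number $H_T = 1 + \frac{1}{2} + \dots + \frac{1}{T} \le 1 + \ln T$ to the sum, so $\operatorname{PAV-score}(D) \le n(1 + \ln T)$ for every $D$. Since PAV-scores are always nonnegative, the total increase over the run of the algorithm is at most $n(1 + \ln T)$.

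Next, by the loop condition each iteration strictly raises $\operatorname{PAV-score}(D)$ by at least $n/T^2$. Combining this with the upper bound, the number of iterations is at most
\[
\frac{n(1+\ln T)}{n/T^2} = T^2(1 + \ln T),
\]
which is polynomial in the input size. Finally, a single iteration can be executed in polynomial time: we enumerate all rounds $j \in R$ and all alternatives $a_j \in C_j$ (at most $T \cdot \max_j |C_j|$ many pairs), and for each such pair recompute $\operatorname{PAV-score}$ on the modified sequence by updating each voter's utility in round $j$ and the corresponding harmonic contribution. This gives polynomial per-iteration work, and together with the iteration bound establishes the proposition.

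The argument has no real obstacle; the only point to be careful about is matching the threshold $n/T^2$ to the harmonic upper bound so that the ratio is polynomial. This is precisely why the threshold is chosen as $n/T^2$ rather than anything smaller (cf.\ the analogous $n/k^2$ choice in \citet{aziz2018complexity} for multi-winner voting), and it mirrors exactly the termination analysis there, transposed from committee size $k$ to time horizon $T$.
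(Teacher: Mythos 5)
Your proposal is correct and follows essentially the same argument as the paper: bound the maximum PAV-score by $n H_T = O(n \ln T)$, note each swap gains at least $n/T^2$, conclude there are $O(T^2 \ln T)$ iterations, and observe each iteration is polynomial by enumerating all round–alternative pairs. No substantive differences.
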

\begin{proof}
	Note that a single improving swap can be found and executed in polynomial time, by iterating through all rounds $j \in R$ and all alternatives $a_j \in C_j$. Now, let us assess how many improvements the local search algorithm may perform. Each improvement increases the total PAV-score of the decision sequence by at least $n/T^2$. The maximum possible PAV-score of a length-$T$ decision sequence is $n \cdot(1+1 / 2+\cdots+1 / T)=O(n \ln T)$. Thus, there can be at most $O( \frac{n \ln T}{n/T^2}  ) = O(T^2 \ln T)$ improving swaps.
\end{proof}

 \section{Satisfying the Axioms}
\label{sec:guarantees}

In this section, we establish which of our three rules satisfy which of our four axioms (see \Cref{tab:results}).
\citet{bulteau2021jr} conjectured that no polynomial-time rule achieves Weak PJR (they gave existence proofs which were based on exponential-time algorithms). Our results show that, in fact, Weak PJR as well as the stronger axioms can be satisfied in polynomial time, by attractive rules, some of which even work online. 

\subsection{Online: Sequential Phragm\'en}
\label{sec:guarantees:phragmen}

We begin by analyzing the sequential Phragm\'en rule. 
\begin{theorem}
    Sequential Phragm\'en satisfies PJR.
\end{theorem}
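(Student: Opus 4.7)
The plan is to argue by contradiction, following the classical PJR proof for Phragmén in the multi-winner setting but patching the step that breaks due to our round-dependent alternatives. Suppose the output $D$ violates PJR, so there exist $k \le T$, $\ell \in \mathbb N$, and a group $S$ with $|S| \ge \ell n/k$ that agrees in some set $R_S$ of $k$ rounds (with common alternatives $c^*_j$ for $j \in R_S$), while the number $m$ of decisions in $D$ approved by some voter of $S$ satisfies $m \le \ell - 1$. Two easy bounds set things up. First, since each round distributes one unit of load in total and a voter $i \in S$ can only receive load in rounds where $d_j$ is approved by some voter of $S$, the total load placed on $S$ satisfies $\sum_{i \in S} x^i \le m$ throughout the run. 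Second, for every $j \in R_S$, substituting $S$ itself as the subset in the minimum defining $s_{c^*_j}$ yields
\[ s_{c^*_j} \;\le\; \frac{\sum_{i \in S} x^i + 1}{|S|} \;\le\; \frac{m+1}{|S|} \;\le\; \frac{\ell}{|S|}, \]
and since Phragmén picks the alternative with minimum $s$-value, $s_{d_j} \le \ell/|S|$ for every $j \in R_S$.

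The main obstacle is that we cannot directly imitate the multi-winner argument and bound every voter's final load by $\ell/|S|$: in rounds outside $R_S$, the alternatives are unrelated to $S$, and Phragmén may raise loads arbitrarily high there. I would instead bound only the ``$R_S$-load'' of each voter, that is, the total load a voter receives during rounds of $R_S$. The key lemma is: for every voter $i \in N \setminus S$, the $R_S$-load of $i$ is at most $\ell/|S|$. To prove it, let $j^*$ be the last round in $R_S$ in which $i$ lies in the chosen coalition (if no such round exists, the $R_S$-load is $0$ and we are done). Right after round $j^*$, Phragmén sets $x^i = s_{d_{j^*}} \le \ell/|S|$, and this value is exactly the sum of all load increments $i$ has received up to that moment, of which the $R_S$-load is a partial sum; hence the $R_S$-load is at most $\ell/|S|$.

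The contradiction then comes from double-counting the total $R_S$-load received by $N \setminus S$. Summing the key lemma over $N \setminus S$, and using $|N \setminus S| \le n - \ell n/k$ together with $\ell/|S| \le k/n$, gives an upper bound of $(n - \ell n/k)(\ell/|S|) \le k - \ell$. On the other hand, the $k$ rounds in $R_S$ together inject $k$ units of load, at most $m$ of which are absorbed by $S$ by the first bound above, so $N \setminus S$ receives at least $k - m \ge k - \ell + 1$ units of $R_S$-load. The inequality $k - \ell + 1 \le k - \ell$ is absurd, completing the proof.
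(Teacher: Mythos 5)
Your proof is correct and follows essentially the same argument as the paper: the same bound $s_{d_j}\le \ell/|S|$ on rounds where $S$ agrees, the same trick of looking at the last such round in which a voter outside $S$ receives load, and the same final double count yielding $k-\ell+1 \le k-\ell$. The only (immaterial) difference is bookkeeping: the paper restricts attention to ``bad rounds'' (where no member of $S$ approves the decision, so all load goes to $N\setminus S$), whereas you count the load over all of $R_S$ and subtract the at most $\ell-1$ units absorbed by $S$.
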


\begin{proof}
For a contradiction, suppose that $S \subseteq N$ witnesses a violation of PJR: it agrees in rounds $R^* = \{j_1, \dots, j_k\} \subseteq R$ and has size $|S| \ge \ell \cdot \frac{n}{k}$, but there are fewer than $\ell$ rounds in which at least one member of $S$ approves the decision.

First, we claim that if $d_j \in C_j$ is the alternative chosen in some round $j \in R^*$, then $s_{d_j} \leq \frac{\ell }{|S|}$. Note that the total load $\sum_{i \in S} x^i$ assigned to members of $S$ is at most $\ell - 1$ because each decision incurs a load of 1. Since $j\in R^*$, there is an alternative $c' \in C_j$ that everyone in $S$ approves. Hence, 
\[
s_{c'} \leq \frac{1 + \sum_{i \in S}x^i}{|S|} \leq \frac{1 + (\ell-1)}{|S|} = \frac{\ell}{|S|},
\]
where the first inequality follows from the definition of $s_{c'}$ as a minimum. Since Sequential Phragm\'en selected $d_j$, we have $s_{d_j} \le s_{c'}$, showing our claim.

Call a round $j$ a \emph{bad round} if $j \in R^*$ and the decision $d_j$ is not approved by any voter in $S$.
Fix some voter $i \in N \setminus S$, and suppose $i$ gets assigned some load during at least one bad round. Consider the point just after the last bad round $j$ where $i$ is assigned some load. 
At this point we must have $x^i \le \smash{\frac{\ell }{|S|}}$ since otherwise $s_{d_j} > \smash{\frac{\ell }{|S|}}$, contradicting our claim. 
Thus, at most $\smash{\frac{\ell}{|S|}}$ load was assigned to $i$ during all bad rounds together.
Clearly, this last claim is also true for voters $i \in N \setminus S$ who do not get assigned any load during any bad round.

In a bad round, load is only assigned to voters outside $S$ (since the round is bad). Thus, by summing over all $i \in N\setminus S$, we see that the total load assigned in bad rounds is at most
\[
\textstyle
|N \setminus S| \cdot  \frac{\ell}{|S|}  = \frac{|N|}{|S|}\cdot \ell - \frac{|S|}{|S|}\cdot \ell \leq \frac{k}{\ell}\ell - \ell = k - \ell.
\]
However, there are at least $k - (\ell - 1)$ bad rounds, so a total load of at least $k - \ell + 1$ is distributed across bad rounds, a contradiction.
\end{proof}

In multi-winner voting, Sequential Phragm\'en is well-known to fail EJR \citep{brill2023phragmen}, so it is unsurprising that it also fails (Weak) EJR in our setting. While it is possible to adapt the example of \citet{brill2023phragmen} to our setting, we use a somewhat simpler example.\footnote{This example can also be used to show that Sequential Phragm\'en fails EJR for the related settings of approval-based apportionment and of multi-winner voting. We thus add a different EJR counterexample to the literature, which may turn out useful in future work.}

\begin{theorem}
    Sequential Phragm\'en fails Weak EJR.
\label{thm:phragmen_fails_EJR}
\end{theorem}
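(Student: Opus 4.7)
The approach is to exhibit a small concrete decision instance $(N,R,C,A)$ together with a group $S\subseteq N$ that agrees in every round and has size $|S|\ge \ell\cdot n/T$, yet on which Sequential Phragmén produces a decision sequence where every voter in $S$ approves strictly fewer than $\ell$ of the $T$ chosen decisions. The central idea is to exploit Phragmén's load-balancing rule: although $S$ is cohesive on some common alternative in each round, one can design the instance so that in each round a \emph{different} alternative, approved by only a strict subset of $S$ together with a sufficiently large coalition of outsiders, achieves a strictly smaller $s$-value than the common alternative, and is therefore selected. Because different members of $S$ are the internal supporters of different winning decisions, $S$'s total satisfaction is ``spread thin'' over its members instead of being concentrated on a single voter.

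Concretely, I would take $S=\{1,2\}$ and, in each round $j$, include a common alternative $a_j$ approved by both $1$ and $2$ (witnessing that $S$ agrees in round $j$) plus one or more side alternatives $b_j, c_j, \dots$, each approved by exactly one member of $S$ together with an outsider coalition. The parameters would be chosen so that for an appropriate $\ell$, Phragmén picks a side alternative in every round, but the side alternatives are divided into groups that alternately lie in $A^1_j$ and $A^2_j$, so that neither voter ever collects more than $\ell-1$ approvals. With $n$ set just large enough to make $|S|\ge \ell\cdot n/T$ tight, one then checks directly from the Phragmén $s$-formula that the selected side alternative has strictly smaller score than $a_j$ in each round.

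The main technical obstacle is the load accounting over the course of the sequence: once the outsider voters supporting the side alternatives have absorbed enough load, their $s$-value rises, and Phragmén may revert to selecting the common alternative $a_j$, which would simultaneously reward \emph{both} members of $S$ and could push one of them to $\ell$ approvals. To prevent this I would either (i) rotate the outsider coalitions across rounds so that no single outsider's load climbs too quickly, or (ii) keep $T$ small enough compared to the total outsider mass that the bad switching never occurs within the constructed window. Finally I would verify three things explicitly: that $S$ agrees in all rounds (via $a_j\in A^1_j\cap A^2_j$), that the Phragmén tie-breaking (designed to be unambiguous by small perturbations of coalition sizes) really selects the intended side alternative each round, and that each voter in $S$ ends up with utility at most $\ell-1$, while $|S|\ge \ell\cdot n/T$. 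This would contradict Weak EJR and establish the theorem.
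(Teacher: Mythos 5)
Your overall strategy---exhibit a concrete instance where Phragm\'en's load balancing spreads the satisfaction of a cohesive group $S$ across different members so that no single member reaches $\ell$---is exactly the strategy of the paper's proof. But what you have written is a plan, not a proof: no instance is exhibited, no run of the rule is verified, and the difficulty you yourself identify as ``the main technical obstacle'' (outsiders absorbing load until the common alternative $a_j$ becomes the Phragm\'en winner, satisfying both members of $S$ at once) is left unresolved. Worse, your specific parameter choice $|S|=2$ makes that obstacle essentially fatal. Since the paper proves that Sequential Phragm\'en satisfies Weak PJR, at least $\ell$ of the winners must be approved by some member of $S$; if each winner is approved by at most one of the two members, then $U^1_D+U^2_D\ge\ell$ while failing Weak EJR requires both to be at most $\ell-1$, and the condition $2=|S|\ge\ell\cdot n/T$ forces $n\le 2T/\ell$. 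For the side alternatives to strictly beat $a_j$ (approved by two low-load voters, so $s_{a_j}\le 1/2$ early on) they need at least three low-load supporters each round, which consumes outsiders at a rate the bound $n\le 2T/\ell$ cannot sustain; once outsiders are reused their loads climb and $a_j$ wins (or, just as bad, everything ties, and ``small perturbations of coalition sizes'' are unavailable because the counting is already tight). So the construction as described does not go through.

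The paper avoids all of this by making the witnessing coalition large rather than small. It uses $n=T=10$ with the \emph{same} profile in every round: voters $1$--$3$ approve $\{a,b\}$, voters $4$--$7$ approve $\{a,c\}$, voters $8$--$9$ approve $\{b,c\}$, voter $10$ approves $\{b\}$. One checks (uniquely, with no ties) that Phragm\'en alternates $a,b,a,b,\dots$. The coalition $S=\{4,\dots,9\}$ agrees on $c$ in every round and has $|S|=6\ge 6\cdot\frac{10}{10}$, so Weak EJR demands some member with utility at least $\ell=6$; but voters $4$--$7$ approve only the $a$-rounds and voters $8$--$9$ only the $b$-rounds, so every member of $S$ has utility exactly $5$. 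The repetition of a single profile makes the run periodic and trivially verifiable, the overlapping approval sets ($a$, $b$, $c$ pairwise sharing supporters) eliminate ties, and taking $\ell=|S|=6$ with $T=10$ leaves genuine slack between the PJR guarantee (which is met) and the EJR demand (which is not). If you want to salvage your write-up, replacing the $|S|=2$ skeleton with an instance of this shape---and then actually tracing the Phragm\'en loads round by round---is the missing content.
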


\begin{proof}
	\iflatexml
	\begin{figure}
		\centering
		\scalebox{0.94}
		{\setlength{\tabcolsep}{4pt}
			\begin{tabular}{lccc}
				\toprule
				Rounds & 1 \ - \ 10 \\
				\midrule
				Voters 1, 2, 3 & $\{\alternativea, \alternativeb\}$ \\
				Voters 4, 5, 6, 7  & $\{\alternativea, \alternativec\}$  \\
				Voters 8, 9  & $\{\alternativeb, \alternativec\}$ \\
				Voter 10 & $\{\alternativeb\}$  \\
				\bottomrule
		\end{tabular}}
		\caption{Instance where Sequential Phragm\'en fails Weak EJR. The approval sets are the same in all rounds.}
		\label{fig:phragmen_fails_EJR_simpler_example}
	\end{figure}
	\else
	\begin{wrapstuff}[r,type=figure,width=6.3cm]
		\centering
		\scalebox{0.94}
		{\setlength{\tabcolsep}{4pt}
			\begin{tabular}{lccc}
				\toprule
				Rounds & 1 \ - \ 10 \\
				\midrule
				Voters 1, 2, 3 & $\{\alternativea, \alternativeb\}$ \\
				Voters 4, 5, 6, 7  & $\{\alternativea, \alternativec\}$  \\
				Voters 8, 9  & $\{\alternativeb, \alternativec\}$ \\
				Voter 10 & $\{\alternativeb\}$  \\
				\bottomrule
		\end{tabular}}
		\caption{Instance where Sequential Phragm\'en fails Weak EJR. The approval sets are the same in all rounds.}
		\label{fig:phragmen_fails_EJR_simpler_example}
	\end{wrapstuff}
	\fi
	Consider the instance shown in \Cref{fig:phragmen_fails_EJR_simpler_example}, with $T = 10$ rounds and $n = 10$ voters. Here, the same profile is repeated for all the rounds, so it is also an example where Sequential Phragm\'en fails EJR in the setting of approval-based apportionment \citep{approvalbasedapportionment}. In this example, Sequential Phragm\'en alternates between two alternatives as winner in each round and produces the decision sequence $D = (\alternativea, \alternativeb, \alternativea,\alternativeb, \alternativea, \alternativeb, \alternativea, \alternativeb, \alternativea, \alternativeb)$. The first three voters approve all the decisions of the outcome, so each of them gets a satisfaction of $10$ while the rest of the voters approve only half of the decisions from $D$, so they each get a satisfaction of $5$.  The coalition $S = \{4, 5, 6, 7, 8, 9\}$ agrees in all rounds on $\alternativec$, so Weak EJR demands that at least one voter in $S$ should approve at least $\ell = 6$ decisions (since $6 = |S| \ge \ell \cdot \frac{n}{T} = 6 \cdot \frac{10}{10}$). However, each voter in $S$ approves the decision of only $5$ rounds, which is strictly less than $\ell = 6$, so Weak EJR is violated. Hence, Sequential Phragm\'en fails Weak EJR.
\end{proof}

\subsection{Semi-online: Method of Equal Shares (MES)}
\label{sec:guarantees:mes}

We do not know an online rule that satisfies Weak EJR. In fact, in \Cref{app:online-ejr}, we give a reduction showing that such a rule could be converted to a multi-winner voting rule satisfying Weak EJR and the axiom of \emph{committee monotonicity}, the existence of which is a major open problem \citep{lacknerskowron,sanchez2017monotonicity}.%
\footnote{\citet{elkind2024verifying} recently proved that no (semi-)online rule can satisfy EJR, answering a question we asked in earlier versions of the paper.}

Can we evade this difficulty with some foresight, by relaxing the online requirement?  Knowing the time horizon $T$ is a common assumption found in online learning settings like multi-armed bandits \cite{barman2023fairness}. Indeed, if we know the total number of rounds $T$, we can use MES (which is online except that it uses $T$ to determine the price $p = n/T$ of deciding a round). We show that it satisfies Weak EJR.

\begin{theorem}
    MES satisfies Weak EJR.
\end{theorem}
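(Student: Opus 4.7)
The plan is to argue by contradiction. Suppose a group $S \subseteq N$ witnesses a violation of Weak EJR: writing $p = n/T$ for the per-round price, $|S| \geq \ell p$ and $S$ agrees in every round, yet every voter $i \in S$ has $U^i_D \leq \ell - 1$. The heart of the argument is the following invariant, maintained throughout the execution of MES: for every voter $i \in S$, at any point in time, $b_i \geq 1 - k_i/\ell$, where $k_i$ is the number of chosen alternatives that $i$ has approved so far. Since $k_i \leq U^i_D \leq \ell - 1$ under the contradiction hypothesis, this forces $b_i \geq 1/\ell$ at all times for every $i \in S$.

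For the inductive step in round $j$, let $c_j^* \in C_j$ be the alternative all of $S$ approves. The invariant applied just before round $j$ gives $\sum_{i \in S} \min(b_i, 1/\ell) = |S|/\ell \geq p$, so $c_j^*$ is $(1/\ell)$-affordable. Thus MES selects some $d_j$ with $\rho_j \leq 1/\ell$, and any $i \in S$ approving $d_j$ pays $\min(b_i, \rho_j) \leq 1/\ell$. The invariant is preserved: if $i \in S$ approves $d_j$ then $k_i$ increments by one while $b_i$ decreases by at most $1/\ell$, and otherwise neither quantity changes.

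To close the contradiction, the affordability of $c_j^*$ in every round also rules out premature termination, so MES makes decisions in all $T$ rounds. The minimum-$\rho$ selection (combined with continuity of $\rho \mapsto \sum_{i} \min(b_i, \rho)$) forces the total payment in each round to equal exactly $p$, so cumulative spending is $T \cdot p = n$, equal to the total initial budget. Hence $\sum_{i \in N} b_i = 0$ at the end, contradicting $\sum_{i \in S} b_i \geq |S|/\ell > 0$ from the invariant. The main subtlety is identifying the right invariant: a qualitative statement like ``$b_i > 0$ for all $i \in S$'' lacks the slack to survive $\ell - 1$ payments, so the quantitatively tight linear bound $1 - k_i/\ell$ is needed both to force $\rho_j \leq 1/\ell$ in the inductive step and to absorb the resulting budget decrease.
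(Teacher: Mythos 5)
Your proof is correct, and it rests on exactly the same key observation as the paper's: as long as every member of $S$ has approved at most $\ell-1$ decisions, each retains budget at least $1/\ell$, so the commonly approved alternative is $(1/\ell)$-affordable and MES's chosen alternative has $\rho \le 1/\ell$. The two arguments differ only in how they extract the contradiction. The paper runs the logic backward: it first locates a voter $i \in S$ whose final budget is below $p/|S|$ (via the premature-termination or full-spend dichotomy), uses an averaging argument to find the \emph{first} round in which $i$ paid strictly more than $1/\ell$, and contradicts the minimum-$\rho$ selection locally in that round. You run it forward: the invariant $b_i \ge 1 - k_i/\ell$ shows by induction that no member of $S$ ever pays more than $1/\ell$, hence premature termination never occurs and $S$ retains a total budget of at least $|S|/\ell > 0$ at the end, contradicting the global accounting identity that all $T \cdot p = n$ units get spent. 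Your version is arguably cleaner (no case split, no averaging step), at the cost of needing the small continuity remark that exactly $p$ is collected per round, which the paper's local argument does not require. Both are valid; the underlying affordability lemma is identical.
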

\begin{proof}
	Write $p = \frac{n}{T}$ for the cost of deciding a round according to MES. Suppose $S \subseteq N$ witnesses a violation of Weak EJR, where $|S| \ge \ell \cdot p$ but every $i \in S$ approves the decision of at most $\ell - 1$ rounds.
	
	Suppose MES stops prematurely without making a decision in round $j \in R$. At that point, some voter $i \in S$ has a remaining budget of strictly less than $p/|S|$, as otherwise $S$ has a combined remaining budget of at least $p$ and could therefore purchase an alternative that $S$ agrees on in round $j$. Otherwise, if it does not stop prematurely, MES makes a decision for all $T$ rounds, in which case all available money is spent (since $T \cdot p = n$), and hence we can again find $i \in S$ with remaining budget 0, i.e., less than $p/|S|$.
	
	Thus, $i$ has spent more than $1 - p/|S|$ by the time that MES has terminated, and has used that money to pay in at most $\ell-1$ rounds. In those rounds, $i$ therefore paid on average strictly more than 
	\[
	\smash{\frac{1 - p/|S|}{\ell-1} \geq \frac{1 - 1/\ell}{\ell-1}=\frac{1}{\ell}}.
	\] 
	Hence, there must be a round when $i$ paid strictly more than $1/\ell$ for the decision; let $j \in R$ be the first such round with decision $d_j \in C_j$. Just before paying for $d_j$, every voter in $S$ had at least $1/\ell$ budget left as they each so far paid at most $1/\ell$ for at most $\ell-1$ alternatives. 
	But note that the alternative $c \in C_j$ on which $S$ agrees is therefore $1/\ell$-affordable because 
	\[
	\smash{|S| \cdot \frac{1}{\ell} \geq |S|\cdot \frac{n}{|S|T} = p},
	\]
	while $d_j$ is not $1/\ell$-affordable (since $i$ had to pay strictly more than $1/\ell$ for $d_j$). This is a contradiction to MES choosing the alternative that is $\rho$-affordable for the lowest $\rho$.
\end{proof}

Does MES provide good guarantees for coalitions that do not agree on all rounds? Unfortunately not. We show that MES fails PJR and EJR. This is perhaps surprising since, in other settings, MES usually satisfies at least as many proportionality axioms as Sequential Phragm\'en. The reason for its failure here is that coalitions may agree only in early rounds where MES greedily maximizes efficiency, and then MES does not satisfy the fairness requirements in subsequent rounds where there may not be enough agreement between voters to support the purchase of any alternative. 

We begin by showing that MES does not guarantee EJR.

\begin{theorem}
	MES fails EJR. It does so even on instances where it does not terminate prematurely, i.e., where it outputs a decision for all rounds.
\end{theorem}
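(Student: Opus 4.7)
The plan is to exhibit a concrete small instance on which MES decides every one of the $T$ rounds but still violates EJR. I use $n=10$ voters and $T=10$ rounds (so each decision costs $p=1$), and designate $S=\{1,2,3,4\}$ as the coalition witnessing the failure. The design makes $S$ agree in exactly the first $k=5$ rounds; since $|S|=4=\ell\cdot n/k$ with $\ell=2$, EJR demands that some voter in $S$ approve at least two decisions.

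In each of the first five rounds $j=1,\dots,5$ I take $C_j=\{a_j,b_j\}$, with $a_j$ approved by exactly the four voters of $S$ (the agreement alternative) and $b_j$ approved by exactly the six outside voters $\{5,\dots,10\}$. Since all budgets start at $1$, $a_j$ is $\tfrac14$-affordable while $b_j$ is $\tfrac16$-affordable, so MES deterministically selects $b_j$; no voter in $S$ is satisfied and no voter in $S$ spends anything during this phase. After round $5$ each outside voter has remaining budget $\tfrac16$ while the voters in $S$ still carry their full unit budgets, even though $S$ would have been perfectly willing to pay for each $a_j$.

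For the remaining rounds I set $C_j=\{c_j\}$ for $j=6,7,8,9$ with $c_j$ approved only by voter $j-5\in S$, and $C_{10}=\{d_{10}\}$ with $d_{10}$ approved only by the outside voters. Each $c_j$ is affordable only at $\rho=1$, so the designated voter in $S$ alone pays a full unit and approves exactly one decision; $d_{10}$ is $\tfrac16$-affordable and exactly drains the residual budgets of the outsiders. Every round therefore has an affordable alternative, so MES runs to completion, yet each member of $S$ ends up approving exactly one decision, giving $U^i_D=1<\ell=2$ for every $i\in S$ and contradicting EJR. The only verifications are the strict inequality $\tfrac16<\tfrac14$ directing MES in the first phase and the observation that $S$ fails to agree in rounds $6,\dots,10$ because each such round contains only one alternative, approved by at most one member of $S$; neither presents any real obstacle.
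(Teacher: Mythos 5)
Your construction is correct and works by essentially the same mechanism as the paper's counterexample (Figure 7 there, with $T=16$, $n=8$): a larger group's cheaper alternative crowds out the cohesive coalition during the rounds where it agrees, after which the coalition fragments into singletons that can each afford only one decision, and your budget accounting (outsiders ending at $1/6$ each, exactly drained in round 10) is verified correctly. The only difference worth noting is that your instance relies on empty approval sets in rounds 6--10 (e.g., voters $2,3,4$ and all outsiders approve nothing in round 6), whereas the paper's instance keeps every approval set non-empty --- a robustness property the authors explicitly care about when discussing their PJR counterexample --- but the theorem as stated does not require this, so your proof is valid.
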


\begin{proof}
	\iflatexml
\begin{figure}
	\centering
	{\setlength{\tabcolsep}{4pt}
		\begin{tabular}{lccccccc}
			\toprule
			Rounds & 1 - 10 & 11 - 16 \\
			\midrule
			Voter 1 & $\{\alternativea\}$ &  $\{\alternativec\}$ \\
			Voter 2 & $\{\alternativea\}$ &  $\{\alternatived\}$ \\
			Voter 3 & $\{\alternativea\}$ &  $\{\alternativee\}$ \\
			Voters 4, 5, 6, 7, 8 & $\{\alternativeb\}$ &  $\{\alternativeb\}$ \\
			\bottomrule
	\end{tabular}}
	\caption{Instance where MES does not terminate prematurely yet fails EJR.}
	\label{fig:MES_fails_StrongEJR}
\end{figure}
	\else
\begin{wrapstuff}[r,type=figure,width=7cm]
	\centering
	{\setlength{\tabcolsep}{4pt}
		\begin{tabular}{lccccccc}
			\toprule
			Rounds & 1 - 10 & 11 - 16 \\
			\midrule
			Voter 1 & $\{\alternativea\}$ &  $\{\alternativec\}$ \\
			Voter 2 & $\{\alternativea\}$ &  $\{\alternatived\}$ \\
			Voter 3 & $\{\alternativea\}$ &  $\{\alternativee\}$ \\
			Voters 4, 5, 6, 7, 8 & $\{\alternativeb\}$ &  $\{\alternativeb\}$ \\
			\bottomrule
	\end{tabular}}
	\vspace{-2pt}
	\caption{Instance where MES does not terminate prematurely yet fails EJR.}
	\label{fig:MES_fails_StrongEJR}
\end{wrapstuff}
\fi
	Consider the instance shown in \Cref{fig:MES_fails_StrongEJR}. We have $T = 16$ rounds and $n = 8$ voters. The group of voters $S_1 = \{1, 2, 3\}$ agrees in the first $10$ rounds (jointly approving $\alternativea$) while the group $S_2 = \{4, 5, 6, 7, 8\}$ agrees in all rounds (approving $\alternativeb$). The budget of each voter $i$ is $b^i = 1$ unit while the price of each round is $p = \frac{n}{T} = \frac{8}{16} = 0.5$ units. 
	
	In round $1$, $\alternativea$ is $\rho$-affordable for $\rho = \frac16$ whereas $\alternativeb$ is $\rho$-affordable for $\rho = \frac{1}{10}$. Thus, $\alternativeb$ is selected in round $1$. The remaining budget of voters in $S_2$ is each set to $b^i - \rho = 1 - \frac{1}{10} = \frac{9}{10}$. By the same calculation, it can be seen that $\alternativeb$ is selected in every round until round $10$, after which every voter in $S_2$ has exhausted their budget. 
	
	Beginning in round $11$, only voters in $S_1$ have budget remaining. However, the voters $1, 2$ and $3$ do not agree in these remaining rounds. Further, every alternative apart from $\alternativeb$ is approved by only one voter, so its supporter will have to pay the full cost of $p = 0.5$ if it is selected. Thus, across the rounds $11$-$16$, each of the $3$ voters in $S_1$ will get to select the outcome of $2$ of the rounds (with the exact sequence depending on how ties are broken). 
	
	Thus, the final decision sequence $D$ will have $\alternativeb$ for the first 10 rounds and $\alternativec, \alternatived, \alternativee$ each appearing twice in the last $6$ rounds (in some order). Now notice that $S_1$ agrees in the first $k = 10$ rounds, so with $\ell = 3$, since $|S_1| \ge \ell \cdot \frac{n}{k} = 3 \cdot \frac{8}{10}$, EJR requires that some voter in $S_1$ approves decisions of at least $\ell = 3$ rounds. However, each voter in $S_1$ approves exactly $2$ decisions in $D$, violating EJR.
\end{proof}

We also have a similar example where MES fails PJR. However, we note that this example (shown in \Cref{fig:MESfails}) depends on empty approval sets and causes MES to terminate prematurely. We discuss these two caveats further below.

\begin{theorem}
	MES fails PJR.
	\label{thm:MESfails}
\end{theorem}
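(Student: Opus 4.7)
The plan is to prove the theorem by exhibiting an explicit decision instance on which MES produces a decision sequence violating PJR. Since MES will terminate prematurely on this instance, the argument must be robust to the choice of completion rule. I would arrange the post-termination rounds so that every voter has an empty approval set there, which guarantees that no completion choice can be approved by anyone, and in particular not by the witnessing coalition.

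First I would construct an instance along the following lines. Take $n$ voters and $T$ rounds with $p = n/T$ a convenient value (for instance $n = T = 10$, $p = 1$). The instance has three phases. In the first $k$ rounds, a small coalition $S$ of size $|S| < n/2$ agrees on an alternative $a$ while the remaining voters agree on a different alternative $b$; the sizes are tuned so that $b$ is $\rho$-affordable at the strictly smaller value $\rho = 1/|N \setminus S|$ compared to $\rho = 1/|S|$ for $a$, ensuring MES always picks $b$. In the second phase, a few rounds feature singleton alternatives approved only by voters outside $S$ (with $S$'s approval sets empty), so that MES selects these and drains the outside voters' remaining budgets to zero. In the third phase, a round with universally empty approval sets forces premature termination.

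Second I would trace through MES to verify its execution. In each first-phase round, $b$ is affordable at smaller $\rho$ than $a$, so $b$ is chosen, each outside voter paying a controlled amount. In each second-phase round the unique affordable singleton is chosen and paid for by the outside voters. By choosing the numbers so that the total spending across phases one and two equals $|N\setminus S|$, the outside voters are exactly exhausted and the third-phase round triggers premature termination. The voters in $S$ never spend, because their preferred alternative is never chosen.

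Third I would verify the PJR violation. The coalition $S$ agrees in $k$ rounds on $a$, and the parameters are chosen so that $|S| \geq n/k$, so PJR demands $\ell = 1$ round in which some voter of $S$ approves the decision. But every first-phase decision is $b$, unapproved by $S$; every second-phase decision is a singleton unapproved by $S$; and in the third phase $S$ has empty approval sets, so no completion choice can be approved by $S$. Hence $S$ is happy in zero rounds, violating PJR. The main obstacle is parameter tuning: one must ensure that MES genuinely prefers $b$ over $a$ at every first-phase step (needing $|N \setminus S| > |S|$), that the outside voters' budgets are exactly drained by the end of the second phase (so MES does not accidentally keep selecting $b$-type alternatives longer than planned), and that $|S|$ and $k$ yield a nontrivial PJR demand $\ell \geq 1$. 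Robustness to the completion rule is guaranteed by the empty approval sets in the final phase.
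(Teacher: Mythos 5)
Your proposal is correct and follows essentially the same route as the paper: an explicit instance in which a small group agrees on a common alternative in the early rounds but is always outbid by a larger, cheaper coalition, has empty approval sets in all later rounds (so that no completion rule can rescue it after MES terminates prematurely), and therefore ends up with zero approved decisions despite deserving $\ell = 1$. The only quibble is a minor one you already flag as a tuning obstacle: with $p = 1$ your second-phase singleton alternatives are not affordable by outside voters holding less than a full unit of budget, so MES would terminate prematurely at the start of that phase rather than after being ``exactly drained''---but this is immaterial to the conclusion, and the paper's own instance ($n=3$, $T=6$, witnessing coalition a single voter whose approval set becomes empty halfway through) realizes the identical idea more economically.
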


\begin{proof}
\iflatexml
\begin{figure}
	\centering
	\scalebox{0.9}
	{\setlength{\tabcolsep}{4pt}
		\begin{tabular}{lcccccc}
			\toprule
			Round & 1 \& 2 \& 3 & 4 \& 5 \& 6 \\
			\midrule
			Voter 1 & $\{\alternativea\}$ & $\emptyset$ \\
			Voter 2 & $\{\alternativeb\}$ &  $\{\alternativeb\}$ \\
			Voter 3 & $\{\alternativeb\}$ &  $\{\alternativeb\}$ \\
			\bottomrule
	\end{tabular}}
	\caption{Instance where MES fails PJR (and JR). Note the empty approval set of voter $1$ in rounds $4, 5, 6$.}
	\label{fig:MESfails}
\end{figure}
\else
\begin{wrapstuff}[r,type=figure,width=6.5cm]
	\centering
	\scalebox{0.9}
	{\setlength{\tabcolsep}{4pt}
		\begin{tabular}{lcccccc}
			\toprule
			Round & 1 \& 2 \& 3 & 4 \& 5 \& 6 \\
			\midrule
			Voter 1 & $\{\alternativea\}$ & $\emptyset$ \\
			Voter 2 & $\{\alternativeb\}$ &  $\{\alternativeb\}$ \\
			Voter 3 & $\{\alternativeb\}$ &  $\{\alternativeb\}$ \\
			\bottomrule
	\end{tabular}}
	\caption{Instance where MES fails PJR (and JR). Note the empty approval set of voter $1$ in rounds $4, 5, 6$.}
	\label{fig:MESfails}
\end{wrapstuff}
\fi
Consider the instance shown in \Cref{fig:MESfails}, with $T = 6$ rounds and $n = 3$ voters. Here, voter $1$ approves $\alternativea$ in the first $3$ rounds but does not approve any alternative in the remaining rounds. Voters $2$ and $3$ approve $\alternativeb$ in all rounds. 
The budget of each voter $i$ is $b^i = 1$ unit while the price of each round is $p = \frac{n}{T} = \frac36 = 0.5$ units. 

In round $1$, $\alternativea$ is $\rho$-affordable for $\rho = 0.5$ as only voter 1 approves $\alternativea$ and thus has to bear its whole cost of $p = 0.5$. However, $\alternativeb$ is $\rho$-affordable for $\rho = 0.25$ as both voters $2$ and $3$ approve it, so they can equally divide the cost $p$. So $\alternativeb$ is selected, and the remaining budget of voters $2$ and $3$ is set to $0.75$. By the same computation, $\alternativeb$ is selected in rounds $2$ and $3$, after which voters $2$ and $3$ have a remaining budget of $0.25$. In round $4$, the only alternative to elect is $\alternativeb$ and voters $2$ and $3$ have just enough budget to buy it at $\rho = 0.25$, leading to exhaustion of their whole budget. In round $5$, voters $2$ and $3$ do not have any budget to buy any alternative while voter $1$ has the budget but does not approve any alternative. Thus, MES terminates prematurely. Any completion rule used from this point can only elect $\alternativeb$ in the remaining rounds as that is the only available alternative. Thus, we obtain the decision sequence $D = (\alternativeb, \alternativeb, \alternativeb, \alternativeb, \alternativeb, \alternativeb)$.

Now, the first voter forms an individual group $S = \{1\}$ which agrees in the first 3 rounds (as voter $1$ approves $\alternativea$). With $\ell = 1$, it has size $|S| = 1 \ge \ell \cdot \frac{n}{3} = 1 \cdot \frac33$. Thus, PJR requires that in at least $\ell = 1$ rounds, the outcome is $\alternativea$. However, MES (with any completion rule) never elects $\alternativea$, so MES fails PJR.
\end{proof}

\begin{remark}
	The counterexample of \Cref{thm:MESfails} used a requirement of $\ell = 1$ rounds, so it also shows that MES fails JR.
\end{remark}

This also implies that MES fails EJR. \citet{elkind2024verifying} show that in fact every semi-online method fails EJR.

As we mentioned, in the example of \Cref{fig:MESfails}, MES terminates prematurely. Thus, a completion rule would need to be used to make decisions in the remaining rounds. Due to the empty approval sets of Voter 1 in rounds 4, 5, and 6, however, it is clear that no matter how the remaining rounds are decided, the resulting decision sequence will fail PJR. 

\begin{figure}
	\centering
	\scalebox{1}
	{\setlength{\tabcolsep}{4pt}
		\begin{tabular}{lcc}
			\toprule
			Round & 1--7 & 8--10 \\
			\midrule
			Voters 1--70 & $\{\alternativea\}$ &  $\{\alternativec\}$  \\
			Voters 71--100 & $\{\alternativeb\}$ &  all disjoint \\
			\bottomrule
	\end{tabular}}
	\caption{Example illustrating how MES fails PJR (and JR) without empty approval sets. The cell ``all disjoint'' indicates that each of the last 30 voters approves a unique alternative not approved by anyone else in the last 3 rounds. On this example, MES selects $\protect\alternativea$ in the first 7 rounds and then terminates prematurely. Most completion methods select $\protect\alternativec$ in the remaining rounds (including Phragm\'en completion, utilitarian completion, and Add1).}
	\label{fig:MESfailsNonEmpty}
\end{figure}

What about if we insist that all approval sets must be non-empty? In that case we are able to construct examples similar to \Cref{fig:MESfails} where MES terminates prematurely (see \Cref{fig:MESfailsNonEmpty}), and all the standard completion methods lead to a decision sequence that fails PJR. However, we are not aware of an example where PJR is failed for \emph{all possible} completions of the MES output.

It turns out that the premature termination is, in a sense, the cause for the PJR failure: We show that if MES does not terminate prematurely (i.e., it outputs decisions for all the $T$ rounds), its output decision sequence $D$ satisfies PJR.           

\begin{theorem}
	If MES outputs a decision sequence $D$ which contains a decision for all rounds (without terminating prematurely), then for all $S \subseteq N$ and $\ell \in \mathbb{N}$ with $|S| \ge \ell \cdot \frac{n}{T}$, there are at least $\ell$ rounds in which at least one member of $S$ approves the decision of $D$ (no matter in how many rounds $S$ agrees). In particular, $D$ satisfies PJR.
\end{theorem}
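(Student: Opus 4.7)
The plan is to argue by contradiction using a direct budget-counting argument. Suppose some $S \subseteq N$ with $|S| \ge \ell \cdot n/T$ has the property that only in at most $\ell - 1$ rounds is the MES decision approved by some member of $S$.

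First, I would exploit non-premature termination to pin down the total spending. Since MES makes a decision in every one of the $T$ rounds, and each decision is purchased at a cost of exactly $p = n/T$, the total amount paid by voters across all rounds equals $T \cdot p = n$. This matches the total initial budget of $n \cdot 1 = n$, and since the MES update $b_i \gets \max(0, b_i - \rho)$ ensures no voter ever spends more than their starting budget of $1$, each voter must in fact spend exactly $1$. In particular, voters in $S$ collectively spend $|S|$ units of budget.

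Second, I would bound $S$'s spending from above. By the definition of MES, a voter is only charged for decisions that they approve, so all of $S$'s spending is concentrated in the at most $\ell - 1$ rounds where the chosen alternative is approved by some member of $S$. In any single such round, $S$'s contribution to the cost is at most the round's total cost $p$. Therefore $S$ pays at most $(\ell - 1)\,p = (\ell-1)\,n/T$ in total. Combining this with the first step gives $|S| \le (\ell - 1)\,n/T < \ell\,n/T \le |S|$, a contradiction.

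Finally, PJR follows as an immediate corollary: any group $S$ that agrees in $k \le T$ rounds with $|S| \ge \ell\,n/k$ also satisfies $|S| \ge \ell\,n/T$, so the main claim applies and $S$ is happy in at least $\ell$ rounds. I do not expect a serious technical obstacle here; the proof is essentially a pigeonhole on total budget, and the conceptual content is simply that premature termination (as in \Cref{thm:MESfails}) is exactly the failure mode for PJR, and ruling it out makes the accounting tight.
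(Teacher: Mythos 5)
Your proof is correct and is essentially the paper's argument: both rest on the observation that non-premature termination forces the full budget of $n$ to be spent, that members of $S$ only pay in rounds whose decision some member approves, and that each round costs exactly $p = n/T$, so $S$'s total spending of $|S|$ forces at least $|S|/p \ge \ell$ such rounds. The paper phrases this directly as a lower bound on the number of rounds rather than via contradiction, but the accounting is identical.
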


\begin{proof}
	Let $\ell \in \mathbb{N}$ and let $S \subseteq N$ be a group of voters with $|S| \ge \ell \cdot \frac{n}{T}$.
	Recall that MES initially gives each voter a budget of $1$ while each round's decision has a price of $p = \frac{n}{T}$. As MES produced a decision for all rounds, in total $p \cdot T = n$ amount was spent. This implies that every voter's budget of $1$ is used in full. Thus, the voters in $S$ paid a total amount of $|S|$ during the execution of MES. Recall that voters only pay when they approve a round's decision. Thus, the number of rounds in which the decision is approved by at least 1 voter in $S$ must be at least $|S|/p = |S| \cdot \frac{T}{n} \ge \ell$ rounds, as desired.
\end{proof}

\subsection{Offline: Proportional Approval Voting (PAV)}
\label{sec:guarantees:pav}

In some settings, offline voting (where alternatives and approval sets for all rounds are known in advance) is possible, e.g., for voting in combinatorial domains with independent issues.
Studying the offline setting can also clarify which axioms are plausible aims for online rules.
It turns out that if we make decisions fully offline, there is a rule that satisfies all four of our axioms: PAV,
as well as its polynomial-time local search variant. The proof uses a swapping argument: if the output violates EJR, then in at least 1 round, one can change the decision and thereby increase the PAV objective function. This technique is also used in multi-winner voting \citep{ejr}.
Our theorem was recently generalized by 
\citet{masavrik2023group} to show that PAV satisfies an EJR notion in a general model of voting with matroid constraints (noting that our model can be described as involving constraints that in each round, just a single alternative can be chosen).

\begin{theorem}
	\label{thm:pav-satisfies-strong-ejr}
	PAV and Local-Search PAV satisfy EJR.
\end{theorem}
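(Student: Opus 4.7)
The plan is to proceed by contradiction, mimicking the standard swap-based EJR proofs for PAV in multi-winner voting. Suppose $D$ (the output of PAV or of Local-Search PAV) violates EJR: there is a group $S \subseteq N$ which agrees on a set $R^* \subseteq R$ of $k$ rounds, has $|S| \ge \ell \cdot n/k$, and yet every $i \in S$ has utility $u_i := U^i_D \le \ell - 1$. For each $j \in R^*$ I would pick a witness $c_j \in \bigcap_{i \in S} A^i_j$, and let $\Delta_j$ denote the change in PAV-score obtained by replacing $d_j$ with $c_j$ (keeping all other rounds unchanged). The aim is to lower bound $\sum_{j \in R^*} \Delta_j$ by $n/k$, so that by averaging some round $j^\star \in R^*$ satisfies $\Delta_{j^\star} \ge n/k^2 \ge n/T^2$. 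This contradicts PAV-optimality (since a strictly improving swap exists) and also contradicts Local-Search PAV having terminated (since its threshold is exactly $n/T^2$).

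I would bound gains and losses separately. For a voter $i \in S$, the alternative $c_j$ is always approved, so the swap never decreases $i$'s utility. In each round $j \in R^*$ with $d_j \notin A^i_j$, the utility rises by one, contributing a marginal PAV gain of $1/(u_i+1) \ge 1/\ell$. Since $i$ approves at most $u_i$ of the decisions of $D$ overall (and hence at most $u_i$ rounds of $R^*$), there are at least $k - u_i$ ``gaining'' rounds for $i$, yielding a per-voter contribution of at least $(k-u_i)/(u_i+1)$. This expression is decreasing in $u_i$ on $[0,\ell-1]$, so it is at least $(k-\ell+1)/\ell$; summing over $S$ gives a total gain of at least $|S|(k-\ell+1)/\ell$.

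For losses, only a voter $i \notin S$ with $d_j \in A^i_j$ and $c_j \notin A^i_j$ loses anything, and the loss equals $1/u_i$. For fixed $i$, summing over $j \in R^*$ gives at most $|\{j \in R^* : d_j \in A^i_j\}|/u_i \le u_i/u_i = 1$ (and $0$ when $u_i = 0$), because $i$ can approve at most $u_i$ of the decisions in all of $R$. Summing over the at most $n - |S|$ voters outside $S$ bounds the total loss by $n - |S|$. Combining these and using $|S| \ge \ell n/k$, a short calculation gives
\[
\sum_{j \in R^*} \Delta_j \;\ge\; \frac{|S|(k-\ell+1)}{\ell} - (n - |S|) \;=\; \frac{|S|(k+1)}{\ell} - n \;\ge\; \frac{n(k+1)}{k} - n \;=\; \frac{n}{k}.
\]
Dividing across the $k$ rounds of $R^*$ yields the promised single-round improvement, and hence the contradiction.

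I do not anticipate a genuinely hard step: the argument is a careful port of the classical Aziz et al.\ swap proof. The main care needed is the per-round bookkeeping — in particular, arguing that ``voter $i$ contributes at most $1$ of total loss across $R^*$'' from the crude fact that $i$ approves at most $u_i$ of $D$'s decisions, and that the uniform lower bound $1/\ell$ on marginal gains applies for every $i \in S$ thanks to $u_i \le \ell - 1$. Matching the local-search threshold $n/T^2$ then follows because $k \le T$, so the same proof covers both PAV and its polynomial-time local search variant.
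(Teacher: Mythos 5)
Your proposal is correct and follows essentially the same route as the paper's proof: the identical swap decomposition over the agreement rounds $R^*$, the per-voter gain bound $(k-\ell+1)/\ell$ for members of $S$, the ``each outside voter loses at most $1$ in total'' bound via $U^i_{R^*} \le U^i_D$, and the same averaging step yielding a single round with improvement at least $n/k^2 \ge n/T^2$. The only cosmetic difference is that you treat PAV directly via the existence of a strictly improving swap, whereas the paper reduces PAV to Local-Search PAV by observing that the PAV optimum is a fixed point of the local search; both are fine.
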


\begin{proof}
	If $D_{\text{PAV}}$ is the decision outcome chosen by PAV, and we run Local-Search PAV with $D_{\text{init}} = D_{\text{PAV}}$, then it immediately terminates. Hence $D_{\text{PAV}}$ is a possible output of Local-Search PAV. Thus, to prove the theorem, it is sufficient to prove that every possible output of Local-Search PAV satisfies EJR.
	
	So let $D$ be a decision outcome returned by Local-Search PAV with $|D| = T$. 
	For a contradiction, suppose that $S \subseteq N$ witnesses a violation of EJR of $D$, with $S$ agreeing on a set $R^* \subseteq R$ of $|R^*| = k$ rounds and with $|S| = s \ge \ell \cdot \frac{n}{k}$ but $U^i_D < \ell$ for all $i \in S$. 
	For any round $r \in R^*$, let $a_r \in C_r$ be an alternative on which $S$ agrees and $d_r$ be the current decision for that round.
	If we replace $d_r$ by $a_r$ in $D$, the change in the $\operatorname{PAV-score}$ of $D$ is
	\begin{align}
		\Delta(a_r, d_r) 
		&= \underbrace{ \sum_{\substack{i \in S \\ \ d_r \notin A^i_r}} \frac{1}{U^i_D + 1} + \sum_{\substack{i \in N \setminus S \\ \ d_r \notin A^i_r \\ a_r \in A^i_r}} \frac{1}{U^i_D + 1} }_{\text{increase due to addition of } a_r} 
		\quad -
		\underbrace{\quad \sum_{\substack{i \in N \setminus S \\ \ d_r \in A^i_r \\ a_r \notin A^i_r}} \frac{1}{U^i_D} \quad}_{ \text{due to removal of } d_r} 
		\nonumber
		\\
		& \ge \sum_{\substack{i \in S \\ \ d_r \notin A^i_r}} \frac{1}{U^i_D + 1} - \sum_{\substack{i \in N \setminus S \\ \ d_r \in A^i_r}} \frac{1}{U^i_D}
		\label{eq:delta_increase_PAV}
	\end{align}
	
	Note that the change for a particular round may be negative. Summing this change over $R^*$, the set of rounds where $S$ agrees, we get 
	\begin{align*}
		\sum_{r \in R^*}\Delta(a_r, d_r) &\ge \sum_{r \in R^*} \Bigg( \sum_{\substack{i \in S \\ \ d_r \notin A^i_r}} \frac{1}{U^i_D + 1} \ - \quad \sum_{\substack{i \in N \setminus S \\ \ d_r \in A^i_r}} \frac{1}{U^i_D} \Bigg ) \tag{from \eqref{eq:delta_increase_PAV}} \\ 
		&= \sum_{i \in S} \sum_{\substack{r \in R^* \\ \ d_r \notin A^i_r}} \frac{1}{U^i_D + 1} \ - \ \sum_{i \in N \setminus S} \sum_{\substack{r \in R^* \\ \ d_r \in A^i_r}} \frac{1}{U^i_D} \tag{interchanging  sums}  
	\end{align*}
	
	Write $U^i_{R^*}$ for the number of rounds in $R^*$ in which $i$ approves the decision of $D$.
	Note that for all $i \in S$, we have $U^i_{R^*} \le U^i_D \le \ell - 1$, and so the number of decisions in rounds $R^*$ not approved by $i$ is at least $k - (\ell - 1) = k - \ell + 1$. Hence, we get
	
	\begin{align*}
		\sum_{r \in R^*}\Delta(a_r, d_r) 
		&\ge  \sum_{i \in S} \frac{k - \ell + 1}{U^i_D + 1} \ - \sum_{i \in N \setminus S} \sum_{\substack{r \in R^*\\ \ d_r \in A^i_r}} \frac{1}{U^i_D}  \\
		&\ge  \sum_{i \in S} \frac{k - \ell + 1}{U^i_D + 1} - \sum_{i \in N \setminus S} \  \sum_{\substack{r \in R^*\\ \ d_r \in A^i_r}}  \frac{1}{U^i_{R^*}} 
		\tag{since $U^i_{R^*} \le U^i_D$ for all $i \in N$} \\
		&= \sum_{i \in S} \frac{k - \ell + 1}{U^i_D + 1} \  - \sum_{i \in N\setminus S} 1 
		\tag{definition of $U^i_{R^*}$} \\
		&\ge \sum_{i \in S} \frac{k - \ell + 1}{\ell - 1 + 1} - (n - s) 
		\tag{since $U^i_D \le \ell - 1$ for all $i \in S$} \\ 
		&=  \bigg( \frac{k - \ell + 1}{\ell} \bigg) \cdot s -  (n - s)  \\
		&= \frac{sk}{\ell} - s + \frac{s}{\ell} - n + s \\
		&= s\frac{k}{\ell} + \frac{s}{\ell} - n \ge n + \frac{s}{\ell} - n \ge \frac{n}{k} \tag{since $s \ge \ell \cdot \frac{n}{k}$}
	\end{align*}
	
	Thus, for an average $r \in R^*$, we have $\Delta(a_r, d_r) \ge \frac{1}{k} \cdot \frac{n}{k} = \frac{n}{k^2} \ge \frac{n}{T^2}$ using that $|R^*| = k \le T$. Therefore, there exists a round $r \in R^*$ such that $\Delta(a_r, d_r) \ge \frac{n}{T^2}$. Hence $D$ admits a swap that increases its PAV-score by at least $\frac{n}{T^2}$, and thus Local-Search PAV would not terminate and return $D$. This gives a contradiction and completes the proof.
\end{proof}

 \section{Impossibility of Stronger Guarantees}
\label{sec:hardness}

One could try to strengthen the axioms we have discussed in various ways, by attempting to either increase the amount of utility guarantees, or to relax the constraints on group size. However, one can show that many such strengthenings lead to axioms that may be infeasible. 
For example, if we reduce the requirement on the coalition size by $\epsilon$ in the definition of PJR, there are instances where no decision sequence satisfies the strengthening. 

\begin{restatable}{theorem}{Stronger PJR need not exist}
    \label{thm:counterexample}
	Let $\epsilon > 0$. 
	Then there exists an instance where no decision sequence $D$ satisfies ``\emph{$\epsilon$-Stronger PJR}'', defined to require that 
	for every $\ell \in \mathbb N$ and every group of voters $S \subseteq N$ that agrees in $k$ rounds and has size $|S| \geq (\ell  - \epsilon)\cdot \frac{n}{k}$, there are at least $\ell$ rounds $j \in R$ in which the decision $d_j$ of $D$ is approved by at least one voter in $S$ (i.e., $d_j \in \bigcup_{i \in S} \smash{A_j^i}$).
\end{restatable}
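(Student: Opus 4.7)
The plan is to build a family of instances, parameterized by $\epsilon$, in which many disjoint voter groups each demand at least one of their unique alternatives to be chosen, while there are strictly fewer rounds than groups. The key idea is that the strengthened quota lets a group of size slightly below $\ell \cdot \tfrac{n}{k}$ trigger the axiom, which creates a pigeonhole obstruction.

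Given $\epsilon > 0$, I would first fix an integer $T$ large enough that $\epsilon(T+1) \ge 1$; equivalently, $T \ge \lceil 1/\epsilon \rceil - 1$. Then I take $n = T(T+1)$ voters and partition them into $T+1$ disjoint groups $S_1, \ldots, S_{T+1}$, each of size $T$. In every round $j \in \{1, \ldots, T\}$, the alternative set is $C_j = \{a_1, \ldots, a_{T+1}\}$, and each voter in $S_i$ approves only $a_i$ (in every round). In particular, group $S_i$ agrees in all $k = T$ rounds on $a_i$.

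Next, I would verify that each group $S_i$ qualifies under $\epsilon$-Stronger PJR with $\ell = 1$. The threshold reads
\[
(\ell - \epsilon)\cdot \tfrac{n}{k} = (1 - \epsilon)\cdot \tfrac{T(T+1)}{T} = (1-\epsilon)(T+1) = T+1 - \epsilon(T+1) \;\le\; T = |S_i|,
\]
using our choice of $T$. Hence $\epsilon$-Stronger PJR demands that for every $i \in \{1,\dots,T+1\}$, at least one round's decision must be approved by some voter in $S_i$. Since members of $S_i$ approve only $a_i$, this means the decision sequence must pick $a_i$ in at least one round, for each of the $T+1$ values of $i$.

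The contradiction is then immediate: we need at least $T+1$ distinct alternatives to be chosen among only $T$ rounds, which is impossible by pigeonhole. I do not foresee a real obstacle here; the only thing to be careful about is choosing $T$ so that the inequality $(1-\epsilon)(T+1) \le T$ holds (which is why we need $T \ge 1/\epsilon - 1$), and ensuring that approvals are pairwise disjoint so that no single decision can simultaneously satisfy two of the $T+1$ groups.
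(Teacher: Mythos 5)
Your proof is correct, but it takes a genuinely different and more elementary route than the paper. You use a single-phase ``apportionment-style'' instance: partition the voters into $T+1$ disjoint groups of equal size, each locked onto its own alternative in every round, choose $T$ so that $(1-\epsilon)(T+1)\le T$, and conclude by pigeonhole since $T$ rounds cannot cover $T+1$ pairwise-disjoint demands. The paper instead builds a two-phase instance (its Theorem~\ref{thm:counterexample-construction}): in the first $k$ rounds there is one alternative for \emph{every} size-$s$ subset of voters with $s=\lceil(1-\epsilon)\frac{n}{k}\rceil$, and in the remaining $T-k$ rounds all approval sets are disjoint singletons; a counting argument then shows that at least $s$ voters approve no decision at all, and any $s$ of them form a violating coalition that agrees only in the first $k<T$ rounds. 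Your construction is shorter and entirely transparent, and it fully proves the stated theorem (the violating triple is $k=T$, $\ell=1$). What the paper's heavier construction buys is reusability: because its witnessing coalition agrees in only $k$ rounds while its size is measured against $\frac{n}{k}$ (or, in the corollaries, against $\frac{n}{T}$), the same instance also kills the variants where agreement is required in far fewer rounds than the deserved number of decisions ($\ell$-Agreement PJR and its multiplicative relaxations), and it adapts to the online impossibility of Theorem~\ref{thm:counterexample-online}; your all-rounds-agreement instance would not directly yield those. One cosmetic point: you should insist on $T\ge 1$ (relevant only when $\epsilon\ge 1$, where $\lceil 1/\epsilon\rceil-1=0$), but this is trivial to fix.
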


The proof of \Cref{thm:counterexample} is in \Cref{app:infeasible-axioms}, where we also discuss several other potential strengthenings.
The proof constructs a counterexample where in the first $k$
rounds, there are very many coalitions that agree (but cannot all be satisfied simultaneously),
while in the remaining rounds, there is no agreement at all (which makes it impossible to satisfy
all the justified demands from earlier rounds). 

\Cref{thm:counterexample} is a worst case result, and we could hope for
rules that satisfy $\epsilon$-Stronger PJR on inputs where it is possible. However, we
show that no (semi-)online rule can do this.

\begin{restatable}{theorem}{Stronger PJR cannot be satisfied online}
    \label{thm:counterexample-online}
	Let $\epsilon > 0$. No semi-online decision rule returns an \emph{$\epsilon$-Stronger PJR} decision sequence whenever one exists. 
\end{restatable}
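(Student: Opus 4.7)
The plan is a standard adversarial prefix argument: I construct a family of instances that share an identical first $k$ rounds, so that any semi-online rule must commit to identical decisions on that prefix; then I exhibit a branch in round $k+1$ such that the rule's commitment foils at least one member of the family, even though each family member individually admits an $\epsilon$-Stronger PJR outcome.

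To calibrate sizes, set $k = \lceil 1/\epsilon \rceil$ and take $n = k+1$ voters. In each of the first $k$ rounds, the set of alternatives is $C_j = \{c_1, \dots, c_{k+1}\}$, and voter $i$ approves only $\{c_i\}$; so each voter is a singleton coalition that agrees in all $k$ prefix rounds. Since $1 \geq (1-\epsilon)(k+1)/k = (1-\epsilon)\,n/k$ (equivalent to $k \geq 1/\epsilon - 1$), $\epsilon$-Stronger PJR demands that each singleton approve at least one decision. For each $j^* \in \{1,\dots,k+1\}$, define instance $I_{j^*}$ by appending one final round with $C_{k+1} = \{c_{j^*}\}$, $A_{k+1}^{j^*} = \{c_{j^*}\}$, and $A_{k+1}^i = \emptyset$ for $i \neq j^*$. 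Each $I_{j^*}$ admits an $\epsilon$-Stronger PJR decision sequence: map $\{1,\dots,k\}$ bijectively onto $\{c_1,\dots,c_{k+1}\}\setminus\{c_{j^*}\}$ in the first $k$ rounds, then pick $c_{j^*}$ in round $k+1$; since distinct voters have disjoint approval sets, the only coalitions that ever agree are singletons, and every singleton is satisfied.

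Now fix any semi-online rule $f$. Every $I_{j^*}$ has time horizon $T = k+1$ and identical information up to round $k$, so $f$ must produce identical decisions $d_1,\dots,d_k$ on all of them. By pigeonhole, some $i^* \in \{1,\dots,k+1\}$ has $c_{i^*} \notin \{d_1,\dots,d_k\}$. Pick any $j^* \neq i^*$ and run $f$ on $I_{j^*}$: only $c_{j^*}$ is available in round $k+1$, so $d_{k+1} = c_{j^*}$, and voter $i^*$ approves none of the decisions. The singleton $\{i^*\}$ thus witnesses a violation of $\epsilon$-Stronger PJR on $I_{j^*}$, proving the theorem. The main obstacle is precisely the calibration above: one needs the threshold $(\ell - \epsilon)\,n/k$ to demand $\ell = 1$ for each singleton, while having one more voter than prefix rounds so that pigeonhole bites. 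The choice $k = \lceil 1/\epsilon \rceil$ and $n = k+1$ satisfies both requirements simultaneously, and nothing more is required.
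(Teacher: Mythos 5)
Your proof is correct, and it takes a genuinely different and more self-contained route than the paper's. The paper derives \Cref{thm:counterexample-online} by adversarially modifying the construction of \Cref{thm:counterexample-construction}: the first $T-1$ rounds are kept as in that construction (with one alternative per size-$s$ coalition in the early rounds), the rule's decisions on that prefix are observed, and the final round is then built so that $s+1$ still-unsatisfied voters approve pairwise distinct alternatives while all remaining voters share a common alternative $c$; no last-round choice can rescue more than one of them, leaving a violating coalition of size $s$, even though a violation-free sequence exists. Your construction is instead minimal and bespoke: $n=k+1$ voters with pairwise disjoint singleton approvals over $k$ identical prefix rounds, so the only groups that ever agree are singletons, each of which crosses the relaxed threshold $(1-\epsilon)\frac{n}{k}$ precisely because $k=\lceil 1/\epsilon\rceil$; the pigeonhole on the shared prefix plus the forced final round then do all the work. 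What your approach buys is that the ``a satisfying sequence exists'' half of the argument is trivial to verify (a bijection satisfies every singleton, and no larger coalition imposes any demand), whereas the paper's sketch leaves this step rather implicit and the reader must check that all of the many size-$s$ coalitions are handled. Two minor remarks: (i) your final round gives every voter other than $j^*$ an empty approval set, which the model explicitly permits but the paper's own construction avoids (it instead arranges $s+1$ unsatisfied voters with distinct non-empty singletons so that at most one can be rescued); if one insisted on non-empty approval sets you would need at least two unsatisfied voters after the prefix, e.g.\ $n=k+2$ with a recalibrated threshold. (ii) Both your argument and the paper's implicitly treat the non-degenerate regime $\epsilon<1$; for $\epsilon\ge 1$ the threshold $(1-\epsilon)\frac{n}{k}$ is non-positive and the axiom degenerates, but that edge case is clearly outside the intended scope of the statement.
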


Again, the proof can be found in \Cref{app:infeasible-axioms}.

\section{Experiments}
\label{sec:experiments}

To understand the performance of our methods empirically, we run our methods on both synthetic and real-world datasets. 

\paragraph{Rules.}
In addition to our proposed rules (Phragm\'en, MES, PAV), we also consider two rules proposed by \citet{lackner2020perpetual}: \emph{Perpetual Quota} (aims at granting each voter a satisfaction as close as possible to their ``quota'') and \emph{Perpetual Consensus} (similar to  Sequential Phragm\'en but strictly enforces an equal distribution of the load incurred).\footnote{These rules intuitively aim for proportionality, but do not satisfy any of our axioms as they even fail the weaker axioms proposed by \citet{lacknermaly2023}.}
We chose those rules as they performed well in the experiments of \citet{lackner2020perpetual}.
Further, we consider two baselines: \emph{Approval Voting} (chooses the alternative with the highest approval score in each round) and \emph{Round Robin} (in each round $j$, voter 
$j \text{ mod } T$
chooses an approved alternative (the first one according to a fixed tie-breaking order)). 

We implemented all the voting rules in Python 3.8, building on existing implementations: For Sequential Phragm\'en  (a.k.a. Perpetual Phragm\'en), Perpetual Quota, Perpetual Consensus, Approval Voting, and Round Robin (a.k.a. Serial Dictatorship), we used the \href{https://github.com/martinlackner/perpetual}{\texttt{perpetual}} python package.
For PAV, we used a standard integer linear-program (ILP) encoding \citep[see, e.g.,][]{peters2020preferences}, solved using Gurobi 9. For MES, our implementation builds upon the \href{https://github.com/Grzesiek3713/pabutools}{\texttt{Grzesiek3713/pabutools}} package and uses Phragm\'en completion in case of premature termination.

\paragraph{Metrics.}
We evaluate our rules on several metrics of voter utility to complement our theoretical guarantees. For comparability of results, we normalize utility and define a single voter $i$'s utility as the \emph{fraction} of rounds in which the voter approves the decision: $\smash{U^i_D}/T$. Based on this, we report three metrics:
\begin{itemize}[leftmargin=15pt]
	\item \emph{Average Utility} of the voters (equivalent to utilitarian social welfare).
\item \emph{Utility of the 25th Percentile}: We sort the vector of utilities and report its 25th percentile. This  is inspired by egalitarian social welfare, which we did not use in our experiments because the minimum utility was often zero. \item \emph{Gini Coefficient}: This metric quantifies the level of inequality in the voter utilities. A lower value corresponds to a more equal utility distribution (with 0 being obtained in case every voter has the same utility). Formally, the Gini coefficient of a decision sequence $D$ is 
\[
%    \textstyle
    \text{gini}(D)= \frac{1}{\sum_{i \in N} U^i_D} \sum_{i \in N} \sum_{j \in N} \frac{1}{2n} \big| U^i_D - D^j_D \big|.
    \]

\end{itemize}

\subsection{Synthetic Data}
\label{sec:experiments:synthetic}

For analysis based on synthetic data, we follow a similar setup to the one used by \citet{lackner2020perpetual} which is based on the popular approach of sampling both voters and alternatives as points in a two-dimensional Euclidean space \citep{elkind2017multiwinner}. 
Because we are particularly interested in group representation, we use distributions of voter locations that have several clusters. 

More precisely, both voters and alternatives are sampled as points in $\mathbb{R}^2$. In each round, a fresh set of alternatives is sampled uniformly in the square $[-1, -1] \times [1, 1]$. The voter locations stay fixed across rounds, and for each voter $i \in N$, we sample $i$'s location independently of other voters from a bivariate normal distribution $\mathcal{N}(x^*_i, y^*_i)$ centered at a point $(x^*_i, y^*_i)$ with standard deviation $\sigma = 0.2$ (unless stated otherwise). We present results for four distributions of voter locations, which differ in their choices of points $(x^*_i, y^*_i)$ and the size of the groups:

\begin{itemize}
	\item \emph{Restricted}: The voters are split into $2$ disjoint groups $S_1$ and $S_2$ with $\frac13$ of voters in $S_1$ and $\frac23$ of voters in $S_2$. The locations of voters in $S_1$ are sampled from $\mathcal{N}(-0.5, -0.5)$, while the locations of voters in $S_2$ are sampled from $\mathcal{N}(0.5, 0.5)$. We further \emph{restrict} the voter locations to lie in the square $[-1, -1] \times [1, 1]$, if necessary resampling from the location distributions until this constraint is satisfied.
	
	\item \emph{Many Groups}: The voters are split into $4$ disjoint groups concentrated in different areas. The first $3$ groups each form $20\%$ of the population while the last group forms $40\%$. For each group, both $x$- and $y$-coordinates are drawn independently from $\mathcal{N}(\pm 0.5, \pm 0.5)$. 
	
	\item \emph{Unbalanced}:  The voters are split into $2$ disjoint groups $S_1$ and $S_2$ with $20\%$ of voters in $S_1$ and $80\%$ of voters in $S_2$. The locations of voters in $S_1$ are sampled from $\mathcal{N}(-0.5, -0.5)$, and the locations of voters in $S_2$ are sampled from $\mathcal{N}(0.5, 0.5)$. For this distribution, we used standard deviation $\sigma = 0.1$.
	
	\item \emph{Balanced and Nearby}: The voters are split into $2$ disjoint groups $S_1$ and $S_2$ with $60\%$ of voters in $S_1$ and $40\%$ of voters in $S_2$. In this distribution, the groups are quite close to each other in comparison to other distributions. All the voters of $S_1$ were sampled from $\mathcal{N}(-0.25, 0)$ while the voters of $S_2$ were sampled from $\mathcal{N}(0.25, 0)$.
\end{itemize}

These four distributions are visualized in the blue pictures in the right-most column of \Cref{fig:param_setting_2}, where we sampled 2\,500 points from these distributions (chosen for visualization purposes; the experiments use smaller numbers of voters).

Our experiments are parameterized by a factor $f \ge 1$, which determines how many alternatives are approved by voters. Specifically, voters approve all alternatives whose Euclidean distance is within $f$ times the distance to their closest alternative. 
For each of the distributions mentioned above, we consider the following three parameter settings:
\begin{itemize}
	\item $n = 20$ voters, $T = 50$ rounds and each round having $20$ alternatives, with $f = 1.5$. This led to an average approval set size of $2.12$. 
	\item $n = 50$ voters, $T = 50$ rounds and each round having $40$ alternatives, with $f = 1.5$. This led to an average approval set size of $2.2$. 
	\item $n = 50$ voters, $T = 50$ rounds and each round having $40$ alternatives, with $f = 3$. This led to a larger approval set size with the average being $7.52$. 
\end{itemize}

We repeated each experiment for $1\,000$ trials. We report our results for the various distributions in \Cref{fig:param_setting_2,fig:param_setting_3,fig:param_setting_4}. Each figure corresponds to a specific distribution and contains multiple subfigures to show the plots for different parameter values. In each subfigure, bar plots are shown for each rule and metric. The length of the bar represents the median value across all the $1000$ trials for the rule while the error bars represent the $25$th and $75$th percentile, with the numeric text after the bar showing the mean value. 

\begin{figure}
	\centering
	
	\begin{subfigure}{\textwidth}
		\centering
		\includegraphics[height=3.6cm]{Figures/Euclidean/1000,20,20,50,0.2,eucl1,uniform_square,1.5.pdf}
		\:\raisebox{7pt}{\includegraphics[height=3.2cm]{Figures/distributions/restricted.png}}
		\caption{\emph{Restricted} distribution.}
		\label{fig:euclidean_1000,20,20,50,0.2,eucl1,uniform_square,1.5}
	\end{subfigure}
	
	\vspace{\baselineskip}
	
	\begin{subfigure}{\textwidth}
		\centering
		\includegraphics[height=3.6cm]{Figures/Euclidean/1000,20,20,50,0.2,eucl3,uniform_square,1.5.pdf}
		\:\raisebox{7pt}{\includegraphics[height=3.2cm]{Figures/distributions/many_groups.png}}
		\caption{\emph{Many Groups} distribution.}
		\label{fig:euclidean_1000,20,20,50,0.2,eucl3,uniform_square,1.5}
	\end{subfigure}
	
	\vspace{\baselineskip}
	
	\begin{subfigure}{\textwidth}
		\centering
		\includegraphics[height=3.6cm]{Figures/Euclidean/1000,20,20,50,0.1,eucl4,uniform_square,1.5.pdf}
		\:\raisebox{7pt}{\includegraphics[height=3.2cm]{Figures/distributions/unbalanced.png}}
		\caption{\emph{Unbalanced} distribution.}
		\label{fig:euclidean_1000,20,20,50,0.1,eucl4,uniform_square,1.5}
	\end{subfigure}
	
	\vspace{\baselineskip}
	
	\begin{subfigure}{\textwidth}
		\centering
		\includegraphics[height=3.6cm]{Figures/Euclidean/1000,20,20,50,0.2,eucl6,uniform_square,1.5.pdf}
		\:\raisebox{7pt}{\includegraphics[height=3.2cm]{Figures/distributions/balanced_nearyby.png}}
		\caption{\emph{Balanced and Nearby} distribution.}
		\label{fig:euclidean_1000,20,20,50,0.2,eucl6,uniform_square,1.5}
	\end{subfigure}
	
	\caption{Performance of the different rules for $n = 20$ voters, $T = 50$ rounds, $20$ alternatives in each round, with $f = 1.5$. The length of the bar represents the median across all the trials while the error bars represent the $25$th and $75$th percentile with the numeric text after the bar showing the mean value. The blue pictures in the right-most column illustrates the underlying distribution of voter locations.}
	\label{fig:param_setting_2}
\end{figure}

\begin{figure}
	\centering
	
	\begin{subfigure}{\textwidth}
		\centering
		\includegraphics[height=3.6cm]{Figures/Euclidean/1000,50,40,50,0.2,eucl1,uniform_square,1.5.pdf}
		\iflatexml\else\:\raisebox{7pt}{\includegraphics[height=3.2cm]{Figures/distributions/restricted.png}}\fi
		\caption{\emph{Restricted} distribution.}
		\label{fig:euclidean_1000,50,40,50,0.2,eucl1,uniform_square,1.5}
	\end{subfigure}
	
	\vspace{\baselineskip}
	
	\begin{subfigure}{\textwidth}
		\centering
		\includegraphics[height=3.6cm]{Figures/Euclidean/1000,50,40,50,0.2,eucl3,uniform_square,1.5.pdf}
		\iflatexml\else\:\raisebox{7pt}{\includegraphics[height=3.2cm]{Figures/distributions/many_groups.png}}\fi
		\caption{\emph{Many Groups} distribution.}
		\label{fig:euclidean_1000,50,40,50,0.2,eucl3,uniform_square,1.5}
	\end{subfigure}
	
	\vspace{\baselineskip}
	
	\begin{subfigure}{\textwidth}
		\centering
		\includegraphics[height=3.6cm]{Figures/Euclidean/1000,50,40,50,0.1,eucl4,uniform_square,1.5.pdf}
		\iflatexml\else\:\raisebox{7pt}{\includegraphics[height=3.2cm]{Figures/distributions/unbalanced.png}}\fi
		\caption{\emph{Unbalanced} distribution.}
		\label{fig:euclidean_1000,50,40,50,0.1,eucl4,uniform_square,1.5}
	\end{subfigure}
	
	\vspace{\baselineskip}
	
	\begin{subfigure}{\textwidth}
		\centering
		\includegraphics[height=3.6cm]{Figures/Euclidean/1000,50,40,50,0.2,eucl6,uniform_square,1.5.pdf}
		\iflatexml\else\:\raisebox{7pt}{\includegraphics[height=3.2cm]{Figures/distributions/balanced_nearyby.png}}\fi
		\caption{\emph{Balanced and Nearby} distribution.}
		\label{fig:euclidean_1000,50,40,50,0.2,eucl6,uniform_square,1.5}
	\end{subfigure}
	
	\caption{Performance of the different rules for $n = 50$ voters, $T = 50$ rounds, $40$ alternatives in each round, with $f = 1.5$. The length of the bar represents the median across all the trials while the error bars represent the $25$th and $75$th percentile with the numeric text after the bar showing the mean value. The blue pictures in the right-most column illustrates the underlying distribution of voter locations.}
	\label{fig:param_setting_3}
\end{figure}

\begin{figure}
	\centering
	
	\begin{subfigure}{\textwidth}
		\centering
		\includegraphics[height=3.6cm]{Figures/Euclidean/1000,50,40,50,0.2,eucl1,uniform_square,3.pdf}
		\iflatexml\else\:\raisebox{7pt}{\includegraphics[height=3.2cm]{Figures/distributions/restricted.png}}\fi
		\caption{\emph{Restricted} distribution.}
		\label{fig:euclidean_1000,50,40,50,0.2,eucl1,uniform_square,3}
	\end{subfigure}
	
	\vspace{\baselineskip}
	
	\begin{subfigure}{\textwidth}
		\centering
		\includegraphics[height=3.6cm]{Figures/Euclidean/1000,50,40,50,0.2,eucl3,uniform_square,3.pdf}
		\iflatexml\else\:\raisebox{7pt}{\includegraphics[height=3.2cm]{Figures/distributions/many_groups.png}}\fi
		\caption{\emph{Many Groups} distribution.}
		\label{fig:euclidean_1000,50,40,50,0.2,eucl3,uniform_square,3}
	\end{subfigure}
	
	\vspace{\baselineskip}
	
	\begin{subfigure}{\textwidth}
		\centering
		\includegraphics[height=3.6cm]{Figures/Euclidean/1000,50,40,50,0.1,eucl4,uniform_square,3.pdf}
		\iflatexml\else\:\raisebox{7pt}{\includegraphics[height=3.2cm]{Figures/distributions/unbalanced.png}}\fi
		\caption{\emph{Unbalanced} distribution.}
		\label{fig:euclidean_1000,50,40,50,0.1,eucl4,uniform_square,3}
	\end{subfigure}
	
	\vspace{\baselineskip}
	
	\begin{subfigure}{\textwidth}
		\centering
		\includegraphics[height=3.6cm]{Figures/Euclidean/1000,50,40,50,0.2,eucl6,uniform_square,3.pdf}
		\iflatexml\else\:\raisebox{7pt}{\includegraphics[height=3.2cm]{Figures/distributions/balanced_nearyby.png}}\fi
		\caption{\emph{Balanced and Nearby} distribution.}
		\label{fig:euclidean_1000,50,40,50,0.2,eucl6,uniform_square,3}
	\end{subfigure}
	
	\caption{Performance of the different rules for $n = 50$ voters, $T = 50$ rounds, $40$ alternatives in each round, with $f = 3$. The length of the bar represents the median across all the trials while the error bars represent the $25$th and $75$th percentile with the numeric text after the bar showing the mean value. The blue pictures in the right-most column illustrates the underlying distribution of voter locations.}
	\label{fig:param_setting_4}
\end{figure}

Considering each metric, we make the following observations:

\begin{itemize}
	\item \emph{Average Satisfaction}: 
	In regard to this metric, a clear separation emerges between the voting methods across all the distributions. Approval Voting gives the highest average satisfaction, which is expected (it is easy to see that it always selects the decision sequence maximizing this metric). On the other hand, Round Robin performs the worst among the rules for all parameter settings and all distributions. Among the rules aiming for proportional outcomes, we see the following rank order for almost all parameter settings and distributions: PAV $>$ Sequential Phragm\'en $>$ MES $>$ Perpetual Quota $>$ Perpetual Consensus. The differences between adjacent rules in this ordering tend to be small. We see some exceptions to this ordering for the Unbalanced distribution, where Perpetual Quota does better than usual, but is still worse than PAV. The advantage of PAV over other rules is particularly apparent in the Many Groups distributions.
	
	\item \emph{Utility of the 25th Percentile}: 
	Here, we find that Approval Voting consistently performed poorly, often resulting in almost no decisions being approved by the bottom $25\%$ of the voters, leading to a score of close to 0 for this metric under the Restricted and the Many Groups distributions. Round Robin performed better than Approval Voting but still much worse than the other rules.
	
	Among those other rules, PAV generally performed the best. However, in the \emph{Balanced and Nearby} distribution, Perpetual Quota outperforms PAV for several parameter setting. Consensus performs similarly well for this distribution, but Sequential Phragmén and MES perform notably worse. For the other three distributions, the proportional rules all perform similarly on this metric, with Sequential Phragm\'en showing the best values for several parameter settings.
	
	\item \emph{Gini Coefficient}: 
	Across distributions, Approval Voting consistently produces the outcomes with the most inequality between voters, as measured by the Gini coefficient. That means that some voters are happy with many decisions, and others are happy with few decisions. All other rules produce outcomes with a more equal distribution of utility. Among the proportional rules, there is no consistent ranking between rules in their performance on this metric, and their Gini coefficients are often similar. Under the Balanced and Nearby distribution, however, Perpetual Consensus and Perpetual Quota have noticeably lower Gini coefficient. Depending on the parameters and distribution, the Round Robin method performs either slightly worse or comparably to the more sophisticated proportional rules -- recalling that, however, the average utility under this method is lower.
\end{itemize}

\subsection{Political Data}
\label{sec:experiments:political}

In addition to synthetic data, we evaluated the rules on data from U.S. political general elections which we collected from websites of local governments. The resulting dataset is available at \url{https://osf.io/t6p7s/}. In the elections we collected, voters elect candidates to various federal, state, and local political offices and express opinions on yes/no ballot initiatives. Using public anonymized \emph{Cast Vote Records (CVR)} data we can see each voter's votes on all these issues simultaneously.
We collected instances from 15 counties in California and Colorado from 2020 and 2022.  In these elections, voters usually only vote for a single alternative for each issue (rather than using approval). However, where several members needed to be elected to a board, voters are usually allowed to vote for several candidates (but no more votes than there are open positions). We reinterpreted such issues to be about the election of a single candidate, and interpreted the votes as forming an approval set. We did this to make the dataset fit better with our formal setup.
The voting rule actually used for these elections can be seen as equivalent to Approval Voting (i.e., in each round select the alternative with the highest vote count), one of the baseline rules included in our simulations.

Each county is divided into smaller voting districts, giving rise to a large number of decision instances. For example, in Colorado, we collected $169$ instances from $14$ counties. Each instance corresponds to all the voters that were issued a particular ballot type (usually, one instance per district of the county, since the candidates and offices differ between districts). We collected 41 instances from Shasta County, California, (one for each voting district) with each instance on average having 32 issues and 1\,682 voters. 
We ignored instances with more than $10\,000$ voters because the computation of PAV did not finish in reasonable time. 

\begin{figure}[t]
	\centering
	\begin{subfigure}{\textwidth}
		\centering
		\includegraphics[height=3.6cm]{Figures/2022_general_california.pdf}
		\caption{2022 General Elections in Shasta County, California}
	\end{subfigure}
	
	\medskip
	
	\begin{subfigure}{\textwidth}
		\centering
		\includegraphics[height=3.6cm]{Figures/2022_general_colorado.pdf}
		\caption{2022 General Elections in Colorado across 14 counties}
	\end{subfigure}
	\caption{Performance of the rules on the political dataset.
    The length of the bar shows the median across all voting districts, the error bars give the 25th and 75th percentile, and the text gives the mean.}
	\label{fig:political_data}
\end{figure}

The results are shown in  \Cref{fig:political_data} separately for Shasta County, Calif., and the 14 Colorado counties in our dataset. We see that Approval Voting (AV), the method that is actually used to determine winners, gives a higher Gini coefficient and lower utility at the 25th percentile compared to all the other rules. Strikingly, PAV leads to a more equal utility distribution while also having high average utility, with the difference to AV being almost negligible.
This suggests that proportional rules could be used to produce collections of decisions that are more consensual. We note, however, that much additional analysis would be appropriate, to understand the precise behavior of the rules on political instances.

\subsection{Learning Preferences from the Moral Machine}
\label{sec:experiments:learning}

\emph{Virtual democracy} \citep{noothigattu2018voting,freedman2020adapting,mohsin2021making} is a proposal to automate decision making by learning models of preferences of individual users and using predicted preferences as inputs to a voting rule.
This can be particularly useful when preference elicitation costs are high, decisions need to be made in real time, or a very large number of decisions need to be taken. 
A common approach \citep{noothigattu2018voting,wang2019privacy} for virtual democracy is to \emph{average} the learned model parameters across voters to obtain a single model, and effectively treat the predictions of the averaged model as the result of an aggregation method.  
\citet{moralmachinetyranny} have shown that such averaging may lead to non-proportional outcomes that underweight minority preferences. We hope to alleviate these fairness concerns by querying each individual model and then aggregating their outputs using voting rules. In particular, we will use the proportional aggregation methods that we have studied, and study their effect as compared to non-proportional rules and as compared to the traditional Machine Learning approach of learning a single preference model on a combined dataset.

\iflatexml
\begin{figure}
	\centering
	\includegraphics[width=\linewidth]{Figures/screenshot-moral-machine.jpg}
	\caption{\href{https://www.moralmachine.net/}{moralmachine.net}}
	\label{fig:screenshot-moral-machine}
\end{figure}
\else
\begin{wrapstuff}[r,type=figure,width=5cm]
	\centering
	\includegraphics[width=\linewidth]{Figures/screenshot-moral-machine.jpg}
	\vspace{-18pt}
	\caption{\href{https://www.moralmachine.net/}{moralmachine.net}}
	\label{fig:screenshot-moral-machine}
\end{wrapstuff}
\fi
Following the work of \citet{noothigattu2018voting}, to empirically test our work on a dataset that has structured features which allow preference learning, we consider virtual democracy applied to the \emph{moral machine} \citep{awad2018moral}. This experiment is a modern take on ethical ``trolley problems'' and involves decisions that a self-driving car might face. Users were asked to express preferences in instances where a self-driving car must either swerve or stay in the lane, with both choices leading to injuring a different group of people. These choices can be seen as alternatives that can be described by a structured feature vector. 
Several million pairwise comparison responses are available in a public dataset.

\citet{noothigattu2018voting} learn a model predicting the preferences of each respondent. However, there are only 13 pairwise comparisons per respondent, meaning that such individual models have low accuracy. Instead, we partition the respondents by their country, and learn one model for each country. \citet{kim2018computational} found that grouping respondents from each country leads to higher accuracy on the moral machine dataset (perhaps due to cultural similarities). 

We limit ourselves to the 197 countries for which the dataset contains over 100 samples. To learn a preference model for each country, we use the Plackett--Luce (PL) model \citep{plackett1975analysis,duncan1959individual} which is a \emph{random utility model} appropriate for social choice preference learning \citep{azari2012random}. 
As a baseline, we train a \emph{combined model} on respondents from all countries, using $100$ samples from each country for a balanced representation in the training data. 

We produce 100 decision rounds together with 100 alternatives for each round. We specifically generate alternatives that feature high disagreement (since sampling alternatives uniformly at random typically induces very similar preferences across countries). We let each country approve the (predicted) best of these alternatives.
We then use our voting rules to compute decision outcomes using the country models as voters. 
To compare the voting rules to the performance of the combined model, we pick the alternative assigned the highest utility by the combined model as its decision for the round.

\begin{figure}[t]
	\centering
	\includegraphics[height=3.6cm]{Figures/moral_machine/disagreement_balanced_model_100_rounds=100_altPerRound=100_approving_top_1_Trials=1000.pdf}
\caption{Performance of the different rules on the Moral Machine Dataset, based on high-disagreement alternatives, for $T = 100$  rounds with each round having $100$ alternatives and each voter approving their top $1$ alternative.}
\label{fig:moral_machine_example}
\end{figure}

We present our results in \Cref{fig:moral_machine_example} (with results using other parameter settings shown in \Cref{app:moral-machine-results}). We find it striking how Approval Voting (AV) and the Combined Model attain almost identical values on each metric, and how these are quite different from the values obtained by the proportional rules. Indeed, in the experiment, AV and the Combined Model choose the same decision in 84\% of rounds, while both agree with the five proportional rules less frequently. Notably, the five proportional rules all feature a much smaller Gini coefficient and a higher satisfaction at the 25th percentile. 
The similar performance between AV and the Combined Model suggests that the Combined Model exhibits a bias towards plurality and majority opinions. This contradicts a possible hope one might have had that, because the Combined Model is trained based on an equal number of samples from each country, it will ``merge'' their views roughly proportionally.

We see the results from this small experiment as a potential starting point for a larger research program that studies how a global preference model trained on preferences of diverse agents makes decisions compared to aggregation rules that explicitly take into account each individual's preferences. Note that the results do not indicate such models can be used in applications, or to automate moral reasoning, which would require more rigorous testing and ethical considerations. 

\section{Discussion and Future Work}
\label{sec:conclusion}

\paragraph{Extensions}
Our model can be extended to make it compatible with more real-world applications. Examples include allowing weighting issues by importance \citep{page2020electingexecutive}, to allow for dependencies between issues \citep{propinterdependentissues}, or to allow voters to specify utilities or to rank alternatives instead of approvals \citep{peters2021proportional}. The latter may be important as some issues may be more critical to certain minority groups. 

\paragraph{Open Problems}
We leave some theoretical open problems for future work, notably whether an online rule can satisfy Weak EJR -- a negative result may be easier to find in this setting than for multi-winner voting.
Also open is whether MES can always be completed to satisfy PJR when all approval sets are non-empty. More conceptually, are there stronger versions of (Weak) EJR for this setting that are still satisfiable? 
Can the concept of proportionality degree \citep{skowron2021proportionality} be adapted to our setting? What about FJR (Full Justified Representation), EJR+, or the core? \citet{elkind2024verifying} recently defined an analog of the Greedy Cohesive Rule \citep{peters2021proportional} for the sequential setting, which would be a natural starting point for studying FJR.

\paragraph{Strategic Issues}
\citet{peters2018proportionality} proved an impossibility theorem showing that no proportional multi-winner voting rule can be \emph{strategyproof}: voters may be able to get a better outcome by misrepresenting their preferences. For the special case of ``approval-based apportionment'' \citep{approvalbasedapportionment,airiau2023portioning}, the impossibility still holds \citep{delemazure2022strategyproofness,freeriding2023}. As this is also a special case of our model (when the set of alternatives and voter preferences are the same in each round), it follows that no proportional rule in our setting can be strategyproof. Note that the presentation order of issues can also affect the results of online rules, offering another potential avenue for manipulation.

\section*{Acknowledgments}
We thank J\'er\^ome Lang for his support and advice throughout this research project, and for proposing key ideas and directions. We thank Jan Maly and Jannik Peters for helpful feedback. This work was funded in part by the French government under management of Agence Nationale de la Recherche as part of the ``Investissements d'avenir'' program, reference ANR-19-P3IA-0001 (PRAIRIE 3IA Institute). 

\appendix

\section{More Axioms from the Literature}
\label{app:moreaxioms}

In this section, we consider other proportionality axioms that have been introduced in prior work by \citet{lacknermaly2023}, and discuss implication relationships among them (see \Cref{fig:hasse} in the main body). We also discuss Pareto efficiency.

\subsection{Quota for Closed Groups}
\label{app:closed-groups}

\citet{lacknermaly2023} proposed some relatively weak notions of proportionality in the perpetual voting framework. One of their axioms applies only to ``\emph{simple profiles}'' (where the set of alternatives and the approval sets are the same in each round, with every voter approving just a single alternative). Then ``simple proportionality'' requires that the number of appearances of alternatives in the output be proportional to the number of voters approving them. This axiom is easily seen to be implied by all the other proportionality axioms we consider. Thus, we focus on their axioms based on profiles in which there are ``\emph{closed groups}'': groups of voters that submit identical approval sets in every round and such that any voter outside the group approves a disjoint set of alternatives.

\begin{definition}[Closed Group]
\label{def:closed_group}
    A group $S \subseteq N$ is \emph{closed} if for every round $j \in\{1, \ldots, T\}$, it holds that $A^i_j = A^{i'}_j$ for all $i, i' \in S$ and that $A^i_j \cap A^{i'}_j = \emptyset$ for all $i \in S$ and $i' \in N \setminus S$.
\end{definition}
For example, in \Cref{fig:axiom-example}, the group $\{1,2\}$ is closed but the group $\{3,4\}$ is not closed because voters 3 and 4 do not have the same approval set in rounds 7 and 8. In \Cref{fig:MESfails}, the groups $\{1\}$ and $\{2, 3\}$ are closed. The set of all voters is always closed, but singleton sets of voters are not necessarily closed due to the disjointness condition.

\citet{lacknermaly2023} defined axioms which bound the number of rounds in which the decision should be one of the alternatives approved by a closed group.

\begin{definition}[Perpetual Lower Quota for Closed Groups]
\label{def:plqc}
We say that a decision sequence $D = (d_1, \dots, d_T)$ satisfies \emph{perpetual lower quota for closed groups}\footnote{\citet{lacknermaly2023} call this axiom just ``lower quota for closed groups''. We added the word ``perpetual'' to the name since we will define below a weaker version suitable for the offline setting.} if, for every $k$, $1 \le k \le T$, writing $D_{:k} = (d_1, \dots, d_k)$ for  the prefix of $D$ of length $k$, it holds for every voter $i \in N$ who is part of a closed group $S$ that $U^i_{D_{:k}} \ge \lfloor k \cdot \smash{\frac{\left|S\right|}{n}} \rfloor$.
\end{definition}

\citet{lacknermaly2023} also proposed perpetual \emph{upper quota} for closed groups, which is defined in the same way, except that the inequality at the end is ``$U^i_{D_{:k}} \le \lceil k \cdot \smash{\frac{\left|S\right|}{n}} \rceil$''. Because this is a different kind of property (avoiding over-representation rather than guaranteeing representation), and because none of our rules satisfy it, we will not study it in more detail.

Observe that  perpetual lower quota for closed groups contains a requirement for all prefixes of the decision sequence $D$. This makes the axiom ``perpetual''. This is natural and desirable if the decision-making process is online. However, in an offline setting, it is less natural, and so we can drop the perpetuality and define a weaker axiom that is more broadly applicable. 

\begin{definition}[Lower Quota for Closed Groups]
\label{def:lqc}
A decision sequence $D$ satisfies \emph{lower quota for closed groups} if for every closed group $S$, we have for every voter $i \in S$ that  $U^i_D \geq \lfloor T \cdot |S| / n \rfloor$.
\end{definition}

For an example, refer to \Cref{fig:MESfails} where lower quota for closed groups would demand that each voter in $S = \{1\}$ should approve at least $\lfloor 6 \cdot 1 / 3 \rfloor = 2$ decisions of the outcome, while each voter in $S' = \{2, 3\}$ should approve at least $\lfloor 6 \cdot 2 / 3 \rfloor = 4$ decisions of the outcome.

We now show that lower quota for closed groups is implied by Weak PJR, and therefore also by the stronger properties PJR and (Weak) EJR.

\begin{theorem}
    \label{thm:pjr_implies_lqc}
    Weak PJR implies lower quota for closed groups.
\end{theorem}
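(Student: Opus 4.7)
The plan is a direct reduction from lower quota for a closed group $S$ to the Weak PJR guarantee applied to that same $S$. Given a decision sequence $D$ satisfying Weak PJR and a closed group of size $s = |S|$, I would set $\ell = \lfloor T \cdot s / n \rfloor$, so that automatically $s \geq \ell \cdot n/T$ and the size hypothesis of Weak PJR is met. What remains is to verify its other hypothesis, namely that $S$ agrees in every round.

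This second hypothesis follows cleanly from the closed-group definition: since $A_j^i = A_j^{i'}$ for all $i, i' \in S$ and all rounds $j$, the intersection $\bigcap_{i \in S} A_j^i$ equals the single shared approval set, which is non-empty in every round under the standard assumption that approval sets are non-empty. Hence Weak PJR applies to $S$ with this $\ell$ and yields at least $\ell$ rounds in which $d_j \in \bigcup_{i \in S} A_j^i$. The finishing step is to observe that, again by the identical-approval-sets property, the union $\bigcup_{i \in S} A_j^i$ collapses to the common approval set, so Weak PJR's existential conclusion ``some voter in $S$ approves $d_j$'' is in fact equivalent to the universal conclusion ``every voter in $S$ approves $d_j$''. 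This promotes Weak PJR's group-level satisfaction into the individual-level guarantee $U^i_D \geq \ell = \lfloor T \cdot s/n \rfloor$ that lower quota for closed groups demands.

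The only subtle point, which I would flag in the write-up, is the case of a closed group whose common approval set is empty in some rounds: there $S$ fails to agree in those rounds, Weak PJR gives no guarantee via this argument, and in fact the lower quota may be unachievable in principle. In the standard non-empty-approval setting in which \citet{lacknermaly2023} originally stated the axiom, this case does not arise, and the reduction above goes through without modification. If one wanted to handle empty approval sets as well, the honest statement would need to restrict either to rounds where $S$'s common approval set is non-empty or to instances where the quota is feasible.
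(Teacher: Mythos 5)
Your proof is correct and follows essentially the same argument as the paper's: apply Weak PJR to the closed group $S$ with $\ell = \lfloor T\,|S|/n\rfloor$, note that agreement in every round comes from the identical approval sets, and use closedness again to upgrade ``some member approves'' to ``every member approves''. Your caveat about rounds where the common approval set is empty is a legitimate subtlety that the paper's proof silently glosses over (it simply asserts that a closed group agrees in every round), so flagging it is a point in your favor rather than a defect.
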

\begin{proof}
Let $D$ be a decision sequence that satisfies Weak PJR. 
Let $S \subseteq N$ be a closed group. Write $\ell = \lfloor T \cdot |S| / n \rfloor$. Then $|S| = (T \cdot |S| / n) \cdot \frac{n}{T} \ge \ell \cdot \frac{n}{T}$, and $S$ agrees in every round since it is a closed group. Thus, Weak PJR implies that there are $\ell$ rounds where at least one member of $S$ approves the decision of $D$. But because $S$ is closed, this means that in those $\ell$ rounds, \emph{every} member of $S$ approves the decision. Hence $D$ satisfies lower quota for closed groups.
\end{proof}

Similarly, an \emph{online} rule that satisfies PJR will thereby satisfy perpetual lower quota for closed groups.

\subsection{Priceability}
\label{app:priceability}

\citet{PeSk20} introduced an axiom called \emph{priceabilty} to ensure proportionality and (approximately) equal influence among voters in the setting of multi-winner voting. The essence of priceability lies in assigning each voter an equal budget of virtual money, which they can only spend on alternatives they approve. Thinking of the virtual money as voting power, the axiom ensures that each voter has approximately equal influence over the outcome. \citet{lacknermaly2023} extended priceability to the setting of perpetual voting and showed that Sequential Phragm\'en (which they call Perpetual Phragm\'en) satisfies perpetual priceability. We state their axiom below and discuss its relation to our axioms.

Priceability is defined based on the idea that it costs 1 unit of money to pay for the decision in 1 round. Then a ``price system'' is used to split this cost between voters that approve this decision.

\begin{definition}[Price System]
\label{def:price_system}
Suppose we are looking at the first $k$ rounds.%
\footnote{We use a variable $k$ instead of the total number of rounds $T$ since the definition of perpetual priceability concerns decision sequences that are shorter than the full time horizon $T$.}
A \emph{price system} is a pair $(B,\{p_j\}_{j \leq k})$ where $B\ge 0$ is the \emph{budget} that each voter starts with, and $p_j : N \times C_j \to [0,1]$, for each $j \in \{1, \ldots, k\}$, is a \emph{payment function} where $p_j(i, c)$ will indicate how much agent $i$ pays for candidate $c$ in round $j$.
We say that a decision sequence $D = (d_1, \ldots, d_k)$ of length $k$ is \emph{supported by the price system} $(B,\{p_j\}_{j \leq k})$ if
\begin{itemize}[leftmargin=30pt]
    \item [(P1)] $p_j(i, c)=0$ whenever $c \notin A^i_j$, i.e., no voter pays for non-approved alternatives.
    \item [(P2)] $\sum_{j=1}^k \sum_{c \in C_j} p_j(i, c) \le B$, i.e., voters cannot spend more than their budget.
    \item [(P3)] $\sum_{i \in N} p_j(i, d_j)=1$ for $j \in\{1, \ldots, k\}$, i.e., each decision of $D$ gets a total payment of $1$.
    \item [(P4)] $\sum_{i \in N} p_j(i, d)=0$ for $j \in\{1, \ldots, k\}$ and $d \neq d_j$, i.e., decisions not included in $D$ do not receive any payments.
\end{itemize}
\end{definition}

To define priceability for perpetual voting, \citet{lacknermaly2023} propose the concept of a \emph{minimal} price system which intuitively requires that no money is wasted at any time step.

\begin{definition} [Minimal Price System, \citealp{lacknermaly2023}]
Consider a decision sequence $D = (d_1, \ldots, d_k)$ of length $k$ supported by a price system $(B,\{p_j\}_{j \leq k})$ satisfying conditions (P1)-(P4) of \Cref{def:price_system}.
If $k = 0$ (so that $D = ()$), we say that the price system is \emph{minimal} if $B = 0$.
If $k \ge 1$, we say that the price system is minimal if it satisfies the following two conditions:
\begin{itemize}[leftmargin=30pt]
    \item [(P5)] there exists a minimal price system $\big(B^*,\{p_j^*\}_{j \le k-1}\big)$ that supports $(d_1, \ldots, d_{k-1})$
    \item [(P6)] there are no $B^{\prime}, d_k^{\prime}$ and $p_k^{\prime}$ such that $B^* \leq B^{\prime}<B$ and $\big(B^{\prime},\{p_j\}_{j \leq k-1} \cup\left\{p_k^{\prime}\right\}\big)$ is a price system supporting $(d_1, \ldots, d_{k-1}, d_k^{\prime})$.
\end{itemize}
\end{definition}

We can now define perpetual priceability.
\begin{definition} [Perpetual Priceability, \citealp{lacknermaly2023}]
A decision sequence $D$ satisfies perpetual priceability if there exists a minimal price system that supports $D$.
\end{definition}

With the above definition in place, we will now show that perpetual priceability implies Weak PJR. (Similarly, in multi-winner voting, priceability implies Weak PJR \citep{PeSk20}.) The proof below follows the same style as a proof by \citet{lacknermaly2023} that perpetual priceability implies perpetual lower quota for closed groups (\Cref{def:plqc}). 

\begin{theorem}
    \label{thm:priceability_implies_PJR}
    For decision sequences of length $T$, Perpetual Priceability implies Weak PJR.
\end{theorem}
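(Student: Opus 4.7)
The plan is to argue by contradiction and exploit condition (P6) of minimality at a carefully chosen round. Suppose that $D$ is supported by a minimal price system $(B,\{p_j\}_{j\le T})$ but that some group $S\subseteq N$ with $|S|\ge \ell\cdot n/T$ agrees in every round yet fewer than $\ell$ decisions of $D$ are approved by any member of $S$. I will construct a supporting price system for a length-$t$ prefix that uses strictly smaller budget, contradicting (P6).

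The first step is a double-counting argument on payments. By (P1) voters in $S$ only pay for decisions they approve, and by (P3) each round receives total payment exactly $1$, so the total amount paid by $S$ over all $T$ rounds is at most the number of rounds in which some voter in $S$ approves the decision, which by assumption is at most $\ell-1$. Since the total of all payments equals $T$, voters in $N\setminus S$ must together pay at least $T-\ell+1$, and (P2) then gives $(n-|S|)\,B \ge T-\ell+1$. Combined with $|S|\ge \ell\cdot n/T$, a short manipulation yields $B > T/n$, i.e.\ the budget strictly exceeds the average amount spent per voter.

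The second step is the swap. Let $t$ be the first round in which no voter in $S$ approves $d_t$, and let $c_t\in\bigcap_{i\in S}A^i_t$ be an alternative on which $S$ agrees. Keep the earlier payments $\{p_j\}_{j\le t-1}$ unchanged; write $P_i^{t-1}=\sum_{j\le t-1}\sum_c p_j(i,c)$. I will define a new payment $p_t'$ in which only voters in $S$ contribute and purchase $c_t$. Because rounds $1,\ldots,t-1$ lie in the ``$S$-happy'' set, $\sum_{i\in S}P_i^{t-1}\le P_S\le \ell-1$, so a budget $B'$ supports such a payment whenever $|S|\,B'-\sum_{i\in S}P_i^{t-1}\ge 1$, i.e.\ whenever $B'\ge \frac{1+\sum_{i\in S}P_i^{t-1}}{|S|}\le \frac{\ell}{|S|}\le T/n$. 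In particular any $B'$ with $T/n\le B'<B$ (which is nonempty by Step~1) leaves $S$ with enough collective slack to pay $1$ for $c_t$, and the payments can be distributed so that $p_t'(i,c_t)\le B'-P_i^{t-1}$ for each $i\in S$.

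Together with the choice $B'=\max(T/n,\max_i P_i^{t-1},B^*_{t-1})$, these observations will give a valid price system $(B',\{p_j\}_{j\le t-1}\cup\{p_t'\})$ supporting $(d_1,\ldots,d_{t-1},c_t)$ with $B'<B$, directly violating (P6) at round $t$. The main obstacle is verifying $B'<B$ under the two side-constraints $B'\ge\max_i P_i^{t-1}$ and $B'\ge B^*_{t-1}$: neither is automatically strict, since some voter outside $S$ could conceivably be already fully spent after $t-1$ rounds, or the prefix minimum $B^*_{t-1}$ could equal $B$. The way to overcome this is to use the strict inequality $B>T/n$ together with the excess $(n-|S|)B - (T-\ell+1)>0$ from Step~1 to guarantee genuine slack in the global budget, and if necessary to choose $t$ as the \emph{last} round in which the original price system uses more budget than needed, where any ``saturation'' of another voter or of $B^*_{t-1}$ cannot occur without contradicting the prefix minimality assumed by (P5) at earlier steps.
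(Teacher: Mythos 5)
Your overall strategy (contradict condition (P6) of minimality by exhibiting a cheaper price system that buys a commonly approved alternative for $S$) is the same as the paper's, and your Step 1 is a nice variant: the double count $(n-|S|)B \ge T-\ell+1$ combined with $\ell \le |S|T/n$ gives $B \ge T/n + 1/(n-|S|) > T/n$ in one stroke, where the paper instead splits into the cases $B = T/n$ (where Weak PJR is verified directly) and $B > T/n$. That part is correct.

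The gap is in Step 2, in the choice of the round at which (P6) is invoked. Condition (P6) for the prefix of length $t$ only forbids supporting price systems with budget $B' < B_t$, where $B_t$ is the \emph{minimal budget for the length-$t$ prefix}, not $B' < B = B_T$. You take $t$ to be the first round in which no member of $S$ approves $d_t$; at such a round $B_t$ can be far below $B$ (indeed below $T/n$), so producing a system with $T/n \le B' < B$ contradicts nothing. You flag this obstacle yourself, but the patch you sketch (``the last round in which the original price system uses more budget than needed'') is not the right selection criterion and is not shown to work. The paper's choice is the largest $r^*$ with $B_{r^*} < B$, so that $B_{r^*+1} = B$ and any $B' < B$ does violate (P6) at round $r^*+1$; this round need not be one where $S$ is unhappy, and the argument instead relies on $S$ agreeing in \emph{every} round (which is exactly why the statement is Weak PJR) together with the fact that $S$ retains total slack $1+\epsilon$ at the end, so that lowering the budget to $\max(B-\epsilon/|S|, B_{r^*})$ still leaves $S$ a full unit to spend. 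A second, related problem: if you reuse the global payments $\{p_j\}_{j\le t-1}$, then (P2) forces $B' \ge \max_{i\in N}\sum_{j\le t-1}\sum_c p_j(i,c)$, and some voter (inside or outside $S$) may already have spent an amount arbitrarily close to $B$ by round $t-1$, killing $B' < B$; the paper sidesteps this by reusing the payments of the minimal \emph{prefix} system, under which every voter has spent at most $B_{r^*} \le B'$. Without the correct choice of round and of which payment functions to carry over, the contradiction with (P6) does not go through.
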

\begin{proof}
Let $D = (d_1, d_2, \ldots, d_T)$ be a decision sequence of length $T$ that satisfies perpetual priceability. Assume for a contradiction that $D$ violates Weak PJR. Thus, there is $\ell \in \mathbb N$ and a group $S \subseteq N$ which agrees in all the $T$ rounds, has size $|S| = s \ge \ell \cdot  \frac{n}{T}$, but there are fewer than $\ell$ rounds in which at least one member of $S$ approves the decision in $D$. 

Let $\big(B,\{p_j\}_{j \leq T}\big)$ be a minimal price system that supports $D$. Note that $B \geq T / n$, as otherwise it would not be possible to pay for $T$ decisions. 
Assume first that $B = T / n$. Then no budget is left after round $T$, and thus every voter has spent their entire budget $B$. Thus, the total payment of voters in $S$ is $|S| \cdot \frac{T}{n} \ge \ell$. 
Since each decision is made using 1 unit, in total $S$ must have paid for the decision in at least $\ell$ rounds. As a voter only pays for an alternative they approve, it follows that in each of those $\ell$ rounds, at least one voter in $S$ approves the decision. Hence, Weak PJR is satisfied. 

So assume $B > \frac{T}{n}$. Because at most $\ell - 1$  decisions in $D$ are approved by some member of $S$, the members of $S$ have spent at most $\ell - 1$ units until after round $T$. Since they start out with a total budget of $sB$, we see that the voters in $S$ have a remaining budget of at least
\begin{align}
sB - (\ell - 1)
\ge
sB - (s \tfrac Tn - 1)
> s \tfrac Tn - (s \tfrac Tn - 1)
\ge 1
\label{eq:priceability_remaining_budget}
\end{align}
Hence there is an $\epsilon > 0$ such that the remaining budget of the voters in $S$ is $1+\epsilon$. By the definition of minimal price systems, for every $1 \le r \le T$, there is a price system $ (B_{r-1}, \{p_j^{(r-1)} \}_{j \leq r-1} )$ that witnesses the minimality of the price system $ (B_r, \{p_j^{(r)} \}_{j \leq r} )$, where $ (B_T, \{p_j^{(T)} \}_{j \le T} )= (B, \{p_j \}_{j \le T} )$.

We claim that $B_1<B$. Observe that in the first round $B_1 \le 1 / s$ must hold, as with a budget of $1 / s$ the voters in $S$ can together afford one of the alternatives that they jointly approve; thus $B_1 > 1/s$ would contradict minimality in round 1. Moreover, we can assume that $\ell \geq 1$ as otherwise Weak PJR is trivially satisfied. Thus $s \ge \frac n T$, and hence $n / s \le T$. Also, by assumption, $B>T / n$. Putting all of this together, we get
\[
B_1 \le \tfrac{1}{ s}=\tfrac1n \cdot \tfrac{n}{s} \le \tfrac{T}{n}<B .
\]

Now let $r^*$ be the largest index $r$ for which $B_r < B$ (recalling that at the end, $B_T = B$). We claim that there is a $B^{\prime}$ with $B_{r^*} \leq B^{\prime}<B_{r^*+1} = B$ such that there is an alternative $a^{\prime} \in C_{r^* + 1}$ and $p_{r^{*}+1}^{\prime}$ such that $ (B^{\prime}, \{p_j^{(r^*)} \}_{j \leq r^*} \cup \{p_{r^* + 1}^{\prime} \}) $ is a price system supporting the decision sequence $ (d_1, \ldots, d_{r^*}, a^{\prime} )$. This would be a contradiction to the minimality of $ (B_{r^* + 1}, \{p_j^{(r^* + 1)} \}_{j \leq r^*+1} )$.

Let $a^{\prime}$ be an alternative approved by everyone in $S$ in round $r^*+1$. Furthermore, let 
\[
 B^{\prime}= \max  (B - \epsilon / s, B_{r^*} ).
\]
Observe that $B_{r^*} \leq B^{\prime} < B$. 
Now, we can define $p_{r^* + 1}^{\prime}$ such that $\sum_{i \in S} p_{r^*+1}^{\prime} (i, a^{\prime} )=1$ and $p_{r^*+1}^{\prime} (i^{\prime}, c )=0$ whenever $i^{\prime} \notin S$ or $c \neq a^{\prime}$. This is possible because the voters in $S$ have at least a budget of $1$ in round $r^*+1$ (from  \eqref{eq:priceability_remaining_budget}, recalling that $B_{r^* + 1} = B$).
Hence, this is a price system that supports $ (d_1, \ldots, d_{T-1}, a^{\prime} )$ which contradicts the minimality of $ (B, \{p_j \}_{j \leq T} )$.
\end{proof}

As perpetual priceability implies Weak PJR, one may wonder whether it implies stronger axioms. Since Sequential Phragm\'en satisfies perpetual priceability \citep{lacknermaly2023} but fails Weak EJR (\Cref{thm:phragmen_fails_EJR}), it follows that perpetual priceability does not imply Weak EJR. The next result shows that it does not imply PJR either.

\begin{theorem}\label{thm:priceability_does_not_imply_some_rounds}
    Perpetual Priceability does not imply PJR.
\end{theorem}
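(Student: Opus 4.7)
The plan is to construct a small counterexample: a decision instance, a decision sequence $D$, and a minimal price system supporting $D$, such that $D$ violates PJR. The key is to exploit a subtlety in \Cref{def:price_system}: condition (P5) requires only that \emph{some} minimal price system exists for the prefix $(d_1, \ldots, d_{k-1})$; it does not require the current price system's restriction to this prefix to itself be minimal. Condition (P6), by contrast, checks extensions of the \emph{current} past payments $\{p_j\}_{j \le k-1}$. This asymmetry lets us use ``wasteful'' (non-equalizing) early payments to constrain future extensions and lock in an otherwise unsupportable decision sequence.

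The counterexample uses $n = 4$ voters and $T = 4$ rounds. Voters 1 and 2 approve $\{c_1\}$ in round 1, $\{c_2\}$ in round 2, and have empty approval sets in rounds 3 and 4. Voters 3 and 4 approve $\{b\}$ in every round. The target sequence is $D = (b, b, b, b)$. PJR fails for the coalition $S = \{1, 2\}$: it agrees in $k = 2$ rounds (via $c_1$ and $c_2$), and since $|S| = 2 \ge 1 \cdot \frac{n}{k} = 2$, PJR demands $\ell = 1$ round in which the decision is approved by some member of $S$; but members of $S$ only approve $c_1$ or $c_2$, never $b$, so zero decisions in $D$ are approved by $S$.

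For perpetual priceability, I construct a price system with budget $B = 2$ and wasteful payments $p_1(3, b) = 1$, $p_2(4, b) = 1$, $p_3(3, b) = p_3(4, b) = \tfrac{1}{2}$, and $p_4(3, b) = p_4(4, b) = \tfrac{1}{2}$, with all other entries zero. Conditions (P1)--(P4) are immediate. Minimality is verified inductively. The critical step is the extension from $(b)$ to $(b, b)$: with the wasteful $p_1$, voter 3's load is already $1$ after round 1, so any extension $(b, d_2')$ requires budget $\ge 1$. For $d_2' = b$ the minimum is $1$ (voter 4 pays the new unit); for $d_2' = c_2$ voters 1 or 2 would pay, but voter 3's lingering load of $1$ still dominates, forcing budget $\ge 1$. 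Thus the wasteful price system is minimal for $(b, b)$, even though the prefix $(b)$ itself has a strictly smaller minimal budget of $\tfrac{1}{2}$. At rounds 3 and 4, the empty approval sets of voters 1 and 2 rule out all non-$b$ extensions, yielding minimum extension budgets $1.5$ and $2$.

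The main obstacle is conceptual: recognizing that the minimal-price-system definition does not require the prefix payments themselves to be minimal, a flexibility that enables ``wasteful'' early splits. Once recognized, the minimality verification reduces to straightforward case analysis at each round, using the saturated loads on voters 3 and 4 to defeat every would-be cheaper switch to $c_1$ or $c_2$.
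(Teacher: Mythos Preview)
Your construction is correct under the literal reading of the recursive definition of minimal price systems, in which (P5) only requires the \emph{existence} of some minimal system for the prefix (with payments $p_j^*$ that may differ from the current $p_j$), while (P6) tests extensions of the \emph{current} $\{p_j\}_{j \le k-1}$. The paper's phrasing supports exactly this reading, and with it your budget analysis at each level ($B^* = \tfrac12, 1, \tfrac32, 2$) and your PJR violation for $S = \{1,2\}$ both go through.

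The paper, however, takes a different and more robust route. It uses a $6$-voter, $4$-round instance in which all six voters have pairwise disjoint singleton approval sets in round~1, forcing $B_1 = 1$ no matter which alternative is chosen. In rounds 2--4 the voters split into two blocks of three (approving $g$ and $h$ respectively); picking $D = (a, h, h, h)$ lets each supporter of $h$ pay $\tfrac13$ per round, so the budget stays at $1$ throughout. The PJR witness is $S = \{v_2, v_3\}$, agreeing on $g$ in the last three rounds. Crucially, no wasteful payments are needed: every prefix of the paper's price system is \emph{itself} minimal, so the argument does not depend on the (P5)/(P6) asymmetry you exploit. By contrast, your example hinges entirely on that asymmetry---under the alternative reading where (P5) requires the current system's restriction to be minimal, your prefix $(b)$ would force the balanced payment $p_1(3,b)=p_1(4,b)=\tfrac12$, after which (P6) at level~2 fails via $d_2' = c_2$, and $(b,b,b,b)$ would not be perpetually priceable. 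The paper's example is thus insensitive to how one resolves this definitional subtlety, and it also avoids empty approval sets.
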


\begin{figure}
	\centering
{\setlength{\tabcolsep}{2.5pt}
		\begin{tabular}{lcccccccc}
			\toprule
			Round \hspace{6pt} & 1 & 2 & 3 & 4 \\
			\midrule
			Voter 1 & $\{\alternativea\}$ & $\{\alternativeg\}$ & $\{\alternativeg\}$ & $\{\alternativeg\}$ \\
			Voter 2 & $\{\alternativeb\}$ & $\{\alternativeg\}$ & $\{\alternativeg\}$ & $\{\alternativeg\}$ \\
			Voter 3 & $\{\alternativec\}$ & $\{\alternativeg\}$ & $\{\alternativeg\}$ & $\{\alternativeg\}$ \\
			Voter 4 & $\{\alternatived\}$ & $\{\alternativeh\}$ & $\{\alternativeh\}$ & $\{\alternativeh\}$ \\
			Voter 5 & $\{\alternativee\}$ & $\{\alternativeh\}$ & $\{\alternativeh\}$ & $\{\alternativeh\}$ \\
			Voter 6 & $\{\alternativef\}$ & $\{\alternativeh\}$ & $\{\alternativeh\}$ & $\{\alternativeh\}$ \\
			\bottomrule
	\end{tabular}}
	\caption{Example instance where a rule may produce a perpetually priceable outcome which may not satisfy JR.}
    \label{fig:priceability_does_not_imply_some_rounds}
\end{figure}

\begin{proof}
Consider the decision instance shown in \Cref{fig:priceability_does_not_imply_some_rounds} with $T = 4$ rounds and $N = \{v_1, \ldots, v_6\}$ with each voter having a single-alternative approval set in each round. 
Consider the decision sequence $D = (\alternativea, \alternativeh, \alternativeh, \alternativeh)$. We show that it satisfies perpetual priceability but fails PJR.

In the first round, everyone has a disjoint approval set, so no matter which alternative is chosen in this round, one voter needs to pay its entire cost. Thus, any price system supporting the decision in the first round must have $B_1 = 1$. Setting $p_1(v_1, \alternativea) = 1$ and $p_1(i, c) = 0$ for all $i \neq v_1$ and all $c \neq \alternativea$, we obtain a minimal price system $(B_1, p_1)$ supporting $D' = (\alternativea)$.

Thus, everyone receives the budget of $1$ unit after round $1$ and only $v_1$ has spent all of its money while other voters have their entire budget remaining. In round $2$, $\alternativeh$ can be bought by each of its supporters paying $\frac13$ unit as $3\cdot \frac13 = p = 1$. Similarly, $\alternativeh$ can be brought in rounds $3$ and $4$ as well without requiring to increase the budget of anyone. This defines payment functions $p_2$, $p_3$, and $p_4$ for those rounds, and setting $B = 1$, we obtain a minimal price system $(B, \{p_j\}_{j \le 4})$ supporting $D$. Thus, $D$ satisfies perpetual priceability. 

However, $D$ fails PJR: Consider the coalition $S = \{v_2, v_3\}$ which agrees in the last $k = 3$ rounds. For $\ell = 1$, we have $|S| \ge \ell \cdot \frac{n}{k} = \frac63 = 2$. Thus, PJR requires that at least $1$ decision in the outcome must be approved by $S$. However, no one in $S$ approves either $\alternativea$ or $\alternativeh$, and thus $D$ fails PJR. Hence, perpetual priceability does not imply PJR. 
\end{proof}

\begin{remark}
	In the example in \Cref{thm:priceability_does_not_imply_some_rounds}, we used a requirement of $\ell = 1$ rounds, so the proof of \Cref{thm:priceability_does_not_imply_some_rounds} shows that perpetual priceability does not even imply JR.
\end{remark}

Note that perpetual priceability implies PJR for coalitions that agree ``initially'', that is they agree in an initial set of rounds $R^* = \{1, \dots, k\}$ (since by definition of perpetual priceability, the decision sequence restricted to the first $k$ decisions is itself perpetually priceable, and therefore this restricted sequence satisfies Weak PJR, which is enough to imply PJR with initial agreement). Note that the example in \Cref{thm:priceability_does_not_imply_some_rounds} does not feature initial agreement.

In future work, it would be interesting to define a notion of priceability that applies to our setting without an implicit assumption of being online (that is, without necessarily requiring that all prefixes of the decision sequence are themselves priceable).

\subsection{Pareto Efficiency}
\label{app:pareto}

Pareto efficiency forbids that we use our available resources suboptimally. Formally, an outcome is Pareto efficient when it is impossible to change the outcome so as to make some individual better off without making anyone else worse off.

\begin{definition}[Pareto Efficiency]
    \label{def:pareto_efficiency}
    A decision sequence $D$ is \emph{Pareto efficient} if there does not exist another decision sequence $D' \in C_1 \times C_2 \times \cdots \times C_T$ such that for all $i \in N$, $U^i_{D'} \ge U^i_D$ and there is some voter $i \in N$ such that $U^i_{D'} > U^i_D$.
\end{definition}

Unlike our other axioms, Pareto efficiency is not about proportionality (since a Pareto efficient outcome can be very non-proportional). However, it is interesting to see if any of our rules satisfy Pareto efficiency, and therefore combine an efficiency guarantee with proportionality. Since it maximizes a function of voter utilities, PAV satisfies Pareto efficiency, while also satisfying our strongest proportionality notions. On the other hand, MES and Sequential Phragm\'en fail Pareto efficiency. This makes PAV a compelling rule for multi-issue collective decision making, when an NP-hard offline method is acceptable. (The outcome of Local-Search PAV need not be Pareto-efficient.)

\begin{theorem}
    \label{thm:PAV_satisfies_PE}
    The decision sequence produced by PAV is Pareto efficient. 
\end{theorem}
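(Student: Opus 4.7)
The plan is to argue directly from the definition that the PAV objective function is strictly monotone in each voter's utility, so any Pareto improvement would strictly increase the PAV-score and thus contradict optimality.

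First I would suppose for contradiction that the PAV-selected decision sequence $D$ is not Pareto efficient, so there is some $D' \in C_1 \times \cdots \times C_T$ with $U^i_{D'} \ge U^i_D$ for all $i \in N$ and $U^{i^*}_{D'} > U^{i^*}_D$ for some $i^* \in N$. Next I would observe that the per-voter contribution $H(k) = 1 + \tfrac12 + \cdots + \tfrac1k$ to the PAV-score is strictly increasing in $k$, since $H(k+1) - H(k) = \tfrac{1}{k+1} > 0$. Hence $H(U^i_{D'}) \ge H(U^i_D)$ for every $i$, and $H(U^{i^*}_{D'}) > H(U^{i^*}_D)$ for the distinguished voter $i^*$.

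Summing over all voters then gives $\operatorname{PAV-score}(D') > \operatorname{PAV-score}(D)$, contradicting the fact that $D$ was chosen by PAV to maximize the PAV-score. Therefore no such $D'$ exists, and $D$ is Pareto efficient.

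I do not expect any serious obstacle: the argument is essentially immediate from the fact that PAV maximizes a sum of strictly increasing functions of the individual utilities. The only subtlety worth noting is that this argument does not carry over to Local-Search PAV, since a local optimum of the PAV-score need not be a global optimum, and Pareto improvements may involve changing decisions in multiple rounds simultaneously, which is outside the neighborhood explored by single-round swaps.
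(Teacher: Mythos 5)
Your proof is correct and follows essentially the same argument as the paper: assume a Pareto improvement $D'$ exists, note that the harmonic per-voter contribution to the PAV-score is strictly increasing in utility, and conclude $\operatorname{PAV-score}(D') > \operatorname{PAV-score}(D)$, contradicting optimality. Your closing remark about Local-Search PAV also matches the paper's observation that its outcome need not be Pareto efficient.
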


\begin{proof}
    Suppose for a contradiction that the decision sequence $D$ produced by PAV is not Pareto efficient. Then there exists another decision sequence say $D'$ such that for all $i \in N$, $U^i_{D'} \ge U^i_D$ and there is some voter $v \in N$ such that $U^v_{D'} > U^v_D$. Thus, the $\operatorname{PAV-score}$ of $D'$ is
    \begin{align*}
    \operatorname{PAV-score}(D') &= 
    \sum_{i\in N} \left ( 1 + \frac12 + \frac13 + \dots + \frac1{U^i_{D'}} \right ) 
    > \operatorname{PAV-score}(D)
    \end{align*}
    which contradicts that PAV selects the decision sequence which maximizes the $\operatorname{PAV-score}$.
\end{proof}

\begin{theorem}\label{thm:MES_Phragmen_pareto_inefficient}
    The decision sequence produced by MES and Sequential Phragm\'en may not be Pareto efficient. 
\end{theorem}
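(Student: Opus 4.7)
The plan is to exhibit a single small instance that causes both rules to return a Pareto-dominated decision sequence under adversarial tie-breaking. Let $n = 2$ and $T = 2$, with $C_1 = \{\alternativea,\alternativeb\}$ where voter 1 approves $\{\alternativea\}$ and voter 2 approves $\{\alternativeb\}$, and $C_2 = \{\alternativec,\alternatived\}$ where voter 1 approves $\{\alternativec,\alternatived\}$ and voter 2 approves $\{\alternativec\}$. The target bad outcome is $D = (\alternativeb,\alternatived)$, which gives utility vector $(1,1)$ and is Pareto-dominated by $(\alternativea,\alternativec)$ with utilities $(2,1)$.

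For Sequential Phragm\'en, first I would compute $s_\alternativea = s_\alternativeb = 1$ in round~1, since each alternative has a single approver, so the rule is free to pick $\alternativeb$, which sets $x_2 = 1$ while $x_1 = 0$. In round~2, the water-filling view immediately gives $s_\alternativec = 1$: the minimum is attained both at $S = \{1\}$, giving $(0+1)/1 = 1$, and at $S = \{1,2\}$, giving $(0+1+1)/2 = 1$, because voter 2 is already at load $1$ and contributes no spare capacity; likewise $s_\alternatived = 1$ because only voter 1 approves $\alternatived$. Hence the rule may again select $\alternatived$, yielding the claimed dominated outcome.

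For MES, the per-round price is $p = n/T = 1$ and each voter starts with budget $1$. In round~1, both $\alternativea$ and $\alternativeb$ are $\rho$-affordable only at $\rho = 1$ (the single approver must pay the full cost), so MES may pick $\alternativeb$, exhausting voter 2's budget. In round~2, only voter 1 has any remaining budget, so $\alternativec$ and $\alternatived$ are both affordable only at $\rho = 1$, and MES may again pick $\alternatived$, producing the same Pareto-dominated sequence. The one delicate point --- and the main obstacle, if one wanted a stronger result --- is that this failure essentially relies on a tie: within a single round, neither $s_c$ for Phragm\'en nor the minimal affordable $\rho$ for MES is ever strictly larger on an approver-set superset, so neither rule strictly prefers a Pareto-dominated alternative in isolation. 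A tie-free counterexample would therefore need to exploit cross-round load or budget dynamics, but that strengthening is not required for the ``may not'' claim as stated.
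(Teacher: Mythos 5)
Your proof is correct: for the ``may not'' claim it suffices to exhibit one tie-breaking under which each rule outputs a dominated sequence, and your computations for both rules on the $2$-voter, $2$-round instance check out (the instance yields $(\alternativeb,\alternatived)$ with utilities $(1,1)$, dominated by $(\alternativea,\alternativec)$ with $(2,1)$). However, your route is genuinely different from the paper's. The paper uses a $7$-voter, $7$-round instance in which voters $1$ and $2$ jointly approve $\alternativea$ only in the first two rounds while a larger bloc approves $\alternativeb$ throughout; both rules then \emph{strictly} prefer $\alternativeb$ early on ($s_{\protect\alternativeb} < s_{\protect\alternativea}$, and likewise for affordability), and every tie-breaking of the resulting run produces the same utility vector, dominated by $(\alternativea,\alternativea,\alternativeb,\alternativeb,\alternativeb,\alternativeb,\alternativeb)$. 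So the paper's failure is robust to tie-breaking and is caused by exactly the cross-round phenomenon you identify in your closing remark: the (semi-)online rules cannot foresee that the efficient alternative $\alternativea$ will be unavailable later. Your observation that a single-round, tie-free failure is impossible (because both rules weakly prefer alternatives with superset approver sets) is correct and is precisely why the paper's construction leans on temporal structure rather than ties. In short, your example buys minimality, while the paper's buys robustness to tie-breaking and a structural explanation of the inefficiency; both establish the stated theorem.
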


\begin{figure}
	\centering
	 \scalebox{0.95}
    {\setlength{\tabcolsep}{4pt}
		\begin{tabular}{lccccccc}
			\toprule
			Rounds & 1 - 2 & 3 - 7 \\
			\midrule
			Voter 1 & $\{\alternativea\}$ &  $\{\alternativec\}$ \\
			Voter 2 & $\{\alternativea\}$ &  $\{\alternatived\}$ \\
			Voters 3, 4, 5, 6, 7 & $\{\alternativeb\}$ &  $\{\alternativeb\}$ \\
\bottomrule
	\end{tabular}}
	\caption{Example instance where both MES and Sequential Phragm\'en produce Pareto inefficient outcomes.}
    \label{fig:not_pareto_efficient}
\end{figure}

\begin{proof}
\Cref{fig:not_pareto_efficient} shows an example where both MES and Sequential Phragm\'en fail Pareto efficiency, with $T = 7$ rounds and $n = 7$ voters. Here, voters $1$ and $2$ agree in the first $2$ rounds on $\alternativea$, but neither of them agrees with anyone else in other rounds. Meanwhile, voters $3, 4, 5, 6, 7$ agree in all rounds on $\alternativeb$. 
Both Sequential Phragm\'en and MES would produce $D = (\alternativeb, \alternativeb, \alternativeb, \alternativeb, \alternativeb, \alternativec, \alternatived)$ (or breaking ties differently, with $\alternatived$ in round 6 and $\alternativec$ in round 7) but the decision sequence $D' = (\alternativea, \alternativea, \alternativeb, \alternativeb, \alternativeb, \alternativeb, \alternativeb)$ Pareto dominates $D$ as $U^1_{D'} = 2 > U^1_D = 1$ while $U^i_{D'} \ge U^i_D$ for all $i \in N$. Intuitively, it is better to satisfy voters $1$ and $2$ using the efficient choice $\alternativea$ which is available in the first two rounds, rather than using the less efficient alternatives $\alternativec$ and $\alternatived$. But due to their (semi-)online nature, neither MES nor Sequential Phragm\'en can know that an alternative as good as $\alternativea$ will not be available in later rounds.
\end{proof}

In the example in the proof of \Cref{thm:MES_Phragmen_pareto_inefficient}, Offline MES can select the same Pareto inefficient decision sequence, so it also fails Pareto efficiency.

  \section{Omitted Proofs from the Main Text}
\label{sec:omittedproofs}

\subsection{Impossibility of Stronger Guarantees: More Details}
\label{app:infeasible-axioms}

Here we give proofs and additional results related to the discussion in \Cref{sec:hardness}.
The results are based on a counterexample construction formally described in the following result. It has the following structure: fix some coalition size $s \le n$ of interest. In the first $k$ rounds, for each of the $\binom{n}{s}$ subsets $S\subseteq N$ of size $s$, there is an alternative $a_S$ with all voters in $S$ approving $a_S$. Thus, all these coalitions agree in $k$ rounds, and thus the impossible axioms we will consider guarantee all of these coalitions some representation. However, $k$ rounds is not enough to ``hit'' enough of these coalitions, and in the remaining $T - k$ rounds, voters are in complete disagreement (i.e., the approval sets of any two voters are disjoint) and so there are not enough rounds to make up the deficit accrued in the first $k$ rounds.

\begin{theorem}
	\label{thm:counterexample-construction}
	Let $\epsilon > 0$. Choose some $k > \lceil \frac{1-\epsilon}{\epsilon} \rceil$ and some $T > k$. Then for sufficiently large $n$, there exists an instance with $n$ voters  and time horizon $T$ such that for every decision sequence $D$, there is a coalition $S \subseteq N$ that agrees in $k$ rounds and has size $|S| \ge (1 - \epsilon) \cdot \frac{n}{k}$, yet none of the voters in $S$ approve any of the decisions of $D$.
\end{theorem}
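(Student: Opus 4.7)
The plan is to build the instance hinted at in the paper's description and then to verify, by a simple counting argument, that no matter how the decision-maker responds, some large agreeing coalition must remain completely unsatisfied. Fix $s = \lceil (1-\epsilon)\, n/k \rceil$, with $n$ to be chosen large enough later. In each of the first $k$ rounds, introduce one alternative $a_{S'}$ for every $s$-element subset $S' \subseteq N$, and let voter $i$ approve $a_{S'}$ exactly when $i \in S'$. In each of the remaining $T-k$ rounds, introduce $n$ fresh alternatives $c^i_j$ (one per voter), with voter $i$ approving only $c^i_j$; so approval sets in these rounds are pairwise disjoint and no two voters agree.

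By construction, every $s$-element set $S \subseteq N$ agrees in all $k$ of the first rounds (witnessed by $a_S$), so it suffices to exhibit, for any given decision sequence $D$, a single $s$-element $S$ whose voters approve none of the decisions. Given $D$, let $S_1, \dots, S_k$ be the $s$-element indices of the alternatives chosen in rounds $1, \dots, k$, and let $v_{k+1}, \dots, v_T$ be the unique approvers of the decisions selected in the remaining rounds. A voter $i$ approves at least one decision of $D$ if and only if $i$ lies in the "blocked" set
\[
B \;:=\; S_1 \cup \dots \cup S_k \cup \{v_{k+1}, \dots, v_T\}.
\]
Thus any $s$-element subset $S \subseteq N \setminus B$ witnesses the claim, provided one exists.

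The last step is to ensure $B$ does not exhaust $N$. Since $|B| \le ks + (T-k)$, it is enough to guarantee $(k+1)s + (T-k) \le n$. Using $s \le (1-\epsilon) n/k + 1$, this reduces to
\[
\frac{(k+1)(1-\epsilon)}{k}\, n \;+\; (k+1) \;+\; T - k \;\le\; n,
\]
equivalently $n \cdot \tfrac{(k+1)\epsilon - 1}{k} \ge T+1$. The hypothesis $k > \lceil (1-\epsilon)/\epsilon \rceil$ guarantees $(k+1)\epsilon > 1$, so the coefficient of $n$ on the left is strictly positive, and the inequality holds for all sufficiently large $n$.

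The principal obstacle is this final calibration: the bound on $|B|$ is essentially tight, and one needs the coefficient $(k+1)(1-\epsilon)/k$ of the dominant term in $(k+1)s$ to be strictly less than $1$ so that there is room left inside $N$ for an unblocked coalition of size $s$. The assumption $k > \lceil(1-\epsilon)/\epsilon\rceil$ is exactly what makes this coefficient drop below $1$; without it the construction would fail for every $n$. Once this inequality is in hand, the rest of the argument is purely combinatorial bookkeeping.
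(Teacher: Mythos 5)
Your construction (one alternative per $s$-element coalition in the first $k$ rounds, pairwise-disjoint singleton approvals in the last $T-k$ rounds, with $s=\lceil(1-\epsilon)n/k\rceil$) and your counting argument via the blocked set $B$ are exactly the paper's proof; your final condition $n\cdot\frac{(k+1)\epsilon-1}{k}\ge T+1$ is the same inequality the paper imposes by choosing $n\ge\frac{k(T+1)}{\epsilon k+\epsilon-1}$. The proposal is correct and takes essentially the same approach.
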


\begin{proof}
	Pick $n \in \mathbb N$ such that $n \geq \frac{k(T+1)}{\epsilon k + \epsilon - 1}$. Note for later that (because $\epsilon k + \epsilon - 1 > 0$, since $k > \lceil \frac{1-\epsilon}{\epsilon} \rceil$), we have
	\[
	n(\epsilon k + \epsilon - 1) \geq k(T+1).
	\]
	Adding $nk$ to both sides, and rearranging,
	\[
	nk \geq n \cdot (1-\epsilon) \cdot k + n(1 - \epsilon) + k(T+1).
	\]
	Finally, dividing by $k$, we get
	\begin{equation}
		\label{eq:rearrange-n}
		n \geq n(1 - \epsilon) + \tfrac{n(1 - \epsilon)}{k} + T + 1.
	\end{equation}
	
	We now construct an instance with $n$ voters and time horizon $T$. Write $s = \lceil (1 - \epsilon) \cdot \frac{n}{k} \rceil$. In the first $k$ rounds, there are $\binom{n}{s}$ many alternatives, one for each coalition $S \subseteq N$ with $|S| = s$, that is approved by exactly the voters in $S$. In the remaining $T - k$ rounds, there are $n$ alternatives, each approved by a distinct voter (thus each voter has a singleton approval set in those rounds, and no two voters approve the same alternative).
	
	Let $D$ be any decision sequence. We first compute how many voters approve at least 1 decision in $D$. In the first $k$ rounds, the decisions in $D$ can satisfy at most $k \cdot s$ different voters. In the remaining $T - k$ rounds, at most 1 voter can be satisfied per round, so at most $T - k$ in total. Hence the number of satisfied voters is at most
	\begin{align*}
		k \cdot s + (T - k) &= k \cdot \lceil (1 - \epsilon) \cdot \tfrac{n}{k} \rceil + (T - k)\\
		&\le n \cdot (1 - \epsilon) + k + (T - k) \\
		&= n \cdot (1 - \epsilon) + T.
	\end{align*}
	Thus, the number of voters who approve none of the decisions in $D$ is at least
	\[
	n - (n \cdot (1-\epsilon) + T) \overset{\eqref{eq:rearrange-n}}{\ge} \tfrac{n(1 - \epsilon)}{k} + 1 \ge s.
	\]
	Take any set $S \subseteq N$ of exactly $s$ voters who approve none of the decisions in $D$. By construction of the instance, they agree in the first $k$ rounds, and $|S| = s = \lceil (1 - \epsilon) \cdot \frac{n}{k} \rceil$. Thus, $S$ satisfies the conditions promised in the theorem statement.
\end{proof}

From \Cref{thm:counterexample-construction}, it follows immediately that for all $\epsilon > 0$, a decision sequence satisfying $\epsilon$-Stronger PJR may not exist (as defined in \Cref{sec:hardness}), so \Cref{thm:counterexample} from the main body is proven.

We can use the same counterexample construction to show that some other related axioms are impossible to satisfy in general. In particular, a natural idea of defining PJR for our setting is by requiring that a group large enough to deserve $\ell$ decisions only needs to agree in $\ell$ rounds (instead of in $T$ rounds). This corresponds perhaps most closely to the original definition of PJR for multi-winner voting (which, in the context of electing $k$ candidates, requires agreement on $\ell$ candidates not $k$ candidates). It also corresponds to the definition of PJR proposed by \citet{freeman2020proportionality} who study multi-winner voting with a variable number of winners, which is a special case of our model where in each round, only 2 alternatives are available (``add $c$ to the committee'', ``do not add $c$ to the committee'').  However, this version of the PJR axiom cannot always be satisfied.

\begin{corollary}[$\ell$-Agreement PJR  need not exist]
	\label{cor:multi-winner-pjr}
	There exists an instance where no decision sequence $D$ satisfies ``\emph{$\ell$-Agreement PJR}'', defined to require that for every $\ell \in \mathbb N$ and every group of voters $S \subseteq N$ that agrees in at least $\ell$ rounds and that has size $|S| \geq \ell\cdot \frac{n}{T}$, there are at least $\ell$ rounds $j$ with $d_j \in \bigcup_{i \in S} \smash{A_j^i}$.
\end{corollary}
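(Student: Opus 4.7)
The plan is to reduce the claim to the general counterexample construction of \Cref{thm:counterexample-construction} by a careful choice of parameters. That theorem already supplies, for any $\epsilon > 0$ and suitable $k, T, n$, an instance in which every decision sequence admits a coalition of size at least $(1-\epsilon)\tfrac{n}{k}$ that agrees in $k$ rounds yet approves none of the decisions. My task reduces to picking $\ell, k, T, \epsilon$ so that the very same coalition simultaneously meets the size threshold $\ell \cdot \tfrac{n}{T}$ demanded by $\ell$-Agreement PJR.

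Concretely, I would take $\ell = 2$, set $k = \ell = 2$, and choose $\epsilon = \tfrac{1}{2}$. The hypotheses of \Cref{thm:counterexample-construction} then ask for $k > \lceil(1-\epsilon)/\epsilon\rceil = 1$ (satisfied) and $T > k$, so take $T = 8$. For $n$ large enough (as specified in that theorem), the construction yields an instance in which, for every decision sequence $D$, some coalition $S$ has $|S| \ge (1-\epsilon)\tfrac{n}{k} = \tfrac{n}{4}$, agrees in the first $k = 2$ rounds, and contains no voter who approves any decision of $D$.

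To close the argument I verify that $S$ witnesses a violation of $\ell$-Agreement PJR with $\ell = 2$. The agreement condition holds since $S$ agrees in $k = 2 = \ell$ rounds. The size condition holds since $|S| \ge \tfrac{n}{4} = \tfrac{2n}{8} = \ell \cdot \tfrac{n}{T}$. Yet the number of rounds whose decision is approved by some voter in $S$ is $0 < 2 = \ell$, contradicting the required conclusion. Hence no decision sequence on this instance can satisfy $\ell$-Agreement PJR.

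There is no real conceptual obstacle here, since the content lies in \Cref{thm:counterexample-construction}; the only thing to track is that the size threshold $\ell \cdot n / T$ demanded by $\ell$-Agreement PJR is no stricter than the size $(1-\epsilon)\tfrac{n}{k}$ guaranteed by the construction, which amounts to the inequality $T \ge \ell k /(1-\epsilon)$. The choice above satisfies this with margin. Since the bad coalition receives \emph{zero} approved decisions rather than merely fewer than $\ell$, the same recipe works for any $\ell \ge 1$ by scaling $T$ appropriately, but $\ell = 2$ already suffices to establish the existence claim stated in the corollary.
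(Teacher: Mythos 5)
Your proposal is correct and takes essentially the same route as the paper: both invoke \Cref{thm:counterexample-construction} with $\epsilon = 0.5$ and then verify that the resulting unsatisfied coalition meets the size threshold $\ell \cdot \frac{n}{T}$ and the agreement threshold of $\ell$ rounds. The only difference is the concrete parameter choice (you use $k=2$, $T=8$, $\ell=2$; the paper uses $k=4$, $T=40$), which is immaterial.
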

\begin{proof}
	Invoke \Cref{thm:counterexample-construction} with $\epsilon=0.5$, $k = 4$, and $T=40$. The theorem gives an instance where for every decision sequence $D$, there is a group of voters $S \subseteq N$ which agrees in $3$ rounds, has size $|S| \ge (1-\epsilon)\cdot \frac{n}{k} = 0.125\cdot n \ge 0.1\cdot n = 4 \cdot \frac{n}{T}$, but none of the voters in $S$ approve any of the decisions in $D$, violating $\ell$-Agreement PJR.
\end{proof}

When there are only 2 candidates in every round, Sequential Phragm\'en, MES, and PAV all satisfy this property \citep{freeman2020proportionality}.

Following \citet{DoHL022}, we now consider two relaxations of the $\ell$-Agreement PJR axiom which introduce multiplicative approximations by a factor $\alpha$. We show that for each constant $0 < \alpha < 1$, there are instances where even this relaxed version is not satisfiable. 

The first relaxation requires a group that the group approves only $\lfloor \alpha \cdot \ell \rfloor$ decisions instead of $\ell$ decisions.

\begin{corollary}[$\alpha$-$\ell$-Agreement PJR need not exist]
	Let $0 < \alpha < 1$. There exists an instance where no decision sequence $D$ satisfies  ``\emph{$\alpha$-$\ell$-Agreement PJR}'',
	defined to require that for every $\ell \in \mathbb N$ and every group of voters $S \subseteq N$ that agrees in at least $\ell$ rounds and that has size $|S| \geq \ell \cdot \frac{n}{T}$, there are at least $\lfloor \alpha \cdot \ell \rfloor$ rounds $j \in R$ with $d_j \in \bigcup_{i \in S} \smash{A_j^i}$.
\end{corollary}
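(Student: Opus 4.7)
The plan is to reduce to the counterexample construction of \Cref{thm:counterexample-construction}. That theorem already produces, in one shot, a coalition that agrees in $k$ rounds yet receives zero representation, so as long as we can match up $k$, the coalition size lower bound $(1-\epsilon)\,n/k$, and the time horizon $T$ with the demands of $\alpha$-$\ell$-Agreement PJR for some specific value of $\ell$, the result follows.

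First I would fix $\ell^\star := \lceil 1/\alpha \rceil$, which is the smallest integer for which the conclusion of the axiom is nontrivial, since it ensures $\lfloor \alpha \cdot \ell^\star \rfloor \geq 1$. Because $0 < \alpha < 1$, we have $\ell^\star \geq 2$. Then I would invoke \Cref{thm:counterexample-construction} with $\epsilon = 1/2$, $k = \ell^\star$, and $T$ any integer with $T \geq 2\ell^\star k$. The preconditions of the construction hold trivially: $k \geq 2 > 1 = \lceil (1-\epsilon)/\epsilon \rceil$ and $T > k$.

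The construction then yields, for sufficiently large $n$, an instance in which every decision sequence $D$ admits a coalition $S \subseteq N$ that agrees in $k$ rounds, has size $|S| \geq (1-\epsilon)\,n/k = n/(2k)$, and whose members approve none of the decisions of $D$. To verify that $S$ witnesses a violation of $\alpha$-$\ell$-Agreement PJR at $\ell = \ell^\star$, note that the size requirement reads $\ell^\star \cdot n/T$, and our choice $T \geq 2\ell^\star k$ makes this at most $n/(2k) \leq |S|$; moreover, $S$ agrees in $k \geq \ell^\star$ rounds, so the agreement hypothesis also holds. The conclusion, however, demands at least $\lfloor \alpha \ell^\star \rfloor \geq 1$ rounds in which some voter of $S$ approves the decision, while $S$ approves none — a contradiction.

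I expect no real obstacle here: once $\ell^\star$ is pinned down from $\alpha$, the rest is parameter arithmetic and an appeal to the counterexample construction. The only mildly delicate point is ensuring that the relaxed axiom still imposes a nontrivial constraint at $\ell = \ell^\star$, which is exactly why $\ell^\star$ is defined as $\lceil 1/\alpha\rceil$ rather than any smaller value.
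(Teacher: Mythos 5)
Your proposal is correct and follows essentially the same route as the paper: both invoke \Cref{thm:counterexample-construction} with $\epsilon = 1/2$ and $k = \lceil 1/\alpha \rceil$, choose $T$ large enough (the paper uses $T = \lceil 8/\alpha^2 \rceil$, you use $T \ge 2\lceil 1/\alpha\rceil^2$, which is the same up to constants) so that the produced coalition meets the size threshold $\ell \cdot n/T$ at $\ell = \lceil 1/\alpha\rceil$, and conclude via $\lfloor \alpha \lceil 1/\alpha\rceil \rfloor \ge 1$. Your explicit verification that $k \ge 2$ satisfies the construction's precondition is a minor point the paper leaves implicit.
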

\begin{proof}
	Invoke \Cref{thm:counterexample-construction} with $\epsilon=0.5$, $k = \lceil \frac{1}{\alpha} \rceil$, and $T = \lceil \frac{8}{\alpha^2} \rceil$ to obtain an instance.
	
	Assume that $D$ is a decision sequence satisfying $\alpha$-$\ell$-Agreement PJR. The theorem says that there exists a group $S$ which agrees in $k$ rounds and has size $|S| \ge 0.5 \cdot \frac{n}{k}$ and such that none of the decisions in $D$ are approved by any member of $S$.
	
	Write $\ell = k$. Note that $S$ agrees in $\ell$ rounds and (using the fact that $\lceil x \rceil \le 2x$ for all $x > 1$) that
	\[
		|S| 
		\ge 0.5 \cdot \frac{n}{k} 
		= 0.5 \cdot \frac{n}{\lceil 1/\alpha \rceil} 
		\ge 0.5 \cdot \frac{n}{2/\alpha} 
		= \frac{1}{4} \cdot \frac{\alpha^2}{\alpha} \cdot n 
		= \frac{2}{\alpha} \cdot \frac{n}{8/\alpha^2}
		\ge \ell \cdot \frac{n}{T}.
	\]
	Hence, $\alpha$-$\ell$-Agreement PJR requires that in at least $\lfloor \alpha \cdot \ell \rfloor = \lfloor \alpha \cdot \lceil \frac{1}{\alpha} \rceil \rfloor \ge \lfloor \alpha \cdot \frac{1}{\alpha} \rfloor = 1$ rounds, at least one member of $S$ approves the decision of $D$, contradiction.
\end{proof}

The second relaxation requires that the coalition agrees in $\ell/\alpha$ rules to be guaranteed to approve $\ell$ decisions.

\begin{corollary}[$\ell/\alpha$-Agreement-PJR need not exist]
	Let $0 < \alpha < 1$. There exists an instance where no decision sequence $D$ satisfies  ``\emph{$\ell/\alpha$-Agreement-PJR}'',
	defined to require that for every $\ell \in \mathbb N$ and every group of voters $S \subseteq N$ that agrees in at least $\ell/\alpha$ rounds and that has size $|S| \geq \ell \cdot \frac{n}{T}$, there are at least $\ell$ rounds $j$ with $d_j \in \bigcup_{i \in S} \smash{A_j^i}$.
\end{corollary}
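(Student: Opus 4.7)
The plan is to reuse the counterexample construction of \Cref{thm:counterexample-construction}, tuned so that even taking $\ell = 1$ in the axiom produces a violation. Concretely, I would pick $\epsilon = 1/2$, $k = \lceil 1/\alpha \rceil$, and $T = 2k$. Since $\alpha \in (0,1)$ gives $1/\alpha > 1$ and hence $k \ge 2 > \lceil (1-\epsilon)/\epsilon \rceil = 1$, and since $T > k$, the hypotheses of \Cref{thm:counterexample-construction} are satisfied. For sufficiently large $n$, the theorem then supplies an instance with time horizon $T$ such that for every decision sequence $D$, there is a coalition $S \subseteq N$ agreeing in $k$ rounds, of size $|S| \ge (1-\epsilon) \cdot \frac{n}{k} = \frac{n}{2k}$, and with no member of $S$ approving any decision of $D$.

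Next, I would apply the definition of $\ell/\alpha$-Agreement-PJR with $\ell = 1$ to this coalition $S$. The agreement requirement holds because $k \ge 1/\alpha = \ell/\alpha$, by the choice of $k$. The size requirement holds because $|S| \ge \frac{n}{2k} = \ell \cdot \frac{n}{T}$, by the choice of $T$. The axiom therefore demands that at least $\ell = 1$ decision in $D$ be approved by some voter in $S$, contradicting the defining property of the instance supplied by \Cref{thm:counterexample-construction}.

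Since the entire construction of the instance is already carried out inside \Cref{thm:counterexample-construction}, there is essentially no remaining calculation. The only subtle step is verifying that $k = \lceil 1/\alpha \rceil \ge 2$ for all $\alpha \in (0,1)$ and that the resulting choice of $T$ is a strict integer multiple of $k$ exceeding $k$; both are immediate, and so I do not expect any serious obstacle. The proof will mirror in structure the two preceding corollaries in \Cref{app:infeasible-axioms}, differing only in how the parameters $k$ and $T$ are set to align the quantities $\ell/\alpha$ and $\ell \cdot n/T$ with what the counterexample construction guarantees.
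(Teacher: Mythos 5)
Your proposal is correct and follows essentially the same route as the paper: invoke \Cref{thm:counterexample-construction} with $\epsilon = 1/2$ and $k = \lceil 1/\alpha\rceil$, then apply the axiom with $\ell = 1$ to the uncovered coalition. The only difference is your choice $T = 2k$ instead of the paper's $T = \lceil 4/\alpha\rceil$, which makes the size inequality $|S| \ge n/(2k) = \ell \cdot n/T$ immediate and avoids the paper's use of the bound $\lceil x\rceil \le 2x$ — a harmless (indeed slightly cleaner) variation.
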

\begin{proof}
	Invoke \Cref{thm:counterexample-construction} with $\epsilon=0.5$, $k = \lceil\frac{1}{\alpha}\rceil$, and $T=\lceil\frac{4}{\alpha}\rceil$ to obtain an instance. 
	
	Assume that $D$ is a decision sequence satisfying $\ell/\alpha$-Agreement-PJR. The theorem says that there exists a group $S$ which agrees in $k$ rounds and has size $|S| \ge 0.5 \cdot \frac{n}{k}$ and such that none of the decisions in $D$ are approved by any member of $S$.
	
	Write $\ell = 1$. Note that $S$ agrees in $k = \lceil\frac{1}{\alpha}\rceil \ge \frac{\ell}{\alpha}$ rounds and that
	\[
	|S| 
	\ge 0.5 \cdot \frac{n}{k} 
	\ge 0.5 \cdot \frac{n}{2/\alpha} 
	= 0.25 \cdot \alpha \cdot n 
	= \frac{n}{4/\alpha} 
	\ge \ell \cdot \frac{n}{T}.
	\]
	Hence $\ell/\alpha$-Agreement PJR requires that in at least $\ell = 1$ rounds, at least one member of $S$ approves the decision of $D$, contradiction.
\end{proof}

We have seen several axioms that cannot always be satisfied. However, we could hope to find rules that satisfy these axioms whenever they are run on an instance where there exists a decision sequence satisfying this axiom. While there are certainly artificial \emph{offline} rules doing this (e.g., the rule that outputs an arbitrary sequence satisfying the axiom should one exists, and otherwise outputs the PAV sequence), it is not clear if there are natural such rules. 

Online or semi-online rules, however, provably cannot satisfy such a property, as one can see by adapting the counterexample construction of \Cref{thm:counterexample-construction}:
The first $T - 1$ rounds remain exactly the same as in that construction. We observe the decisions made by the online rule in these rules. In the last round, take $s+1$ voters who do not approve any of the decisions made thus far, and have each of them approve a distinct alternative (like in the original construction). For the remaining $n - (s+1)$ voters, introduce an alternative $c$ that they all approve. No matter which alternative the rule chooses in round $T$, there remain $s$ voters who do not approve any decision, leading to a violation. However, a violation could have been avoided by choosing in round 1 an alternative commonly approved by $s$ of the $s+1$ voters, and choosing $c$ in round $T$. Thus, we have arrived at a situation where a violation-free decision sequence exists but an online (or semi-online) rule cannot find it as we alter the input based on its past decisions. This proves \Cref{thm:counterexample-online} from the main body. This kind of construction works for all the axiom variants we have considered in this section.

\subsection{Can Online Rules Satisfy Weak EJR?}
\label{app:online-ejr}

We have seen that the online rule Sequential Phragm\'en fails Weak EJR, while the semi-online MES satisfies Weak EJR. Is it possible for an online rule to satisfy EJR or Weak EJR?
For EJR, this question has been answered in the negative for EJR by \citet{elkind2024verifying}, even for semi-online rules.
For Weak EJR, we do not know the answer to this question.\footnote{In the special case where the set of available alternatives and the approval sets are the same across rounds, a known online method satisfies EJR \citep[Theorem 4.3]{approvalbasedapportionment}. Sequential Phragm\'en fails Weak EJR even in this special case (\Cref{thm:phragmen_fails_EJR}).}
However, we show that any online rule that satisfies Weak EJR for sequential decision making can be translated into a multi-winner voting rule that satisfies EJR and committee monotonicity.
As \citet[Sec.~7.1]{lacknerskowron} write, a ``main open question is whether there exist [approval-based multi-winner] rules that satisfy EJR and committee monotonicity''. Thus, we leave finding an online rule that satisfies (Weak) EJR as a challenging problem for future work.

In the remainder of this section, we describe the way in which an online decision rule can be used as a multi-winner voting rule. To do so, we must first give basic definitions about the latter topic. 

\begin{definition}[Multi-winner Voting Rule] 
Given a set of voters $N$, a set of alternatives $C$, a profile of approval sets $A = (A^i)_{i\in N}$ (with $A^i \subseteq C$ for all $i \in N$), and a desired committee size $k$, a \emph{multi-winner voting rule} $f$ outputs a committee $f(N, C, A, k) = W \subseteq C$ with $|W| = k$.
\end{definition}

Intuitively, a multi-winner voting rule is said to be committee monotonic if it selects the winning committee for size $k$ by first computing the winning committee for size $k-1$, and then adding a $k$th alternative. Thus, when an alternative is a member of the winning committee, it can never become a losing alternative when the committee size $k$ is increased.

\begin{definition}[Committee Monotonicity] 
A multi-winner voting rule $f$ satisfies \emph{committee monotonicity} if for all $N$, $A$, $C$, and $k \ge 1$, we have $f(N, C, A, k-1) \subset f(N, C, A, k)$.
\end{definition}

We refer to the EJR axiom originally proposed for the multi-winner voting context by \citet{ejr} as Multi-Winner EJR to distinguish it from our axioms for the sequential context.

\begin{definition}[Multi-Winner EJR] 
Let $k$ be the desired committee size. A committee $W$ satisfies \emph{Multi-Winner EJR} if for every $\ell \in \mathbb{N}$, for every group of voters $S \subseteq N$ of size $|S| \geq \ell \cdot \frac{n}{k}$ with $\smash{|\bigcap_{i\in S} A^i|} \ge \ell$, there is at least one voter $i \in S$ with $\smash{|W \cap A^i|} \ge \ell$.
\end{definition}

We can now formally state our reduction.

\begin{theorem}
    An online decision rule that satisfies Weak EJR in sequential decision making induces a multi-winner voting rule that simultaneously satisfies Multi-Winner EJR and committee monotonicity. 
\end{theorem}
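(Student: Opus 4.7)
The idea is to turn the online rule $f$ into a multi-winner rule $g$ by simulating $f$ on a sequential instance of horizon $T = k$ in which alternatives that have already been chosen are withheld from future rounds. Formally, given a multi-winner input $(N, C, A, k)$ with $|C| \ge k$, I will build the sequential instance round by round: in round $j$ set $C_j := C \setminus \{d_1, \ldots, d_{j-1}\}$ and $A^i_j := A^i \setminus \{d_1, \ldots, d_{j-1}\}$ for each voter $i$, and take $d_j$ to be the decision returned by $f$ (this is well-defined because $f$ is online and hence uses only the history up to round $j$; also $C_j \ne \emptyset$ since at most $k-1$ alternatives have been removed). I then set $g(N, C, A, k) := W = \{d_1, \ldots, d_k\}$, which by construction is a set of $k$ distinct alternatives in $C$.

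Committee monotonicity is then immediate. Because $f$'s decision in round $j$ does not depend on the horizon $T$, running the construction for committee size $k$ produces exactly the same first $k-1$ decisions as running it for committee size $k-1$. Hence $g(N, C, A, k-1) \subset g(N, C, A, k)$.

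For Multi-Winner EJR, I fix an $\ell$-cohesive group $S \subseteq N$ with $|S| \ge \ell n / k$ and $|\bigcap_{i \in S} A^i| \ge \ell$, and split into two cases according to whether $S$ agrees in every round of the constructed sequential instance. If it does, then Weak EJR (applied with $T = k$, so that $|S| \ge \ell n / T$) yields a voter $i \in S$ who approves at least $\ell$ of the decisions $d_1, \ldots, d_k$; since these decisions are distinct, this gives $|W \cap A^i| \ge \ell$. If it does not, there is some round $j^*$ for which $(\bigcap_{i \in S} A^i) \setminus \{d_1, \ldots, d_{j^*-1}\} = \emptyset$, so every one of the at-least-$\ell$ common alternatives has already been placed in $W$; then \emph{every} $i \in S$ satisfies $|W \cap A^i| \ge \ell$.

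The main subtlety, which I expect to be the only real obstacle, lies in handling the second case cleanly: Weak EJR is stated only for groups that agree in \emph{all} rounds, so one cannot apply it directly when a commonly-approved alternative has been exhausted. The resolution --- that exhausting a common alternative is strictly good for the group, since such an alternative lands in $W$ and is then approved by everyone in $S$ --- is conceptually simple, but needs to be stated carefully to keep the reduction airtight.
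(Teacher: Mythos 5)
Your proposal is correct and follows essentially the same reduction as the paper: simulate the online rule on an adaptively built instance where chosen alternatives are removed from subsequent rounds, take the first $k$ decisions as the committee, get committee monotonicity from horizon-independence, and derive Multi-Winner EJR from Weak EJR applied to the horizon-$k$ instance. The only cosmetic difference is that you organize the EJR argument as a case split (either some common alternative survives all $k$ rounds, so $S$ agrees everywhere, or all $\ell$ common alternatives have already been forced into $W$), whereas the paper runs the contrapositive — assuming an EJR violation, it exhibits a common alternative outside $W$ and then invokes Weak EJR for a contradiction; these are the same argument.
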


\begin{proof}
Let $f$ be an online decision rule that satisfies Weak EJR. We will derive a multi-winner voting rule by describing a procedure that, given a profile of approval sets, selects alternatives in some order one-by-one until all alternatives have been selected. Let us label the alternatives in order of selection as $d_1, \dots, d_m$. To obtain a committee of size $k$, we will take $W = \{d_1, \dots, d_k\}$. By construction, such a method must be committee monotonic.

So let $N$ be a set of voters, $C$ be a set of alternatives, and $A = (A^i)_{i \in N}$ be a profile of approval sets over $C$.
We use our online rule $f$ in the following adaptively constructed sequence of $T = |C|$ rounds.
In round $1$, the set of available alternatives is $C_1 = C$, and the approval sets are $A_1^i = A^i$ for every $i \in N$. We apply $f$ to this round and thereby obtain $d_1$, the first selected alternative.

Thereafter, in round $j = 2, \dots, m$, we have already obtained a sequence of decisions $d_1, \dots, d_{j-1}$ which we will assume inductively are pairwise distinct. In round $j \in R$, we take the set of available alternatives to be $C_j = C \setminus \{d_1, \dots, d_{j-1}\}$. Further, we take the approval sets to be $A_j^i = A^i \cap C_j$ for all $i \in N$. Then we apply our rule $f$ to obtain the next decision $d_j$, which by choice of $C_j$ is distinct from all prior decisions.

To finish, we need to prove that the multi-winner voting rule we have defined satisfies Multi-Winner EJR. Let $k$ be a desired committee size, and let $W = \{d_1, \dots, d_k\}$. Let $\ell \in \mathbb N$ and suppose that $S\subseteq N$ is a group of voters with $|S| \geq \ell \cdot \frac{n}{k}$ and $\smash{|\bigcap_{i\in S} A^i|} \ge \ell$. Assume for a contradiction that every member of $S$ approves at most $\ell - 1$ alternatives in $W$. Then there must exist an alternative $a$ that all members of $S$ approve, but that is not a winner: $a \in \bigcap_{i\in S} A^i \setminus W$. Now consider the decision instance given by the first $k$ rounds constructed above (thus, having time horizon $T = k$). For that decision instance, note that $S$ forms a group of voters that agrees in every round (because $a \in C_j$ for all $j = 1, \dots, k$). Also $|S| \ge \ell \cdot \frac{n}{T}$. Hence, because $f$ satisfies Weak EJR, there is a voter $i \in S$ who approves at least $\ell$ of the decisions made by $f$ on this decision instance, a contradiction.
\end{proof}

\section{Full Results from Experiments}
\label{sec:fullresults}

\subsection{Learning Preferences from the Moral Machine}
\label{app:moral-machine-results}

\begin{figure*}[t]
	\centering
	\begin{subfigure}{\textwidth}
		\centering
		\includegraphics[width=0.82\textwidth]{Figures/moral_machine/agreement_balanced_model_100_rounds=100_altPerRound=50_approving_avg_Trials=1000.pdf}\caption{$100$ rounds with each round having $50$ alternatives and each voter approving all alternatives having more than average utility.}
		\label{fig:50_approving_above_abg_random}
	\end{subfigure}

	\vspace{\baselineskip}
	
	\begin{subfigure}{\textwidth}
		\centering
		\includegraphics[width=0.82\textwidth]{Figures/moral_machine/agreement_balanced_model_100_rounds=100_altPerRound=50_approving_top_2_Trials=1000.pdf}
		\caption{$100$ rounds with each round having $50$ alternatives and each voter approving top 2 alternatives. }
		\label{fig:50_approving_top2_random}
	\end{subfigure}
	
	\vspace{\baselineskip}
	
	\begin{subfigure}{\textwidth}
		\centering
		\includegraphics[width=0.82\textwidth]{Figures/moral_machine/agreement_balanced_model_100_rounds=100_altPerRound=100_approving_top_1_Trials=1000.pdf}\caption{$100$ rounds with each round having $100$ alternatives and each voter approving top $1$ alternatives.}
		\label{fig:50_appoving_top10_random}
	\end{subfigure}
	
	\vspace{\baselineskip}
	
	\begin{subfigure}{\textwidth}
		\centering
		\includegraphics[width=0.82\textwidth]{Figures/moral_machine/agreement_balanced_model_100_rounds=100_altPerRound=200_approving_top_2_Trials=1000.pdf}\caption{$100$ rounds with each round having $200$ alternatives and each voter approving top $2$ alternatives.}
		\label{fig:200_appoving_top2_random}
	\end{subfigure}
	
	\caption{Performance of the different rules on the Moral Machine Dataset based on \emph{random} alternatives.}
	\label{fig:moral_machine_agreement}
\end{figure*}

\begin{figure*}[t]
    \centering
    
    \begin{subfigure}{\textwidth}
        \centering
        \includegraphics[width=0.82\textwidth]{Figures/moral_machine/disagreement_balanced_model_100_rounds=100_altPerRound=50_approving_avg_Trials=1000.pdf}\caption{$100$ rounds with each round having $50$ alternatives and each voter approving all alternatives having more than average utility.}
        \label{fig:50_approving_above_abg}
    \end{subfigure}

	\vspace{\baselineskip}
    
    \begin{subfigure}{\textwidth}
        \centering
        \includegraphics[width=0.82\textwidth]{Figures/moral_machine/disagreement_balanced_model_100_rounds=100_altPerRound=50_approving_top_2_Trials=1000.pdf}
        \caption{$100$ rounds with each round having $50$ alternatives and each voter approving top 2 alternatives. }
        \label{fig:50_approving_top2}
    \end{subfigure}
    
    \vspace{\baselineskip}
    
    \begin{subfigure}{\textwidth}
        \centering
        \includegraphics[width=0.82\textwidth]{Figures/moral_machine/disagreement_balanced_model_100_rounds=100_altPerRound=100_approving_top_1_Trials=1000.pdf}\caption{$100$ rounds with each round having $100$ alternatives and each voter approving top $1$ alternatives.}
        \label{fig:50_appoving_top10}
    \end{subfigure}
    
    \vspace{\baselineskip}
    
    \begin{subfigure}{\textwidth}
        \centering
        \includegraphics[width=0.82\textwidth]{Figures/moral_machine/disagreement_balanced_model_100_rounds=100_altPerRound=200_approving_top_2_Trials=1000.pdf}\caption{$100$ rounds with each round having $200$ alternatives and each voter approving top $2$ alternatives.}
        \label{fig:200_appoving_top2}
    \end{subfigure}

 \caption{Performance of the different rules on the Moral Machine Dataset, based on \emph{high-disagreement} alternatives.}
    \label{fig:moral_machine_disagreement}
\end{figure*}
  
In this appendix, we present results for some alternative parameter settings. We will vary the number of alternatives and the sizes of voters' approval sets.
Indeed, we consider four different parameter settings (a)-(d). We produce 100 decision rounds. In each round, we produce either 50 alternatives ((a)-(b)), 100 alternatives ((c)) or 200 alternatives ((d)), in one of the following ways:
\begin{itemize}
	\item \emph{Random.} We create a large pool of 240\,000 unseen alternatives and sample a random subset of these as the alternatives for each round. 
	\item \emph{High-disagreement alternatives.} We find that on random alternatives, the country models have high agreement (i.e., they tend to approve the same alternatives from the sampled set of alternatives), making the aggregation problem relatively straightforward. Thus, a more interesting case involves finding alternatives with high disagreement. We quantify disagreement on an alternative as high variance in the ranks assigned by each model to the alternative. From the pool of all 240\,000 alternatives present in the dataset, we pick the 500 alternatives with the highest variance in ranks. Then in each round we sample alternatives randomly from this high-disagreement set of 500 alternatives.
\end{itemize}
Finally, we generate approval sets by computing the utilities for each alternative as predicted by the country model, and let the country approve (a) the alternatives with above-average utility, (b)/(d) the top 2 alternatives, (c) the top 1 alternatives

As before, we use our voting rules to compute decision outcomes using the country models as voters. 
To compare the voting rules to the performance of the combined model, we pick the alternative assigned the highest utility by the combined model as its decision for the round.
We present the results for \emph{random alternatives} in \Cref{fig:moral_machine_agreement}
and for \emph{high-disagreement alternatives} in \Cref{fig:moral_machine_disagreement}.

\clearpage 
\bibliographystyle{ACM-Reference-Format}
%\bibliography{perpetual-proportional-jair}
%%% -*-BibTeX-*-
%%% Do NOT edit. File created by BibTeX with style
%%% ACM-Reference-Format-Journals [18-Jan-2012].

\end{document}